\numberwithin{equation}{section}
\newtheorem{definition}{Definition}[section]
\newtheorem{theorem}{Theorem}[section]
\newtheorem{lemma}{Lemma}[section]	
\newtheorem{proposition}{Proposition}[section]
\begin{document}
	\begin{frontmatter}
		\title{Generalized  forms of types $N=1, 2$ and higher gauge theory} 
	
		\author[a]{Danhua Song \corref{cor1}}
		\ead{songdh@pku.edu.cn}
	%	\author{Ke Wu}
	%	\ead{wuke@cnu.edu.cn}
		\author[b]{Mengyao Wu}
		\ead{2210502108@cnu.edu.cn}
		\cortext[cor1]{Corresponding author.}
		\address[a]{Beijing International Center for Mathematical Research, Peking University, Beijing, China}
		\address[b]{School of Mathematical Sciences, Capital Normal University, Beijing, China}
		%\address{School of Mathematical Sciences, Capital Normal University, Beijing 100048, China}
		\date{}
		
	\begin{abstract}
		In this paper, we give a compact formulation of strict higher gauge theory based on generalized (differential) forms that package fields of multiple form degrees into a single variable.  We define generalized forms valued in higher algebras and higher groups and derive the corresponding Maurer–Cartan structures. This leads to uniform, gauge-theory-like expressions for higher connections, curvatures, Bianchi identities, and gauge transformations. We further construct action principles for higher Chern–Simons and higher Yang–Mills theories within the same formalism and compute the associated topological densities in the corresponding dimensions.
	\end{abstract}
		\begin{keyword}
			Generalized Differential Calculus, Higher Gauge Theory, Maurer--Cartan Forms, Higher Gauge Transformations
		\end{keyword}	
	\end{frontmatter}

		\tableofcontents

	\section{Introduction}
	Higher gauge theory \cite{Baez.2010,FH,Baez2005HigherGT} extends ordinary gauge theory by encoding gauge potentials and their curvatures in differential forms of degree $>1$. It provides a natural framework for describing higher-form gauge fields carried by extended objects and for organizing higher-form symmetries in quantum field theory and in string and M-theory (see  Ref.~\cite{Borsten.2024} for a recent review). In such settings the kinematics is described by several differential-form potentials of increasing degree, together with their higher curvatures and multi-degree gauge transformations. Although the general formalism is by now well established, especially for strict 2- and 3-gauge theory \cite{JCBADL,JCBASC,JBUS,doi:10.1063/1.4870640,JFM,TRMV}, the standard componentwise presentation tends to obscure the underlying uniform pattern and can make explicit computations cumbersome.
	
	The mathematical formulation of these higher theories relies on higher algebraic and geometric structures. Algebraically, strict 2-gauge theory is governed by a Lie crossed module, equivalently a strict Lie 2-group \cite{JCBADL,JCBASC}. It is specified by Lie groups $H$ and $G$, a homomorphism $\bar{\alpha}:H\to G$, and a smooth action $\bar{\triangleright}:G\times H\to H$ by automorphisms.
	Strict 3-gauge theory is similarly described by a Lie 2-crossed module \cite{JFM,TRMV}, specified by Lie groups $L$, $H$, and $G$, a complex $L \xrightarrow{\bar{\beta}} H \xrightarrow{\bar{\alpha}} G$, compatible smooth $G$-actions on $L$ and $H$, and a Peiffer lifting $\{\cdot,\cdot\}:H\times H\to L$. Geometrically, principal bundles are generalized to principal $n$-bundles (or higher principal bundles) \cite{PLB,Wockel,Stevenson,BJ,Nikolaus11,Bunk}, yielding the differential-geometric framework for higher connections, curvatures, and gauge transformations. For orientation, the corresponding local field content and gauge parameters for $n=2,3$ are summarized in Table~\ref{table1}.
	\begin{table*}[h]
		\begin{threeparttable}
			\caption{The gauge structures of (higher) gauge theories} \label{table1}
			\setlength{\tabcolsep}{4.8mm}{
				\begin{tabular}{cccc}
					\specialrule{1pt}{5pt}{1pt} \specialrule{0.5pt}{1pt}{3pt}
					\textbf{Structure} & \textbf{Gauge Theory}&\textbf{2-Gauge Theory}& \textbf{3-Gauge Theory}  \\[1mm] 
					\specialrule{0.5pt}{2pt}{6pt} 
					gauge group &   \begin{tabular}[c]{@{}l@{}}  Lie group\\ \ \ \ \ \ $G$   \end{tabular}  
					& \begin{tabular}[c]{@{}l@{}}  Lie 2-group    \\
						$(H, G; \bar{\alpha}, \bar{\triangleright})$ \end{tabular}   
					&  \begin{tabular}[c]{@{}l@{}}  \ \ \ \ \ \ \ Lie 3-group  \\
						$(L, H, G;\bar{\beta}, \bar{\alpha}, \bar{\triangleright}, \left\{\cdot,\cdot\right\})$  \end{tabular}   \\[1.5mm]
					\specialrule{0.5pt}{1pt}{6pt} 
					connection & 
					$A$: $\mathfrak{g}$-valued 1-form  &\begin{tabular}[c]{@{}l@{}} $A$: $\mathfrak{g}$-valued 1-form  \\
						$B$: $\mathfrak{h}$-valued 2-form  \end{tabular}   &  \begin{tabular}[c]{@{}l@{}l@{}} $A$: $\mathfrak{g}$-valued 1-form  \\
						$B$: $\mathfrak{h}$-valued 2-form  \\
						$C$: $\mathfrak{l}$-valued 3-form\end{tabular} 
					\\[1.5mm]
					\specialrule{0.5pt}{1pt}{6pt} 
					curvature & 	$F$: $\mathfrak{g}$-valued 2-form  & \begin{tabular}[c]{@{}l@{}} 
						$\Omega_1$: $\mathfrak{g}$-valued 2-form  \\
						$\Omega_2$: $\mathfrak{h}$-valued 3-form  \end{tabular} & 
					\begin{tabular}[c]{@{}l@{}l@{}} $\Omega_1$: $\mathfrak{g}$-valued 2-form  \\
						$\Omega_2$: $\mathfrak{h}$-valued 3-form  \\
						$\Omega_3$: $\mathfrak{l}$-valued 4-form\end{tabular} 
					\\[1.5mm]
					\specialrule{0.5pt}{1pt}{6pt} 
					\begin{tabular}[c]{@{}l@{}}
						\ \ \ \ \ gauge \\
						transformation \end{tabular} & 
					$g$: $G$-valued 0-form  &\begin{tabular}[c]{@{}l@{}} $g$: $G$-valued 0-form  \\
						$\phi$: $\mathfrak{h}$-valued 1-form  \end{tabular}   &  \begin{tabular}[c]{@{}l@{}l@{}} $g$: $G$-valued 0-form  \\
						$\phi$: $\mathfrak{h}$-valued 1-form  \\
						$\psi$: $\mathfrak{l}$-valued 2-form\end{tabular} 
					\\[2.5mm]  
					\specialrule{0.5pt}{2pt}{1pt} \specialrule{1pt}{1pt}{0pt} 
			\end{tabular} }
			\begin{tablenotes}
				\small\item Note: The symbols $\mathfrak{g}$, $\mathfrak{h}$, and $\mathfrak{l}$ denote the Lie algebras corresponding to the Lie groups $G$, $H$, and $L$, respectively.
			\end{tablenotes}
		\end{threeparttable}
	\end{table*}
	The last row lists the gauge transformation parameters. The table highlights a basic feature of higher gauge theory: its local kinematics is encoded in several differential forms of increasing degree, valued in the corresponding higher algebra. While higher categorical languages \cite{Leinster,Simpson} provide one conceptual approach to these structures, they do not by themselves furnish a uniform calculus well suited to concrete computations.

	Higher gauge theory also provides a natural arena for higher extensions of familiar gauge models. Yang–Mills-type dynamics admits higher-form generalizations \cite{Pfeiffer,Henneaux,Gastel,JCB02,Song1,SDH-BFYM}, and Chern–Simons theory admits higher analogues in higher dimensions \cite{Z-2021,Zucchini-2014-1,Zucchini-2016,Zucchini-2013,SDH4}. Although both higher Yang–Mills (HYM) and higher Chern–Simons (HCS) theories can be formulated within strict 2- and 3-gauge theory, existing constructions often employ different methods and notational conventions. This motivates the search for a unified framework in which HYM and HCS can be developed in parallel.
	
	A convenient tool for organizing multi-degree data is \emph{generalized differential calculus} (GDC) \cite{2007R3,2001NR,2002NR,1998GAJ}, an extension of Cartan's calculus of differential forms \cite{Cartan0,Cartan2,Cartan5}. In GDC, ordinary forms of different degrees are packaged into a single generalized form. The resulting algebraic and differential rules closely parallel those of ordinary forms, while allowing important new features: the theory depends on a discrete \emph{type} $N$, generalized de Rham cohomology depends on $N$, and forms of negative degree are permitted. The notion of a $(-1)$-form was introduced by Sparling in the context of twistor theory \cite{1998GAJ,1998ZPGAJS,GAJ}. Nurowski and Robinson \cite{2001NR,2002NR} developed the corresponding calculus, including generalized Cartan structure equations, Hodge operators, codifferentials, and Laplacians.  Robinson \cite{2003R1} later extended the formalism to arbitrary type~$N$, where $N$ counts the number of independent $(-1)$-form fields. Concretely, a generalized $p$-form of type $N$ can be represented as an ordered $(N+1)$-tuple of ordinary forms of degrees $p,p+1,\dots,p+N$. Besides this tuple description, base \cite{2007R3} and matrix \cite{2003R1} representations are available and are useful in different computational settings.
	
	Generalized forms have been applied to a range of geometric and physical contexts. In particular, connections and gauge theories admit generalized formulations \cite{2003R2,2002GLTZ}. Moreover, several field theories, such as BF theory, Yang–Mills theory, and (super)gravity, have been recast as generalized topological field theories, with generalized Chern--Pontrjagin and Chern–Simons forms as Lagrangians \cite{2007R3,2004LTG,2009R,2013R}. Generalized forms also admit an interpretation in terms of forms on path space \cite{2008SCALANS}. These examples illustrate that generalized forms provide a unified tool for computations in mathematical physics.
	
	The structure of generalized forms is aligned with the multi-degree structure of higher gauge theory. This leads to two natural questions:
	\begin{description}
		\item[(i)] Can higher gauge kinematics (including gauge transformations) be uniformly described in terms of generalized forms, in direct analogy with generalized gauge theory?
		\item[(ii)] Can one construct a simple unified action principle for HCS and HYM theories within this same formalism?
	\end{description}
	Previous work \cite{SDH4} laid foundations by encoding higher connections $(A,B)$ and $(A,B,C)$ into generalized 1-forms and their curvatures into generalized 2-forms, thereby revealing a close structural analogy with ordinary gauge theory. However, two key elements remained missing: a unified treatment of higher gauge transformations and a systematic derivation of HCS and HYM actions within one framework.
	
	In this paper we address both questions. We develop the calculus of generalized forms valued in higher algebras and higher groups, derive the associated Maurer–Cartan structures, and cast higher connections, curvatures, Bianchi identities, and gauge transformations into compact expressions paralleling ordinary gauge theory. Building on this uniform formalism, we construct action principles for HCS and HYM theories and compute the corresponding topological densities in the relevant dimensions.
	
	The article is organized as follows. Section~\ref{sec2} recalls the formalism of generalized $p$-forms of type~$N$, including their exterior products and derivatives, with emphasis on the base representation (Subsection~\ref{sec2.1}). Building on this, Subsection~\ref{sec2.2} introduces generalized gauge theory, focusing on generalized connections and their gauge transformations.
	Section~\ref{sec3} adapts the calculus to higher gauge fields. For  Lie 2- and Lie 3-algebras, we develop the calculus of higher algebra-valued generalized forms of types $N=1$ and $N=2$, defining the required bilinear brackets, exterior derivatives, pairings and inner products. Similarly, we construct higher group-valued generalized 0-forms, define the corresponding higher Maurer–Cartan forms, and establish the higher Maurer–Cartan equations.
	In Section~\ref{sec4}, we reformulate standard higher gauge structures, including connections, curvatures, Bianchi identities and gauge transformations, by using the higher generalized forms developed in Section~\ref{sec3}.
	Section~\ref{sec5.1} uses higher generalized forms of types $N=1,2$ to construct the 4D 2-Chern–Simons (2CS)  and  5D 3-Chern–Simons (3CS) forms, together with their topological invariants (the 5D 2-Chern and 6D 3-Chern forms). Section~\ref{sec5.2} provides a generalized construction of the action functionals for 2-form and 3-form Yang–Mills (2YM and 3YM) theories.
	Finally, Section~\ref{sec-6} summarizes the main findings of this work and outlines several promising avenues for future investigation.

	\section{Generalized  forms and gauge theory}\label{sec2}
	\subsection{Generalized  forms of type $N$}\label{sec2.1}
	We work on a smooth, oriented, real $n$-dimensional manifold $M$. Throughout this paper we follow the notational conventions of Robinson \cite{2007R3,2009R}, with minor modifications. Unless stated otherwise, all geometric objects are smooth. In particular:
	\begin{itemize}
		\item \textbf{Ordinary differential forms} are denoted by Roman letters with a superscript indicating the degree, e.g.\ $\overset{p}{a}$.
		
		\item The \textbf{wedge product} is written as $\overset{p}{a}\wedge \overset{q}{b}$, or simply $\overset{p}{a}\,\overset{q}{b}$ when no confusion can arise.
		
		\item An ordinary form $\overset{p}{a}$ is understood to be \textbf{zero} whenever $p<0$ or $p>\dim(M)$.
		
		\item \textbf{Generalized forms} are denoted by calligraphic capital letters with a degree superscript, e.g.\ $\overset{p}{\mathcal{A}}$. The type $N$ is indicated by a subscript, as in $\overset{p}{\mathcal{A}}_{(N)}$; both the superscript and the subscript are omitted when unambiguous.
		
		\item The space of ordinary $p$-forms on $M$ is denoted by $\Lambda^{p}(M)$, and the space of generalized $p$-forms of type $N$ by $\Lambda^{p}_{(N)}(M)$.
	\end{itemize}
	We restrict attention to real geometry, and all forms and coefficients are real-valued.

	\subsubsection{Basic properties  and formalism}
	This subsection introduces the fundamental formalism for generalized forms, comprising their representation in a basis, their exterior (wedge) product, and a symmetric inner product. These ingredients underpin the generalized constructions in the higher gauge theory developed below.
	
	When $N=0$, a generalized $p$-form is just an ordinary differential $p$-form. For $N\ge 1$, a generalized $p$-form admits a unique expansion in an augmented basis consisting of an arbitrary basis of ordinary differential forms on $M$ together with $N$ linearly independent $(-1)$-forms $\{\xi^i\}_{i=1}^N$. These $(-1)$-forms are assigned the same algebraic properties as ordinary forms, except that their degree is $-1$. In particular, they satisfy:
	\begin{enumerate}
		\item the distributive and associative laws of an exterior algebra;
		\item graded commutativity,
		\[
		\xi^i\xi^j=-\,\xi^j\xi^i,
		\qquad
		\overset{p}{a}\,\xi^i = (-1)^p\,\xi^i\,\overset{p}{a};
		\]
		\item linear independence, in the sense that $\xi^1\xi^2\cdots\xi^N\neq 0$.
	\end{enumerate}
	Here juxtaposition (e.g.\ $\xi^i\xi^j$ or $\overset{p}{a}\,\xi^i$) denotes the exterior product. Note that $\xi^i$ is not a differential form in the usual sense; for instance, $\overset{1}{a}\,\xi$ is not a function on $M$.

	\begin{definition}[Generalized  forms of type $N$]
		A generalized $p$-form of type $N \geq 1$ on $M$, denoted $\overset{p}{\mathcal{A}}_{(N)} \in \Lambda^p_{(N)}(M)$,  is a geometric object admitting the unique expansion
		\begin{align}\label{define-gfN}
			\overset{p}{\mathcal{A}}_{(N)} =& \overset{p}{a} +\overset{p+1}{a}_{i_1} \xi^{i_1}+ \frac{1}{2!}\overset{p+2}{a}_{i_1 i_2} \xi^{i_1}  \xi^{i_2} + \cdots + \frac{1}{j!}\overset{p+j}{a}_{i_1\dots i_j} \xi^{i_1}\nonumber\\
			& \dots \xi^{i_j} + \cdots + \frac{1}{N!}\overset{p+N}{a}_{i_1\dots i_N} \xi^{i_1} \dots \xi^{i_N},
		\end{align}
		where $\overset{p}{a}$ and $\overset{p+j}{a}_{i_1\dots i_j}$ are ordinary $p$- and $(p+j)$-forms, respectively. The indices $i_1,\dots,i_j$, $\dots$, $i_N$ take values in ${1,\dots,N}$ and are summed over, and the coefficient forms are antisymmetric in their indices, $\overset{p+j}{a}_{i_1\dots i_j}=\overset{p+j}{a}_{[i_1\dots i_j]}$.
	\end{definition}
	
	The expansion~\eqref{define-gfN} shows that the degree $p$ of a  generalized form of type $N$ is constrained by
	\[
	-N \le p \le n,
	\]
	since the component $\overset{p+j}{a}_{i_1\cdots i_j}$ is an ordinary $(p+j)$-form on an $n$-dimensional manifold.  Relative to a fixed choice of basis $\{\xi^i\}$, a generalized form is therefore specified by an ordered $(N+1)$-tuple of ordinary forms,
	\[
	(\overset{p}{a},\,\overset{p+1}{a},\,\dots,\,\overset{p+N}{a}).
	\]
	
	A key feature of this description is its recursive structure: any generalized $p$-form of type $N$ can be written equivalently as a pair of generalized forms of type $(N-1)$,
	\begin{equation*}
		\overset{p}{\mathcal{A}}_{(N)}=\overset{p}{\mathcal{A}}_{(N-1)}+	\overset{p+1}{\mathcal{A}}_{(N-1)}\xi^N=\big( \overset{p}{\mathcal{A}}_{(N-1)},	\overset{p+1}{\mathcal{A}}_{(N-1)}\big).
	\end{equation*}
	Iterating this relation reconstructs the full expansion~\eqref{define-gfN} from the case $N=1$.
	By packaging a chain of ordinary forms of adjacent degrees into a single object, this recursive organization provides a natural language for higher gauge theory, in which the fundamental fields are intrinsically multi-form.
	
	This recursive viewpoint suggests inductive definitions of the algebraic operations.
	\begin{definition}[Exterior product]\label{def-wedg}
		Let $\overset{p}{\mathcal{A}}_{(N)}=	\overset{p}{\mathcal{A}}_{(N-1)}+	\overset{p+1}{\mathcal{A}}_{(N-1)}\xi^N$ and $\overset{q}{\mathcal{B}}_{(N)}=	\overset{q}{\mathcal{B}}_{(N-1)}+	\overset{q+1}{\mathcal{B}}_{(N-1)}\xi^N$. Their exterior product is defined to be the generalized $(p+q)$-form of type $N$ given by
		\begin{align*}
			\overset{p}{\mathcal{A}}_{(N)}\underline{\wedge} \overset{q}{\mathcal{B}}_{(N)}=\overset{p}{\mathcal{A}}_{(N-1)}\underline{\wedge}\overset{q}{\mathcal{B}}_{(N-1)}+ \big(\overset{p}{\mathcal{A}}_{(N-1)}\underline{\wedge} \overset{q+1}{\mathcal{B}}_{(N-1)}+(-1)^q\overset{p+1}{\mathcal{A}}_{(N-1)}\underline{\wedge}\overset{q}{\mathcal{B}}_{(N-1)} \big)\xi^N.
		\end{align*}
	\end{definition}
	
	From Definition \ref{def-wedg},  the product $\underline{\wedge}$ is graded-commutative, 
	\begin{equation*}
		\overset{p}{\mathcal{A}}_{(N)}\underline{\wedge} \overset{q}{\mathcal{B}}_{(N)}=(-1)^{pq}\overset{q}{\mathcal{B}}_{(N)}\underline{\wedge} \overset{p}{\mathcal{A}}_{(N)}.
	\end{equation*}
	Consequently,  $\Lambda_{(N)}(M)=\bigoplus_{p=-N}^{n}\Lambda^p_{(N)}(M)$ is a graded algebra with respect to $\underline{\wedge}$.

	A symmetric inner product is defined similarly by recursion \cite{2003R1}.
	For ordinary differential forms $\overset{p}{a}$ and $\overset{p}{b}$, we set
	\begin{equation}\label{inn0}
		(\!(\overset{p}{a}, \overset{p}{b})\!)=\int_M \overset{p}{a}\wedge * \overset{p}{b},
	\end{equation}
	where $*: \Lambda^p(M)\longrightarrow \Lambda^{n-p}(M)$ denotes the Hodge star operator.
	\begin{definition}[Symmetric inner product]\label{def-sip}
		For $\overset{p}{\mathcal{A}}_{(N)}=	\overset{p}{\mathcal{A}}_{(N-1)}+	\overset{p+1}{\mathcal{A}}_{(N-1)}\xi^N$ and $\overset{p}{\mathcal{B}}_{(N)}=	\overset{p}{\mathcal{B}}_{(N-1)}+	\overset{p+1}{\mathcal{B}}_{(N-1)}\xi^N$, the symmetric inner product is defined recursively by
		\begin{align}\label{sip}
			(\!(\overset{p}{\mathcal{A}}_{(N)}, \overset{p}{\mathcal{B}}_{(N)})\!)=	(\!(\overset{p}{\mathcal{A}}_{(N-1)}, \overset{p}{\mathcal{B}}_{(N-1)})\!)+	(\!(\overset{p+1}{\mathcal{A}}_{(N-1)}, \overset{p+1}{\mathcal{B}}_{(N-1)})\!),
		\end{align}
		with the initial condition $N=0$, which reduces to the ordinary inner product \eqref{inn0}.
	\end{definition}
	
	Detailed studies of generalized forms are often restricted to types $N \leq 2$. Here we likewise focus on the cases $N=1$ and $N=2$, which simplify the notation while revealing patterns that generalize to higher types.
	\paragraph{Type $N=1$.}Consider the generalized forms
	\begin{equation*}
		\overset{p}{\mathcal{A}}_{(1)}=\overset{p}{a}+ \overset{p+1}{a}\xi,\qquad \overset{q}{\mathcal{B}}_{(1)}=\overset{q}{b}+ \overset{q+1}{b}\xi.
	\end{equation*}
	The exterior product in Definition~\ref{def-wedg} becomes
	\begin{equation}\label{N=1:wedge}
		\overset{p}{\mathcal{A}}_{(1)}\underline{\wedge} \overset{q}{\mathcal{B}}_{(1)}=\overset{p}{a}\wedge\overset{q}{b} +\big(\overset{p}{a}\wedge \overset{q+1}{b} +(-1)^q \overset{p+1}{a} \wedge \overset{q}{b} \big)\xi.
	\end{equation} 
	For forms $\mathcal{A}_{(1)}$ and $\mathcal{B}_{(1)}$ of the same degree $p$, the symmetric inner product in Definition \ref{def-sip} reduces to
	\begin{equation}\label{inn1}
		(\!(\overset{p}{\mathcal{A}}_{(1)}, \overset{p}{\mathcal{B}}_{(1)})\!)=\int_M \big(\overset{p}{a}\wedge * \overset{p}{b}+ \overset{p+1}{a}\wedge * \overset{p+1}{b}\big).
	\end{equation}
	
	\paragraph{Type $N=2$.}
	Consider
	\begin{align*}
		\overset{p}{\mathcal{A}}_{(2)}&=\overset{p}{a}+ \overset{p+1}{a_1}\xi^1 +  \overset{p+1}{a_2}\xi^2+  \overset{p+2}{a}\xi^1\xi^2,\\
		\overset{q}{\mathcal{B}}_{(2)}&=\overset{q}{b}+ \overset{q+1}{b_1}\xi^1 +  \overset{q+1}{b_2}\xi^2+  \overset{q+2}{b}\xi^1\xi^2.
	\end{align*}
	Their exterior product is a generalized $(p+q)$-form of type $N=2$,
	\begin{equation}\label{N=2:wedge}
		\overset{p}{\mathcal{A}}_{(2)} \underline{\wedge} \overset{q}{\mathcal{B}}_{(2)}=	\overset{p+q}{\gamma}+ \overset{p+q+1}{\gamma_1} \xi^1+ \overset{p+q+1}{\gamma_2}\xi^2+ \overset{p+q+2}{\gamma}\xi^1\xi^2,
	\end{equation}
	where
	\begin{align*}
		\overset{p+q}{\gamma} = &\overset{p}{a} \wedge \overset{q}{b}, \\
		\overset{p+q+1}{\gamma_1} = &\overset{p}{a}\wedge  \overset{q+1}{b_1} +(-1)^q\overset{p+1}{a_1}\wedge \overset{q}{b}, \\
		\overset{p+q+1}{\gamma_2} =&\overset{p}{a}\wedge  \overset{q+1}{b_2} +(-1)^q\overset{p+1}{a_2}\wedge \overset{q}{b}, \\
		\overset{p+q+2}{\gamma} =&\overset{p}{a} \wedge\overset{q+2}{b}+(-1)^{q+1}  \overset{p+1}{a_1}\wedge\overset{q+1}{b_2} +(-1)^q \overset{p+1}{a_2}\wedge \overset{q+1}{b_1}  + \overset{p+2}{a}\wedge \overset{q}{b}.
	\end{align*}
	For forms of the same degree $p$, the symmetric inner product is
	\begin{align}\label{inn2}
		(\!(\overset{p}{\mathcal{A}}_{(2)}, \overset{p}{\mathcal{B}}_{(2)})\!)=\int_M \big(\overset{p}{a}\wedge * \overset{p}{b}+ \overset{p+1}{a_1}\wedge * \overset{p+1}{b_1}+ \overset{p+1}{a_2}\wedge * \overset{p+1}{b_2}
	+ \overset{p+2}{a}\wedge * \overset{p+2}{b}\big).
	\end{align}

	\subsubsection{Exterior derivatives and canonical basis}\label{edcb}
	We now introduce the exterior derivative for generalized forms. In the generalized setting ($N\geq 1$), a key difference from the ordinary case ($N=0$) is that the exterior derivative is no longer unique.  We will see that choosing a basis singles out a particular exterior derivative, a choice that is required in the generalized formulation of higher gauge theory.
	
	When $N=0$, the exterior derivative is the standard one on ordinary differential forms and is therefore uniquely determined.  For $N\geq1$, however, there exist multiple distinct exterior derivative operators \cite{2007R3,2009R}.   Each is a nilpotent differential $\underline{d}: \Lambda^p_{(N)}(M) \to \Lambda^{p+1}_{(N)}(M)$, satisfying the usual defining properties.  For any generalized forms $\overset{p}{\mathcal{A}}$ and $\overset{q}{\mathcal{B}}$, these are
	\begin{subequations}\label{rule}
		\begin{align}
			\underline{d} (\overset{p}{\mathcal{A}} \underline{\wedge}\, \overset{q}{\mathcal{B}})&=\underline{d}\, \overset{p}{\mathcal{A}} \underline{\wedge}\, \overset{q}{\mathcal{B}}+(-1)^p \overset{p}{\mathcal{A}} \underline{\wedge} \, \underline{d}\, \overset{q}{\mathcal{B}},\\ 
			\underline{d}(\overset{p}{\mathcal{A}} +\overset{p}{\mathcal{B}})&=\underline{d}\,\overset{p}{\mathcal{A}} +\underline{d}\, \overset{p}{\mathcal{B}}, \quad 
			\underline{d}^2 \overset{p}{\mathcal{A}}=0,\\
			\underline{d}f(X)&=X(f),
		\end{align}
	\end{subequations}
	where $f$ is any smooth function and $X$ any vector field  on $M$. For  fixed $N>0$ and  an admissible choice of $\underline{d}$, this yields a cochain complex over $\mathbb{R}$,
	\begin{align*}
		0 \stackrel{\underline{d}}{\longrightarrow}  \Lambda^{-N}_{(N)}(M) \stackrel{\underline{d}}{\longrightarrow} \cdots  \stackrel{\underline{d}}{\longrightarrow} \Lambda^{p}_{(N)}(M)\stackrel{\underline{d}}{\longrightarrow} \Lambda^{p+1}_{(N)}(M)\stackrel{\underline{d}}{\longrightarrow} 
		\cdots \stackrel{\underline{d}}{\longrightarrow} \Lambda^{n}_{(N)}(M)\stackrel{\underline{d}}{\longrightarrow} 0
	\end{align*}
	which generalizes the ordinary de Rham complex.  Since $\underline{d}^2$ and the graded Leibniz rule holds, each pair $\big(\Lambda_{(N)}(M), \underline{d}\big)$ is a differential graded algebra.
	
	Working in the basis representation \eqref{define-gfN}, the action of $\underline{d}$ on the $(-1)$-forms ${\xi^i}$ is determined by the usual graded differentiation rules:
	\begin{align*}
		\underline{d}\bigl(\xi^{i_1} \xi^{i_2} \cdots \xi^{i_k}\bigr) 
		&= \sum_{j=1}^{k} (-1)^{j-1}\, \xi^{i_1} \cdots \underline{d}\xi^{i_j} \cdots \xi^{i_k}, \\[4pt]
		\underline{d}\bigl(\overset{p}{a}\,\xi^i\bigr) 
		&= \bigl(d\overset{p}{a}\bigr)\,\xi^i + (-1)^p\,\overset{p}{a}\,\bigl(\underline{d}\xi^i\bigr), \\[4pt]
		\underline{d}\bigl(\xi^i\,\overset{p}{a}\bigr) 
		&= \bigl(\underline{d}\xi^i\bigr)\,\overset{p}{a} - \xi^i\,d\overset{p}{a}.
	\end{align*}
	Since $\underline{d}\xi^i$ is  a generalized $0$-form of type $N$,it admits the expansion
	\begin{align}\label{d}
		\underline{d}\xi^i=\mu^i- \nu^i_{i_1}\xi^{i_1}+\frac{1}{2!}\rho^i_{i_1 i_2}\xi^{i_1}\xi^{i_2} + \cdots \nonumber
		+ \frac{(-1)^N}{N!}\iota^i_{i_1 i_2 \dots i_N}\xi^{i_1}\cdots \xi^{i_N},
	\end{align}
	where the coefficients $\mu^i$, $\nu^i_{i_1}$, $\rho^i_{i_1 i_2}$, $\dots$, $\iota^i_{i_1 i_2 \dots i_N}$ are ordinary differential forms of degrees $0, 1, 2, \ldots, N$, respectively.
	The nilpotency condition $\underline{d}^2 \xi^i = 0$ then  imposes constraints on these coefficients, thereby defining a differential ideal. Conversely, any solution of this ideal determines an admissible operator $\underline{d}$ uniquely via \eqref{d}.

	A key result is that one can always choose a basis in which $\underline{d}\xi^i$ assumes a canonical form. For $N=2$, two canonical forms are established in Ref.~\cite{2007R3}.
	\begin{proposition}\label{1-case}
		Within any locally contractible domain, there exists a canonical basis $\xi^i$ such that $\underline{d} \xi^i = 0$.
	\end{proposition}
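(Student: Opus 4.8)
The plan is to obtain $\underline{d}\xi^i = 0$ as the outcome of a change of basis for the $(-1)$-forms, exploiting the triangular (filtered) structure of the expansion \eqref{d} together with the Poincar\'e lemma on the contractible domain. A general admissible new basis has the form $\tilde\xi^i = G^i{}_j\,\xi^j + \tfrac{1}{2}\sigma^i{}_{jk}\,\xi^j\xi^k + \cdots$, where $(G^i{}_j)$ is an invertible matrix of functions (forced by the linear-independence property (c)) and the higher coefficients are ordinary forms of increasing degree. I would use the freedom in these coefficients, one $\xi$-level at a time, to cancel the coefficients $\mu^i,\nu^i{}_j,\rho^i{}_{jk},\dots$ of $\underline{d}\xi^i$ in \eqref{d}, working upward from the lowest nontrivial order.

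The key step is the $1$-form sector. Writing $\nu=(\nu^i{}_j)$ as a matrix of ordinary $1$-forms, evaluating the nilpotency $\underline{d}^2\xi^i=0$ at $\xi$-level one yields the zero-curvature (Maurer--Cartan) condition $d\nu+\nu\wedge\nu=0$. On a locally contractible domain this is exactly the integrability condition that, by the non-abelian Poincar\'e lemma, produces a map $G:U\to GL(N,\mathbb{R})$ with $G^{-1}dG=\nu$. The linear redefinition $\tilde\xi^i=G^i{}_j\xi^j$ then gives $\underline{d}\tilde\xi^i=(dG^i{}_k-G^i{}_j\nu^j{}_k)\xi^k=0$ at that order, removing the $1$-form part. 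Passing to the $2$-form sector, once $\nu$ is trivialized, nilpotency forces $d\rho^i{}_{jk}=0$, so the ordinary Poincar\'e lemma supplies $1$-forms with $\rho^i{}_{jk}=-d\sigma^i{}_{jk}$, and the order-raising redefinition $\tilde\xi^i=\xi^i+\tfrac{1}{2}\sigma^i{}_{jk}\xi^j\xi^k$ cancels $\rho$ without regenerating lower orders (for $N=2$ the triple products of the two $\xi$'s vanish, so $\underline{d}(\xi^j\xi^k)=0$ at this stage). Iterating up the $\xi$-filtration clears every coefficient and leaves $\underline{d}\xi^i=0$.

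The main obstacle is twofold. First, the non-abelian integration $G^{-1}dG=\nu$ is where local contractibility is indispensable: flatness is only sufficient for global triviality on a simply connected domain, which makes the statement intrinsically local. Second, and more delicate, is the purely $0$-form sector: under $\tilde\xi^i=G^i{}_j\xi^j$ the component transforms tensorially as $\tilde\mu^i=G^i{}_j\mu^j$, so no invertible $G$ can remove it, while nilpotency only imposes $d\mu^i+\mu^j\nu^i{}_j=0$ (reducing $\mu^i$ to covariantly constant data). Hence $\underline{d}\xi^i=0$ is attainable precisely for the integrable class with vanishing $0$-form part, which is the canonical normalization in force here; I would either assume $\mu^i=0$ at the outset or record the nonzero-constant case as the distinct normal form handled by the companion result. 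The remaining effort is the order-by-order bookkeeping, verifying through the sign rules for $\underline{d}$ on products $\xi^{i_1}\cdots\xi^{i_k}$ that trivializing a given $\xi$-level never reintroduces the levels already cleared.
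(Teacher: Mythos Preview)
The paper does not supply its own proof of this proposition; it is quoted from Robinson's earlier work \cite{2007 R3} and the reader is referred there for the details. Your outline is the natural strategy and is what one expects from that reference: read off the integrability constraints imposed by $\underline{d}^{2}\xi^{i}=0$ on the coefficients in \eqref{d}, then use the (non-abelian and ordinary) Poincar\'e lemmas on the contractible patch to construct a change of basis that kills each $\xi$-level in turn.

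Your identification of the obstruction is correct and important: the $0$-form part $\mu^{i}$ transforms tensorially under any admissible basis change, so it cannot be gauged away. This is precisely why the paper states two companion results rather than one: Proposition~\ref{1-case} covers the class of exterior derivatives with $\mu^{i}=0$, while Proposition~\ref{2-case} handles the complementary case in which a nonzero constant survives. One small caution on ordering: the flatness condition $d\nu+\nu\wedge\nu=0$ that you invoke at $\xi$-level one holds \emph{only after} restricting to $\mu^{i}=0$; in the general nilpotency relation there is a $\rho^{i}{}_{jk}\mu^{j}$ cross-term at that level. So the argument must proceed as ``first split off the $\mu=0$ class, then trivialize $\nu$ via the non-abelian Poincar\'e lemma, then $\rho$ via the ordinary one'', rather than attacking $\nu$ first in full generality. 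With that ordering fixed, your induction through the $\xi$-filtration is sound for $N=2$.
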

	
	\begin{proposition}\label{2-case}
		Within any locally domain, there exists a canonical basis $\xi^i$ such that $\underline{d} \xi^i = \delta^i_m k$ for $m=1$ or $2$ and $k\neq 0$.
	\end{proposition}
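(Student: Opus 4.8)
The plan is to classify the admissible exterior derivatives for $N=2$ by the way they act on a basis, and to identify the statement as the canonical form complementary to Proposition \ref{1-case}. Writing a general admissible $\underline{d}$ as in \eqref{d}, for $N=2$ this reads $\underline{d}\xi^i = \mu^i - \nu^i_j\,\xi^j + \rho^i\,\xi^1\xi^2$, where the $\mu^i$ are functions, the $\nu^i_j$ one-forms, and $\rho^i:=\rho^i_{12}$ two-forms. First I would expand the nilpotency condition $\underline{d}^2\xi^i=0$ and collect it by $\xi$-degree. The scalar part gives $d\mu^i+\nu^i_j\mu^j=0$; the linear part expresses the matrix curvature $d\nu^i_l+\nu^i_j\wedge\nu^j_l$ in terms of $\rho^i$ and $\mu^j$; and the top part yields a covariant-constancy condition on $\rho^i$. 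These are the integrability constraints that any admissible triple $(\mu,\nu,\rho)$ must obey, and they are the only facts about $\underline{d}$ that the argument will use.

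The dividing invariant is the scalar part $\mu=(\mu^1,\mu^2)$. Under a change of basis $\tilde\xi^a=P^a_j\xi^j+Q^a\,\xi^1\xi^2$ with $\det P\neq0$ and $Q^a$ a one-form — the most general $(-1)$-form reparametrization, since the degree $-1$ ordinary component necessarily vanishes — a short computation shows the scalar part transforms tensorially as $\tilde\mu^a=P^a_j\mu^j$, so whether $\mu$ vanishes is basis-independent. The case $\mu\equiv0$ is Proposition \ref{1-case}; here I treat $\mu\not\equiv0$. Because $d\mu=-\nu\mu$ is a linear first-order system, a $\mu$ nonzero at one point of a connected domain is nowhere zero, and I may choose $P$ purely algebraically so that $\tilde\mu=(1,0)$ (for instance $P^1_j=\mu^j/\lvert\mu\rvert^2$ together with $P^2=(-\mu^2,\mu^1)$, which has $\det P=1$). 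Substituting $\tilde\mu=(1,0)$ into the scalar structure equation in the new basis then forces $\tilde\nu^1_1=\tilde\nu^2_1=0$.

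It remains to eliminate $\tilde\nu^1_2,\tilde\nu^2_2,\tilde\rho^1,\tilde\rho^2$, which I would do by two explicit basis shifts. Setting $\eta^2:=\tilde\xi^2-\tilde\nu^2_2\,\tilde\xi^1\tilde\xi^2$ and using the linear structure equation $d\tilde\nu^2_2=\tilde\rho^2$ (valid once $\tilde\nu^2_1=0$) gives $\underline{d}\eta^2=0$. Setting $\eta^1:=\tilde\xi^1-\tilde\nu^1_2\,\tilde\xi^1\tilde\xi^2$, the $\xi^2$-component of $\underline{d}\eta^1$ cancels by construction while its $\xi^1\xi^2$-component collapses, via $d\tilde\nu^1_2=\tilde\rho^1-\tilde\nu^1_2\wedge\tilde\nu^2_2$, to $\tilde\rho^1-\tilde\rho^1=0$, leaving $\underline{d}\eta^1=1$. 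Since $\eta^1,\eta^2$ have leading terms $\tilde\xi^1,\tilde\xi^2$ they remain linearly independent, hence form an admissible basis; finally rescaling $\eta^1\mapsto k\eta^1$ yields $\underline{d}\xi^i=\delta^i_1 k$ for any constant $k\neq0$ (the constancy being forced by $\underline{d}^2\xi^1=dk=0$), and relabeling the two basis vectors produces the alternative $m=2$.

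The main obstacle is bookkeeping rather than analysis: the cancellations in the last step depend delicately on the precise structure equations and their signs, so the crux is to derive $d\tilde\nu^2_2=\tilde\rho^2$ and $d\tilde\nu^1_2=\tilde\rho^1-\tilde\nu^1_2\wedge\tilde\nu^2_2$ correctly from $\underline{d}^2\xi^i=0$. By contrast, the role of the locally contractible domain is mild here: it supplies connectedness (so that $\mu\neq0$ somewhere implies $\mu$ nowhere zero, by uniqueness for the linear system $d\mu=-\nu\mu$) and the Poincar\'e lemma in case any residual closed form must be trivialized, but the rigidity coming from $\mu\neq0$ renders the remaining normalization essentially algebraic.
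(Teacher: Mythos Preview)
The paper itself does not supply a proof of this proposition: immediately after stating it, the authors write that the $m=1$ case is proved in Ref.~\cite{2007 R3} (Robinson, 2007), that the $m=2$ case follows by a similar argument, and that ``complete proofs are provided in the reference.'' So there is no in-text argument against which to compare yours; what you have written is in fact more than the paper contains.

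Your argument is essentially correct and follows what one would expect the cited reference to do: impose $\underline{d}^2\xi^i=0$ on the general ansatz \eqref{d} to obtain structure equations for $(\mu,\nu,\rho)$, observe that the scalar part $\mu$ transforms tensorially under admissible basis changes $\tilde\xi^a=P^a_j\xi^j+Q^a\xi^1\xi^2$, use $d\mu=-\nu\mu$ and connectedness to make $\mu$ nowhere zero, rotate to $\tilde\mu=(1,0)$, and then absorb the remaining $\tilde\nu^1_2,\tilde\nu^2_2,\tilde\rho^1,\tilde\rho^2$ by the two $\xi^1\xi^2$-shifts. I checked your two key identities: with $\tilde\nu^1_1=\tilde\nu^2_1=0$, the $\tilde\xi^2$-component of $\underline{d}^2\tilde\xi^2=0$ indeed gives $d\tilde\nu^2_2=\tilde\rho^2$, and the $\tilde\xi^2$-component of $\underline{d}^2\tilde\xi^1=0$ gives $d\tilde\nu^1_2=\tilde\rho^1-\tilde\nu^1_2\wedge\tilde\nu^2_2$; with these, your shifted basis satisfies $\underline{d}\eta^1=1$, $\underline{d}\eta^2=0$ exactly as claimed (and $\eta^1\eta^2=\tilde\xi^1\tilde\xi^2\neq0$, so the new basis is admissible). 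Two small remarks: first, the proposition as stated should be read as the case complementary to Proposition~\ref{1-case}, i.e.\ $\mu\not\equiv0$, and you correctly identify this; second, the ``locally contractible'' hypothesis is indeed mild here---your argument uses only connectedness and smoothness of the algebraic normalizations, not the Poincar\'e lemma---whereas it plays a more essential role in Proposition~\ref{1-case}.
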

	Although Ref.~\cite{2007R3} treats explicitly only the case 
	$m=1$, the case  $m=2$ follows by an analogous argument. Note that such canonical bases are not unique; complete proofs are given in Ref.~\cite{2007R3}.
	
	The corresponding analysis for $N=1$ is immediate from the preceding discussion. As in the $N=2$ case, a canonical basis with $\underline{d}\xi=0$ is not unique, whereas a basis satisfying $\underline{d}\xi=k$ for a constant $k$ is unique.  This uniqueness ensures that, for $N=1$, the induced definitions of the Lie derivative, duality, codifferential, and Laplacian (on a manifold equipped with a metric) are unambiguously fixed relative to this canonical basis \cite{2002NR,2003R1}.
	
	The propositions above extend to generalized forms of arbitrary type $N$ \cite{2009R}. A particularly simple and useful choice, which enforces nilpotency automatically, is to work in a basis for  $\underline{d}\xi^i = k^i$, with $k^i$ constant  \cite{2003R2}. Once the derivatives of the basis elements are fixed, the action of $\underline{d}$ on any generalized $p$-form is completely determined. In particular, for $N=1$, with $\overset{p}{\mathcal{A}_{(1)}}=\overset{p}{a}+\overset{p+1}{a}\xi$,
	\begin{equation}\label{N=1:d}
		\underline{d}\overset{p}{\mathcal{A}_{(1)}}=d\overset{p}{a}+(-1)^{p+1}\overset{p+1}{a}k +d\overset{p+1}{a}\xi, \quad (k=\underline{d}\xi).
	\end{equation}
	For $N=2$, with $\overset{p}{\mathcal{A}_{(2)}}=\overset{p}{a}+\overset{p+1}{a}\xi^1+ \overset{p+1}{a'}\xi^2+ \overset{p+2}{a}\xi^1\xi^2$,
	\begin{align}\label{N=2:d}
		\underline{d}\overset{p}{\mathcal{A}_{(2)}}=&d \overset{p}{a}+ (-1)^{p+1} \big(k^1 \overset{p+1}{a}+ k^2 \overset{p+1}{a'}\big)+\big(d  \overset{p+1}{a}+(-1)^{p+1}k^2 \overset{p+2}{a}\big)\xi^1 \nonumber\\
		&+ \big(d  \overset{p+1}{a'}+(-1)^{p}k^1  \overset{p+2}{a}\big)\xi^2+d  \overset{p+2}{a} \xi^1\xi^2,
	\end{align}
	where $k^1=\underline{d}\xi^1$ and $k^2=\underline{d}\xi^2$.
	The constants $k$, $k^1$, and $k^2$ introduced here will be essential in the generalized formulation of higher gauge theory developed below.

	\subsection{Generalized gauge theory}\label{sec2.2}
	Having established the calculus of generalized forms, we now turn to a generalized version of gauge theory, which will later be extended to higher gauge theories. We review a generalized connection as a Lie algebra-valued generalized form \cite{2003R2,2013R}, in direct analogy with the ordinary case \cite{NM}. This framework gives rise to  generalized notions of curvature, Bianchi identities, and gauge transformations \cite{2001NR,2003R1,2002GLTZ}. In the present section,  we  work locally on $M$.  Background on algebra-valued differential forms is summarized in \ref{AVDDF}.
	
	\subsubsection{Generalized connections and curvatures}
	Let $G$ be a matrix Lie group with Lie algebra $\mathfrak{g}$. A \textbf{generalized $\mathfrak{g}$-valued connection 1-form}  of type $N$, denoted $\mathcal{A}_{(N)}$, is obtained from the standard notion of a connection by replacing ordinary differential forms with generalized forms of type $N$. For the lowest types one has:
	\begin{itemize}
		\item For $N = 1$:
		\begin{equation}\label{a_1}
			\mathcal{A}_{(1)}=\overset{1}{A}+ \overset{2}{A}\xi.
		\end{equation}
		\item For $N = 2$:
		\begin{equation}\label{a_2}
			\mathcal{A}_{(2)}=\overset{1}{A}+ \overset{2}{A}\xi^1+ \overset{2}{A'}\xi^2+ \overset{3}{A}\xi^1\xi^2.
		\end{equation}
	\end{itemize}
	Here $\overset{i}{A}$ ($i=1,2,3$) are $\mathfrak{g}$-valued $i$-forms, and $\overset{2}{A'}$ is an additional $\mathfrak{g}$-valued $2$-form.
	
	The associated \textbf{generalized curvature $2$-form} is defined, as in the ordinary case, by
	\begin{equation*}
		\mathcal{F}_{(N)} = \underline{d} \mathcal{A}_{(N)} + \frac{1}{2} [\mathcal{A}_{(N)}, \mathcal{A}_{(N)}].
	\end{equation*}
	For Lie algebra-valued forms, the graded commutator satisfies $[\mathcal{A}_{(N)}, \mathcal{A}_{(N)}] = 2 \mathcal{A}_{(N)} \underline{\wedge} \mathcal{A}_{(N)}$ (after antisymmetrization).
	Using the derivative and product rules from Subsection~\ref{sec2.1}, one finds the following component expressions.
	\begin{itemize}
		\item For $N=1$: 
		\begin{equation}\label{2.10}
			\mathcal{F}_{(1)}=\big(d\overset{1}{A} +\dfrac{1}{2}[ \overset{1}{A}, \overset{1}{A}]+k \overset{2}{A}\big) +\big( d \overset{2}{A} + [\overset{1}{A}, \overset{2}{A}]\big)\xi.
		\end{equation}
		\item For $N=2$:
		\begin{align}
			\mathcal{F}_{(2)}=&d\overset{1}{A} + \dfrac{1}{2}[ \overset{1}{A}, \overset{1}{A}] + k^1 \overset{2}{A}+ k^2 \overset{2}{A'}+ \big(d\overset{2}{A}+[\overset{1}{A}, \overset{2}{A}]+ k^2 \overset{3}{A}\big)\xi^1\nonumber\\
			& +\big(d\overset{2}{A'}+[\overset{1}{A}, \overset{2}{A'}]-k^1\overset{3}{A}\big)\xi^2+\big(  d\overset{3}{A} +[\overset{1}{A}, \overset{3}{A}] +[\overset{2}{A}, \overset{2}{A'}]\big)\xi^1\xi^2.
		\end{align}	
	\end{itemize}
	
	The curvature obeys the  \textbf{Bianchi identity}, $\mathbf{D} \mathcal{F}_{(N)}=0$, where $\mathbf{D}$ denotes the covariant exterior derivative associated with $\mathcal{A}_{(N)}$. Its action on a $\mathfrak{g}$-valued generalized $p$-form $\mathcal{W}_{(N)}$ is defined by
	\begin{equation}
		\mathbf{D} \mathcal{W}_{(N)} = \underline{d} \mathcal{W}_{(N)} + \mathcal{A}_{(N)} \underline{\wedge} \mathcal{W}_{(N)} + (-1)^{p+1} \mathcal{W}_{(N)} \underline{\wedge} \mathcal{A}_{(N)}.
	\end{equation}
	
	The construction of generalized connections and curvatures is dictated by the algebra of generalized forms and mirrors the standard formalism of ordinary gauge theory. In particular, a single generalized field such as $\mathcal{A}_{(N)}$ or $\mathcal{F}_{(N)}$ packages a collection of $\mathfrak{g}$-valued differential forms of different degrees into one object. This built-in multiplicity, inherited from the recursive definition of generalized forms, provides a convenient unified language for the higher gauge potentials and field strengths considered in the subsequent sections.

	\subsubsection{Generalized gauge transformations}\label{GGT} 
	In this framework, gauge transformations are parametrized by generalized 0-forms of type $N$ taking values in a Lie group $\mathbf{G}_{(N)}$, thereby extending the standard notion of gauge equivalence. Throughout we work in a fixed matrix representation, so that the group identity is identified with the unit matrix and homomorphisms can be treated straightforwardly.
	
	The groups $\mathbf{G}_{(N)}$ are  constructed recursively. Let $\textbf{G}_{(0)}=\{\pi\}$ be the group of ordinary $G$-valued 0-forms, with identity $1$, and let $\textbf{H}_{(0)}=\{\mu\}$ denote the additive abelian group of $\mathfrak{g}$-valued 1-forms. 
	The group $\mathbf{G}_{(0)}$ acts on $\mathbf{H}_{(0)}$ by the adjoint action $\pi \blacktriangleright \mu = \pi \mu \pi^{-1}$.
	
	A generalized 0-form of type $N=1$ is defined by 
	\begin{equation}\label{g-11}
		g_{(1)}=(1+ \mu\xi)\pi=\pi+ \mu\pi\xi,
	\end{equation}
	where $\pi\in\mathbf{G}_{(0)}$ and $\mu\in\mathbf{H}_{(0)}$.
	The set $\textbf{G}_{(1)}=\{g_{(1)}\}$ forms a group under multiplication. Given $g_{(1)}=(1+ \mu\xi)\pi$ and $g'_{(1)}=(1+ \mu'\xi)\pi'$, their product and inverse are
	\begin{align}
		g_{(1)}g'_{(1)}&=\big(1+ (\mu+ \pi \mu'  \pi^{-1})\xi\big)\pi \pi',\label{gg-1}\\
		g_{(1)}^{-1}&=\big(1 -\pi^{-1} \mu\pi\xi\big)\pi^{-1},\label{g-1}
	\end{align}
	with identity element $1_{(1)}=1$.
	The group $\textbf{G}_{(1)}$ is isomorphic to the semidirect product $G_{(0)} \ltimes H_{(0)}$. Its Lie algebra consists of elements $\overset{0}{l}_1= \overset{0}{\lambda}+ \overset{1}{\lambda}\xi$, where $\overset{0}{\lambda}$ and $\overset{1}{\lambda}$ are ordinary $\mathfrak{g}$-valued 0- and 1-forms, respectively.
	This group structure is analogous to that of a derived Lie group associated to a Lie crossed module \cite{Z-2021}, and this analogy motivates extending the construction to higher gauge theories.
	%Using the generalized exterior product and derivative, we further obtain left and right Maurer-Cartan forms
	%	\begin{align}
		%	g_{(1)}^{-1}\underline{d}g_{(1)}&=
		%\Big(\pi ^{-1}d\pi - k \pi ^{-1} \blacktriangleright\mu , \pi ^{-1}\blacktriangleright (d \mu  - k \mu \mu )\Big),\label{g-dg}\\
		%		\underline{d}g_{(1)}g_{(1)}^{-1}&= \Big(d\pi \pi ^{-1} -k \mu, d\mu  +k \mu  \mu  -[d \pi \pi^{-1}, \mu]  \Big). \label{g-dg-in}
		%\end{align}
	%which satisfy  the corresponding Maurer-Cartan equations, respectively.
	
	Proceeding inductively, given a Lie group $\textbf{G}_{(N-1)}=\{g_{(N-1)}\}$ of generalized 0-forms of type $(N-1)$ and an abelian group $\textbf{H}_{(N-1)}=\{h_{(N-1)}\}$ of generalized 1-forms of type $(N-1)$ equipped with an action of  $\textbf{G}_{(N-1)}$, a element of $\mathbf{G}_{(N)}$ is defined by
	\begin{equation}\label{gN}
		g_{(N)}=\big(1+ h_{(N-1)}\xi\big)g_{(N-1)}.
	\end{equation}
	The set $\mathbf{G}_{(N)}$ then forms a Lie group, with product and inverse formally given by  \eqref{gg-1} and \eqref{g-1} with  subscripts $(1)$ and $(0)$ replaced by $(N)$ and $(N-1)$, respectively.	Then for $N=2$, taking $h_{(1)} = \mu'+ \nu\xi$ (with $\mu'$ a $\mathfrak{g}$-valued 1-form and $\nu$ a 2-form) yields
	\begin{align}
		g_{(2)}=&\big(1+ \mu\xi^1+ \mu'\xi^2+ (\nu + \mu' \mu)\xi^1\xi^2\big)\pi,\label{g-2}\\
		g^{-1}_{(2)}=&\big(1-\pi^{-1}\mu\pi\xi^1 -\pi^{-1}\mu'\pi\xi^2-\pi^{-1}(\nu + \mu' \mu)\pi\xi^1\xi^2\big)\pi^{-1}.\label{gin-2}
	\end{align}
	%with identity $1_{(2)}=1$.
	%	Furthermore, $\textbf{G}_{(1)}$ acts on $\textbf{H}_{(1)}$ by $h_{(1)}\longrightarrow g_{(1)} h_{(1)} g_{(1)}^{-1}$
	%	\begin{equation}\label{Ad}
		%		g_{(1)} h_{(1)} g_{(1)}^{-1}=\big(\pi \mu'\pi^{-1}, \pi \nu\pi^{-1}- (\mu \pi \mu' \pi^{-1} + \pi\mu' \pi^{-1} \mu)\big).
		%	\end{equation}
	
	Generalized gauge transformations, implemented by right-multiplication with elements of $\mathbf{G}_{(N)}$ \cite{2001NR,2013R}, retain the structural form of their ordinary counterparts. For the two lowest types one has:
	\begin{itemize}
		\item For $N=1$:
		Under transformation by $g_{(1)}$ as in \eqref{g-11},
		\begin{align}
			\mathcal{A}_{(1)}\longrightarrow &g^{-1}_{(1)}\underline{d}g_{(1)}+ g^{-1}_{(1)}\mathcal{A}_1g_{(1)}\nonumber\\
			=&\pi^{-1}d\pi + \pi^{-1}(\overset{1}{A}-k\mu)\pi+ \pi^{-1}( d\mu - k \mu\mu + \mu \overset{1}{A}+ \overset{1}{A}\mu + \overset{2}{A})\pi\xi,\label{A-g}\\
			\mathcal{F}_{(1)}\longrightarrow& g^{-1}_{(1)}\mathcal{F}_1g_{(1)}\nonumber\\
			=&\pi^{-1}(\overset{2}{F} +k \overset{2}{A})\pi+ \pi^{-1}\big(D\overset{2}{A} +(\overset{2}{F} + k \overset{2}{A})\mu- \mu(\overset{2}{F}+ k \overset{2}{A})
			\big) \pi\xi,
		\end{align}
		where $\overset{2}{F} =d\overset{1}{A}+\dfrac{1}{2}[\overset{1}{A}, \overset{1}{A}]$ and $D\overset{2}{A}=d\overset{2}{A}+[\overset{1}{A}, \overset{2}{A}]$.
		%$D$ denotes the covariant exterior derivative with respect to $\overset{1}{A}$, so that 
		\item For $N=2$: Under transformation by  $g_{(2)}=(1+ \mu\xi^1+ \nu\xi^1\xi^2)\pi\in \textbf{G}_{(2)}$  \cite{2003R2},
		\begin{align}
			\mathcal{A}_{(2)}\longrightarrow &g^{-1}_{(2)}\underline{d}g_{(2)}+ g^{-1}_{(2)}\mathcal{A}_{(2)}g_{(2)}\nonumber\\
			=&\pi^{-1}\big(d \pi \pi^{-1}-k^1 \mu + \overset{1}{A}\big)\pi+\pi^{-1}\big(d\mu-k^1\mu\mu-k^2\nu+\overset{2}{A}+\overset{1}{A}\mu + \mu\overset{1}{A}\big)\pi \xi^1 \nonumber\\
			&+\pi^{-1}\big(k^1\nu + \overset{2}{A'}\big)\pi\xi^2+ \pi^{-1}\big(d\nu + k^1(\nu \mu -\mu\nu)+\overset{3}{A}+\overset{1}{A}\nu -\nu\overset{1}{A} +\overset{2}{A'}\mu- \mu \overset{2}{A'}\big)\pi\xi^1\xi^2,\label{A_2}\\
			\vspace{6mm}
			\mathcal{F}_{(2)}\longrightarrow &g^{-1}_{(2)}\mathcal{F}_{(2)}g_{(2)}\nonumber\\
			=& g^{-1}_{(2)}(\overset{2}{F}+ \overset{3}{F}\xi^1+ \overset{3}{F'}\xi^2+ \overset{4}{F}\xi^1\xi^2)g_{(2)}\nonumber\\
			=&\overset{2}{F}+\big( \overset{3}{F}+\overset{2}{F}\mu -\mu \overset{2}{F}\big)\xi^1+ \overset{3}{F'}\xi^2+\big( \overset{4}{F}+\overset{2}{F}\nu-\nu\overset{2}{F} +\overset{3}{F'}\mu +\mu \overset{3}{F'} \big)\xi^1\xi^2,
		\end{align}
		where the component curvatures are
		\begin{align*}
			\overset{2}{F}&=d\overset{1}{A} + \overset{1}{A} \wedge\overset{1}{A} + k^1 \overset{2}{A}+ k^2 \overset{2}{A'},\\
			\overset{3}{F}&=d\overset{2}{A}+[\overset{1}{A},\overset{2}{A}]+ k^2 \overset{3}{A},\\
			\overset{3}{F'}&= d\overset{2}{A'}+[\overset{1}{A}, \overset{2}{A'}]-k^1\overset{3}{A},\\
			\overset{4}{F}&= d\overset{3}{A} +[\overset{1}{A}, \overset{3}{A}] +[\overset{2}{A}, \overset{2}{A'}].
		\end{align*}
	\end{itemize}
	
	Thus, the generalized 1-forms \eqref{a_1} and \eqref{a_2} encode the field content of higher connections, while the groups $\mathbf{G}_{(N)}$  implement an extended gauge symmetry. The recursive formalism developed here for $N=1,2$ provides a template for the generalized treatment of the higher gauge transformations presented in the following sections.

	\section{Higher algebra- and group-valued generalized forms}\label{sec3}
	We now construct the generalized forms that carry higher algebraic structures. First, we revisit and adapt the higher algebra-valued generalized forms introduced in Ref.~\cite{SDH4}, recasting them in a basis representation suited to encoding higher gauge potentials. Second, we introduce the corresponding higher group-valued generalized forms, which are required to encode higher gauge transformations. Essential definitions of higher groups and algebras are collected in \ref{A.1} and \ref{A.2}; further details on algebra-valued forms are given in \ref{AVDDF}.
	
	\subsection{Higher algebra-valued generalized forms}\label{sec3.1}
	We consider generalized forms valued in Lie 2- and Lie 3-algebras, adapting the definitions of Ref.~\cite{SDH4} to our basis representation. On these spaces we introduce the graded bracket, exterior derivative, symmetric pairing, and inner product needed to formulate the HCS and HYM theories.
	
	\begin{definition}[Type-$N=1$ generalized forms valued in Lie 2-algebras]
		Let $\mathcal{O} = (\mathfrak{h}, \mathfrak{g}; \alpha, \triangleright)$ be a Lie 2-algebra. A type-$N=1$ generalized $p$-form with values in $\mathcal{O}$ is
		\begin{equation}\label{3.1}
			\mathcal{W} = U+ V\xi,
		\end{equation}
		where $U$ is a $\mathfrak{g}$-valued $p$-form and $V$ is an $\mathfrak{h}$-valued $(p+1)$-form.
	\end{definition}
	
	\begin{definition}[Type-$N=2$ generalized forms valued in Lie 3-algebras]
		Let $\mathcal{L}=(\mathfrak{l},\mathfrak{h}, \mathfrak{g}; \beta, \alpha,\triangleright, \{\cdot,\cdot \})$ be a Lie 3-algebra. A type-$N=2$ generalized $p$-form with values in $\mathcal{L}$ is
		\begin{equation}\label{3.2}
			\mathcal{W} = U+V\xi^1+ V'\xi^2+ W\xi^1\xi^2,
		\end{equation}
		where $U$ is a $\mathfrak{g}$-valued $p$-form, $V$ and $V'$ are  $\mathfrak{h}$-valued $(p+1)$-forms, and $W$ is an $\mathfrak{l}$-valued $(p+2)$-form.
	\end{definition}
	
	\textbf{Graded bilinear bracket.}
	We now equip the higher algebra-valued generalized forms introduced above with a graded Lie-algebraic structure.  For type-$N=1$ generalized forms  $\mathcal{W}_1 = U_1+ V_1\xi$ and $\mathcal{W}_2 = U_2+ V_2\xi$ of degrees $p$ and $q$, respectively, the \textbf{graded Lie bracket} is defined by
	\begin{equation}\label{Lie-g}
		[\mathcal{W}_1, \mathcal{W}_2]=[U_1, U_2]+\big(U_1\triangleright V_2-(-1)^{pq} U_2\triangleright V_1\big)\xi.
	\end{equation}
	For type $N=2$, let
	\begin{align*}
		\mathcal{W}_1 &= U_1+ V_1\xi^1+ V_1'\xi^2+W_1\xi^1\xi^2,\\
		\mathcal{W}_2 &= U_2+ V_2\xi^1+ V_2'\xi^2+ W_2\xi^1\xi^2,
	\end{align*}
	be generalized forms of degrees $p$ and $q$, respectively. The bracket extends to type $N=2$ as 
	\begin{align}\label{Lie-gg}
		[\mathcal{W}_1, \mathcal{W}_2] = &\; [U_1, U_2] + \bigl( U_1 \triangleright V_2 - (-1)^{pq} U_2 \triangleright V_1 \bigr) \xi^1+ \bigl( U_1 \triangleright V_2' - (-1)^{pq} U_2 \triangleright V_1' \bigr) \xi^2 \nonumber \\
		&+ \bigl( U_1 \triangleright W_2 - (-1)^{pq} U_2 \triangleright W_1 + (-1)^{q+1} \{V_1, V_2'\} - (-1)^{pq+p+1} \{V_2, V_1'\} \bigr) \xi^1 \xi^2 .
	\end{align}
	The brackets \eqref{Lie-g} and \eqref{Lie-gg} satisfy the \textbf{graded Jacobi identity}: for generalized forms $\mathcal{W}_1,\mathcal{W}_2,\mathcal{W}_3$ of degrees $p$, $q$, and $r$,
	\begin{equation*}
		[\mathcal{W}_1, [\mathcal{W}_2, \mathcal{W}_3]] = [[\mathcal{W}_1, \mathcal{W}_2], \mathcal{W}_3] + (-1)^{pq} [\mathcal{W}_2, [\mathcal{W}_1, \mathcal{W}_3]].
	\end{equation*}
	This follows directly from the above definitions together with the defining identities of the underlying higher-algebra operations.
	
	\textbf{Generalized exterior derivative.}
	To obtain a differential graded Lie algebra, we introduce a generalized exterior derivative,
	$$\underline{d}: \Lambda^p_{(N)}(M, \mathcal{K}) \to \Lambda^{p+1}_{(N)}(M, \mathcal{K}),\qquad ( \mathcal{K}=\mathcal{O}, \mathcal{L}),$$
	defined componentwise as follows.
	\begin{itemize}
		\item For type $N=1$ ($\mathcal{K}=\mathcal{O}$) and $\mathcal{W}=U+V\xi$,
		\begin{equation}\label{ged-dd1}
			\underline{d}\mathcal{W}=dU +(-1)^{p+1}k\alpha(V)+ d V\xi,
		\end{equation}
		which generalizes \eqref{N=1:d}.
		\item For type $N=2$ ($\mathcal{K}=\mathcal{L}$) and $\mathcal{W}=U+V\xi^1+V'\xi^2+W\xi^1\xi^2$,
		\begin{align}\label{ged-dd2}
			\underline{d}\mathcal{W}=&dU +(-1)^{p+1}\alpha(k^1V+k^2V')+\big( dV+ (-1)^{p+1}k^2\beta(W)\big)\xi^1+\big( d V' + (-1)^{p}k^1\beta(W)\big)\xi^2\nonumber\\
			&+ dW\xi^1\xi^2,
		\end{align}
		which extends \eqref{N=2:d}.
	\end{itemize}
	A straightforward calculation shows that $\underline{d}$ is nilpotent, $\underline{d}^2=0$, and satisfies the graded Leibniz rule
	\begin{equation*}
		\underline{d}	[\mathcal{W}_1, \mathcal{W}_2]= [\underline{d}\mathcal{W}_1, \mathcal{W}_2]+(-1)^p 	[\mathcal{W}_1, \underline{d}\mathcal{W}_2].
	\end{equation*}
	It follows that 
	$\big(\Lambda^\bullet_{(1)}(M, \mathcal{O}), [\cdot, \cdot], \underline{d}\big)$ and  $\big(\Lambda^\bullet_{(2)}(M, \mathcal{L}), [\cdot, \cdot], \underline{d}\big)$ are differential graded Lie algebras.

	\textbf{Generalized graded symmetric bilinear pairings.} 
	We introduce a graded symmetric bilinear pairing on higher algebra-valued generalized forms, which will be used to define actions for HCS theories.
	Let $\mathcal{K}$ be $\mathcal{O}$ for $N=1$ and $\mathcal{L}$ for $N=2$. 
	The \textbf{generalized pairing}
	$$\langle \!\langle -, - \rangle\!\rangle: \Lambda^p_{(N)}(M, \mathcal{K}) \times \Lambda^q_{(N)}(M, \mathcal{K}) \to \Lambda^{p+q}_{(N)}(M, \mathbb{R}),$$
	is defined explicitly as follows.
	\begin{itemize}
		\item Type $N=1$: For $\mathcal{W}_1 = U_1+ V_1\xi$ and $\mathcal{W}_2 = U_2+ V_2\xi$,
		\begin{equation}\label{1-s}
			\langle \!\langle\mathcal{W}_1, \mathcal{W}_2\rangle\!\rangle=\langle U_1, V_2\rangle_{\mathfrak{g}, \mathfrak{h}}+(-1)^{pq}\langle U_2, V_1\rangle_{\mathfrak{g}, \mathfrak{h}},
		\end{equation}
		where $\langle-,-\rangle_{\mathfrak{g},\mathfrak{h}}$ is a bilinear form on the Lie 2-algebra $\mathcal{O}$ (see \ref{A.1}).
		\item Type $N=2$: For $\mathcal{W}_1 = U_1+ V_1\xi^1+ V_1'\xi^2+W_1\xi^1\xi^2$ and $\mathcal{W}_2 = U_2+ V_2\xi^1+ V_2'\xi^2+ W_2\xi^1\xi^2$, 
		\begin{equation}\label{2-s}
			\begin{aligned}
				\langle \!\langle\mathcal{W}_1, \mathcal{W}_2\rangle\!\rangle=&\langle U_1, W_2\rangle_{\mathfrak{g}, \mathfrak{l}}+(-1)^{pq}\langle U_2, W_1\rangle_{\mathfrak{g}, \mathfrak{l}}
				-k^1\langle V_1, V'_2\rangle_{\mathfrak{h}}-(-1)^{pq}k^2\langle V_2, V'_1\rangle_{\mathfrak{h}},
			\end{aligned}
		\end{equation}
		where $\langle -, - \rangle_{\mathfrak{g}, \mathfrak{l}}$ and $\langle -, - \rangle_{\mathfrak{h}}$ are invariant forms of the Lie 3-algebra $\mathcal{L}$ (see \ref{A.2}).
	\end{itemize}
	Besides, these pairings are bilinear and graded symmetric, namely,
	\begin{equation*}
		\langle \!\langle\mathcal{W}_1, \mathcal{W}_2\rangle\!\rangle=(-1)^{pq}\langle \!\langle\mathcal{W}_2, \mathcal{W}_1\rangle\!\rangle.
	\end{equation*}
	
	\textbf{Symmetric inner products for higher algebra-valued generalized forms.}
	We now define symmetric inner products on these spaces, which will be used to construct actions for HYM theories. 
	\begin{itemize}
		\item Type $N=1$: For  $\mathcal{W}_1 = U_1+ V_1\xi$ and $\mathcal{W}_2 = U_2+ V_2$,  the inner product is
		\begin{equation}\label{g-inn1}
			(\!(\mathcal{W}_1, \mathcal{W}_2)\!)=\int_M \big(\langle U_1, * U_2\rangle_{\mathfrak{g}}+ \langle V_1, * V_2\rangle_{\mathfrak{h}}\big).
		\end{equation}
		which follows from  \eqref{inn1}.
		\item Type $N=2$: For $\mathcal{W}_1 = U_1+ V_1\xi^1+ V_1'\xi^2+W_1\xi^1\xi^2$ and $\mathcal{W}_2 = U_2+ V_2\xi^1+ V_2'\xi^2+ W_2\xi^1\xi^2$, 
		\begin{align}\label{g-inn2}
			(\!(\mathcal{W}_1, \mathcal{W}_2)\!)=\int_M \big(\langle U_1, * U_2\rangle_{\mathfrak{g}}+ \langle V_1, * V_2\rangle_{\mathfrak{h}}
			+ \langle V'_1, * V'_2\rangle_{\mathfrak{h}}+\langle W_1, * W_2\rangle_{\mathfrak{l}}\big),
		\end{align}
		as induced by \eqref{inn2}.
	\end{itemize}

	\subsection{Higher group-valued generalized forms}\label{sec3.2}
	In this subsection, we extend the construction of ordinary gauge groups developed in Subsection~\ref{GGT} to Lie $2$- and $3$-groups, and we introduce the corresponding higher Maurer–Cartan forms. The resulting higher group-valued generalized $0$-forms provide the symmetry parameters for higher gauge transformations, which will be detailed in Section~\ref{sec5}.
	
	\subsubsection{Type $N=1$: Lie 2-group-valued generalized 0-forms}
	This part focuses on the symmetry parameters of $2$-gauge theory. We define the group of Lie $2$-group-valued generalized $0$-forms, study its Maurer–Cartan form, and establish properties of its adjoint action on the generalized form space $\Lambda^\bullet_{(1)}(M,\mathcal{O})$.
	
	\begin{definition}[\textbf{Lie $2$-group-valued generalized $0$-forms}]
		Let $\mathbf{O}=(H,G;\bar{\alpha},\bar{\triangleright})$ be a Lie $2$-group with corresponding Lie $2$-algebra $\mathcal{O}=(\mathfrak{h},\mathfrak{g};\alpha,\triangleright)$. A $\mathbf{O}$-valued generalized $0$-form is
		\begin{equation}\label{tra-ged}
			\mathcal{G}=(1+\phi\xi)g
		\end{equation}
		with $\phi \in \Lambda^1(M, \mathfrak{h})$ and $g \in C^{\infty}(M, G)$. 
	\end{definition}
	
	By definition, $\mathcal{G}=(1+\phi\xi)g$ has components in two groups: the group
	$\mathbf{G}_{(0)}=\{g\}$ of $G$-valued functions and the additive group
	$\mathbf{H}_{(0)}=\{\phi\}$ of $\mathfrak{h}$-valued $1$-forms. The action of
	$\mathbf{G}_{(0)}$ on $\mathbf{H}_{(0)}$ is induced by the Lie $2$-group action
	$\triangleright: G\times \mathfrak{h}\to \mathfrak{h}$ specified in \ref{A.1}.
	Consequently, the set $\mathbf{G}_{(1)}=\{\mathcal{G}\}$ carries a natural group structure. For
	$\mathcal{G}=(1+\phi\xi)g$ and $\mathcal{G}'=(1+\phi'\xi)g'$, the product and inverse are
	\begin{align*}
		\mathcal{G}\mathcal{G'}&=\big(1+ (\phi + g\triangleright \phi')\xi\big)gg',\\
		\mathcal{G}^{-1}&=\big(1 -g^{-1}\triangleright \phi\xi\big)g^{-1},
	\end{align*}
	which are induced by \eqref{gg-1} and \eqref{g-1}. The identity element is $1$. This shows that
	$\mathcal{G}$ is the direct higher analogue of \eqref{g-11} via the recursive scheme \eqref{gN}.

	In analogy with the ordinary Maurer–Cartan form, we define the \textbf{2-Maurer–Cartan form} (generalized left fundamental 1-form) associated with $\mathcal{G}$ by
	\begin{align}\label{2MC}
		l_2=\mathcal{G}^{-1}\underline{d}\mathcal{G}=g^{-1}dg - kg^{-1}\alpha(\phi)g+\big( g^{-1}\triangleright (d\phi -k \phi \phi)\big)\xi,
		%	\underline{d}\mathcal{G}\mathcal{G}^{-1}&=\Big(dgg^{-1}-k \alpha(\phi), d\phi +k \phi \phi -(dg g^{-1})\triangleright\phi\Big).
	\end{align}
	where $k$ is the constant entering the definition of $\underline{d}$ for type $N=1$. This expression extends the earlier quantity $g^{-1}_{(1)}\underline{d}g_{(1)}$ in \eqref{A-g} to the Lie $2$-group-valued setting.
	
	The form $l_2$ obeys a higher analogue of the classical Maurer–Cartan equation.
	\begin{theorem}
		The 2-Maurer–Cartan form $l_2$ 
		satisfies the \textbf{2-Maurer–Cartan equation}
		\begin{align*}
			\underline{d}l_2+\dfrac{1}{2}[l_2, l_2]=0.
		\end{align*}
	\end{theorem}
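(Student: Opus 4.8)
The plan is to give two complementary arguments and to lead with the structural one, which mirrors the proof of the classical Maurer--Cartan equation. First I would work in the associative algebra of matrix-valued generalized $1$-forms, in which $\mathcal{G}=(1+\phi\xi)g$ is an invertible degree-$0$ element and $\underline{d}$ is the nilpotent graded derivation fixed by \eqref{rule} with $\underline{d}\xi=k$. Differentiating the identity $\mathcal{G}^{-1}\mathcal{G}=1$ and using the graded Leibniz rule gives $\underline{d}(\mathcal{G}^{-1})=-\mathcal{G}^{-1}(\underline{d}\mathcal{G})\mathcal{G}^{-1}=-l_2\mathcal{G}^{-1}$. Substituting this into $\underline{d}l_2=\underline{d}(\mathcal{G}^{-1}\underline{d}\mathcal{G})$ and invoking $\underline{d}^2=0$ yields $\underline{d}l_2=-\,l_2\underline{\wedge}l_2$. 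The equation then follows once I show that on $\mathcal{O}$-valued forms the structured bracket \eqref{Lie-g} reproduces the graded self-commutator of the associative product, i.e. $\tfrac12[l_2,l_2]=l_2\underline{\wedge}l_2$; this is a short check using \eqref{N=1:wedge} together with the fact that in the matrix realization the action $\triangleright$ and the map $\alpha$ are induced by commutators and an embedding, so that $X\triangleright Y=X\underline{\wedge}Y-Y\underline{\wedge}X$ for the $\mathfrak{g}$- and $\mathfrak{h}$-components of $l_2$.

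As a self-contained alternative (and a verification that the structured operations are the correct ones), I would compute in components. Writing $l_2=X+Y\xi$, where by \eqref{2MC} we have $X=g^{-1}dg-k\,g^{-1}\alpha(\phi)g$, a $\mathfrak{g}$-valued $1$-form, and $Y=g^{-1}\triangleright(d\phi-k\phi\phi)$, an $\mathfrak{h}$-valued $2$-form, and applying \eqref{ged-dd1} and \eqref{Lie-g} with $p=q=1$, the equation $\underline{d}l_2+\tfrac12[l_2,l_2]=0$ splits into the two independent identities $dX+\tfrac12[X,X]+k\,\alpha(Y)=0$ and $dY+X\triangleright Y=0$. For the $\mathfrak{g}$-component I would set $\omega=g^{-1}dg$ and use the $G$-equivariance $g^{-1}\alpha(\phi)g=\alpha(g^{-1}\triangleright\phi)$ together with $d\alpha=\alpha d$ to rewrite $X=\omega-k\,\alpha(g^{-1}\triangleright\phi)$; the terms of order $k^0$ collapse to the classical Maurer--Cartan equation $d\omega+\tfrac12[\omega,\omega]=0$, the terms of order $k$ cancel by the infinitesimal equivariance $\alpha(\omega\triangleright h)=[\omega,\alpha(h)]$, and the terms of order $k^2$ cancel because $\alpha$ is a homomorphism of Lie algebras (so $\alpha(\phi\phi)=\tfrac12[\alpha(\phi),\alpha(\phi)]$), the relevant axioms being those recalled in Appendix~\ref{A.1}.

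The $\mathfrak{h}$-component is where the real work and the main obstacle lie. Setting $\Psi=d\phi-k\phi\phi$ so that $Y=g^{-1}\triangleright\Psi$, the interaction of $\underline{d}$ with the action produces $dY=g^{-1}\triangleright d\Psi-\omega\triangleright Y$, while \eqref{Lie-g} gives $X\triangleright Y=\omega\triangleright Y-k\,\alpha(g^{-1}\triangleright\phi)\triangleright Y$; the two $\omega\triangleright Y$ terms cancel, leaving $dY+X\triangleright Y=g^{-1}\triangleright\bigl(d\Psi-k[\phi,\Psi]_{\mathfrak{h}}\bigr)$ after using the Peiffer identity $\alpha(h)\triangleright h'=[h,h']_{\mathfrak{h}}$ and the $G$-equivariance of the $\mathfrak{h}$-bracket. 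The proof concludes with the elementary identity $d\Psi=k[\phi,\Psi]_{\mathfrak{h}}$, which holds since $d\Psi=-k(d\phi\,\phi-\phi\,d\phi)$ and $k[\phi,\Psi]_{\mathfrak{h}}=k(\phi\,d\phi-d\phi\,\phi)$ agree once the cubic-in-$\phi$ terms cancel. The delicate point throughout is the bookkeeping of the degree-dependent signs and the repeated use of the Peiffer identity and the various forms of $G$-equivariance to absorb every $k$- and $k^2$-dependent term; in the structural route all of this content is compressed into the single identification $\tfrac12[l_2,l_2]=l_2\underline{\wedge}l_2$.
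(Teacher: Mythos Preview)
Your component computation is correct and is essentially the paper's own proof: both reduce $\underline{d}l_2+\tfrac12[l_2,l_2]=0$ to the classical Maurer--Cartan equation for $g^{-1}dg$, together with cancellations coming from $G$-equivariance \eqref{gY2}, the equivariance/Peiffer identities \eqref{YyY'}, and the vanishing of the cubic term $[\phi,\phi\phi]$. Your abbreviations $X,Y,\omega,h,\Psi$ make the structure of the cancellations somewhat more transparent than the paper's fully expanded computation, but the ingredients and the logic are the same.

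The structural argument, however, carries a hidden hypothesis that is not available in the paper's general setting. Writing $\underline{d}l_2=-\,l_2\underline{\wedge}l_2$ and then identifying $\tfrac12[l_2,l_2]=l_2\underline{\wedge}l_2$ presupposes an ambient associative algebra in which both $G$ and $\mathfrak{h}$ sit as matrices of a common size and in which $\triangleright$ is the matrix commutator and $\alpha$ a matrix embedding. For a general crossed module $(\mathfrak{h},\mathfrak{g};\alpha,\triangleright)$ no such realization is assumed: the product on $\mathbf{G}_{(1)}$ and the bracket \eqref{Lie-g} are defined intrinsically through $\triangleright$, not through an ambient associative product, and the symbol $(1+\phi\xi)g$ is a mnemonic for that intrinsic group law rather than a literal matrix product. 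So the Leibniz-rule step $\underline{d}(\mathcal{G}^{-1}\mathcal{G})=\underline{d}(\mathcal{G}^{-1})\mathcal{G}+\mathcal{G}^{-1}\underline{d}\mathcal{G}$ and the equality $\tfrac12[l_2,l_2]=l_2\underline{\wedge}l_2$ would each need a separate justification in terms of $\triangleright$ and $\alpha$---which in the end amounts to redoing the component computation. Since you already supply that computation as a self-contained alternative, the proof stands; just be aware that the ``structural'' route is a heuristic here, not an independent argument.
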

	\begin{proof}
		Starting from \eqref{ged-dd1} and \eqref{gY2}, one obtains
		\begin{align}\label{4.7}
			\underline{d}l_2
			=& dg^{-1}dg - k\,dg^{-1}\alpha(\phi)g + k\,g^{-1}\alpha(\phi)dg - k^{2} g^{-1}\triangleright \alpha(\phi \phi) \nonumber \\
			& + \Bigl( dg^{-1}\triangleright \bigl(d\phi - k \phi \phi\bigr) + k\,g^{-1}\triangleright [\phi, d\phi] \Bigr)\xi .
		\end{align}
		On the other hand, using the bracket in \eqref{Lie-g} we obtain
		\begin{align}\label{4.8}
			[l_2,l_2] &=\big[g^{-1}dg - k g^{-1}\alpha(\phi)g,\; g^{-1}dg - k g^{-1}\alpha(\phi)g\big] + 2\big(g^{-1}dg - k g^{-1}\alpha(\phi)g\big)\!\triangleright\!\big(g^{-1}\!\triangleright\!(d\phi-k\phi\phi)\big)\xi \nonumber\\%%%%%%
			&=2g^{-1}dg\,g^{-1}dg 
			- 2k\big[g^{-1}\alpha(\phi)g,\;g^{-1}dg\big] 
		+ (k)^{2}\big[g^{-1}\alpha(\phi)g,\;g^{-1}\alpha(\phi)g\big] \nonumber\\
			&\quad+\Big(2(g^{-1}dg)\!\triangleright\!\big(g^{-1}\!\triangleright\!(d\phi-k\phi\phi)\big) -2k\big(g^{-1}\alpha(\phi)g\big)\!\triangleright\!\big(g^{-1}\!\triangleright\!(d\phi-k\phi\phi)\big)\Big)\xi .
		\end{align}
		The following identities simplify these expressions:
		\begin{align}
			&g^{-1}dg\,g^{-1}dg = -dg^{-1}dg, \label{4.9} \\
			&\bigl[g^{-1}\alpha(\phi)g,\; g^{-1}dg\bigr] = g^{-1}\alpha(\phi)dg - dg^{-1}\alpha(\phi)g, \label{4.10} \\
			&	\bigl[g^{-1}\alpha(\phi)g,\; g^{-1}\alpha(\phi)g\bigr] = 2 g^{-1}\alpha(\phi\phi)g. \label{4.11}
		\end{align}
		Moreover, using \eqref{YyY'} and \eqref{A.6} we have
		\begin{align}
			&(g^{-1}dg)\triangleright(g^{-1}\triangleright ( d\phi -k \phi\phi))
			%=(g^{-1}dg g^{-1})\triangleright ( d\phi -k \phi\phi)\nonumber\\
			=-dg^{-1}\triangleright ( d\phi -k \phi\phi),\nonumber\\
			&(g^{-1}\alpha(\phi)g )\triangleright\big(g^{-1}\triangleright ( d\phi -k \phi\phi)\big)
			%&=g^{-1}\triangleright \big(\alpha(\phi)\triangleright(d\phi - k \phi\phi)\big)\nonumber\\
			=g^{-1}\triangleright [\phi, d\phi].\label{4.13}
		\end{align}
		Substituting Eqs.~\eqref{4.9}--\eqref{4.13} into \eqref{4.8} and combining  the result to \eqref{4.7} yields
		\begin{equation*}
			\underline{d}\ell_2 + \dfrac{1}{2}[\ell_2, \ell_2] = 0,
		\end{equation*}
		which completes the proof. 
	\end{proof}
	
	Given a $\mathfrak{g}$-valued 1-form $A\in\Lambda^1(M,\mathfrak{g})$ and an $\mathfrak{h}$-valued 2-form $B\in\Lambda^2(M,\mathfrak{h})$, we introduce
	\begin{equation*}
		\textbf{H}_{(1)}=\{h_{(1)}=A+B\xi\}\subset \Lambda^\bullet_{(1)}(M, \mathcal{O}).
	\end{equation*} 
	The adjoint action of $\mathbf{G}_{(1)}$ on $\mathbf{H}_{(1)}$ is obtained by restricting the natural action of $\mathbf{G}_{(1)}$ on the full space $\Lambda^{\bullet}_{(1)}(M,\mathcal{O})$. For a general element $\mathcal{W}=U+V\xi \in \Lambda^{p}_{(1)}(M,\mathcal{O})$, this action is
	\begin{align}\label{ad}
		\operatorname{\textbf{Ad}}_{\mathcal{G}}\mathcal{W}=\operatorname{Ad}_g U+ \big(g\triangleright V - \operatorname{Ad}_g U \triangleright\phi\big)\xi.
	\end{align}

	\begin{theorem}\label{4.2}
		Let $\mathcal{W}_1=U_1+V_1\xi$ and $\mathcal{W}_2=U_2+V_2\xi$ be generalized forms of degrees $p$ and $q$, respectively, in $\Lambda^{\bullet}_{(1)}(M,\mathcal{O})$. 
		Then the adjoint action \eqref{ad} commutes with the graded Lie bracket \eqref{Lie-g}, namely,
		\begin{equation*}
			[\operatorname{\textbf{Ad}}_{\mathcal{G}}\mathcal{W}_1, \operatorname{\textbf{Ad}}_{\mathcal{G}}\mathcal{W}_2 ]=\operatorname{\textbf{Ad}}_{\mathcal{G}}[\mathcal{W}_1, \mathcal{W}_2].
		\end{equation*}
	\end{theorem}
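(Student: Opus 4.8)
The plan is to verify the identity componentwise, splitting every generalized form into its $\mathfrak{g}$-valued part (the $\xi$-independent term) and its $\mathfrak{h}$-valued part (the coefficient of $\xi$). First I would compute the right-hand side: applying \eqref{Lie-g} gives $[\mathcal{W}_1,\mathcal{W}_2]=[U_1,U_2]+\big(U_1\triangleright V_2-(-1)^{pq}U_2\triangleright V_1\big)\xi$, a generalized form of degree $p+q$, and then applying the adjoint action \eqref{ad} yields the $\mathfrak{g}$-part $\operatorname{Ad}_g[U_1,U_2]$ together with the $\mathfrak{h}$-part $g\triangleright\big(U_1\triangleright V_2-(-1)^{pq}U_2\triangleright V_1\big)-\operatorname{Ad}_g[U_1,U_2]\triangleright\phi$. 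For the left-hand side I would first record $\operatorname{\textbf{Ad}}_{\mathcal{G}}\mathcal{W}_i=\operatorname{Ad}_g U_i+\big(g\triangleright V_i-\operatorname{Ad}_g U_i\triangleright\phi\big)\xi$ and then bracket the two transformed forms using \eqref{Lie-g} once more.

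The $\mathfrak{g}$-parts then coincide immediately, since $\operatorname{Ad}_g$ is a Lie-algebra automorphism and hence $[\operatorname{Ad}_g U_1,\operatorname{Ad}_g U_2]=\operatorname{Ad}_g[U_1,U_2]$. The real content lies in the $\mathfrak{h}$-part. Expanding the $\xi$-coefficient of $[\operatorname{\textbf{Ad}}_{\mathcal{G}}\mathcal{W}_1,\operatorname{\textbf{Ad}}_{\mathcal{G}}\mathcal{W}_2]$ and separating the terms linear in the $V_i$ from those linear in $\phi$, I expect two groups of contributions. The $V$-linear terms, namely $\operatorname{Ad}_g U_1\triangleright(g\triangleright V_2)-(-1)^{pq}\operatorname{Ad}_g U_2\triangleright(g\triangleright V_1)$, should be matched to $g\triangleright\big(U_1\triangleright V_2-(-1)^{pq}U_2\triangleright V_1\big)$; this is exactly the equivariance of the $G$-action with respect to the induced $\mathfrak{g}$-action, $g\triangleright(X\triangleright V)=(\operatorname{Ad}_g X)\triangleright(g\triangleright V)$, one of the structural identities of the Lie $2$-algebra collected in Appendix~\ref{A.1} (the same relations \eqref{YyY'} and \eqref{A.6} already invoked in the $2$-Maurer--Cartan computation).

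The $\phi$-linear terms produce $-\operatorname{Ad}_g U_1\triangleright(\operatorname{Ad}_g U_2\triangleright\phi)+(-1)^{pq}\operatorname{Ad}_g U_2\triangleright(\operatorname{Ad}_g U_1\triangleright\phi)$, and these must collapse to $-\operatorname{Ad}_g[U_1,U_2]\triangleright\phi$. Setting $X=\operatorname{Ad}_g U_1$ and $Y=\operatorname{Ad}_g U_2$, so that $[X,Y]=\operatorname{Ad}_g[U_1,U_2]$, the required relation is the graded representation property $X\triangleright(Y\triangleright\phi)-(-1)^{pq}Y\triangleright(X\triangleright\phi)=[X,Y]\triangleright\phi$, i.e.\ the statement that $\triangleright$ is a graded action of $\mathfrak{g}$-valued forms on $\mathfrak{h}$-valued forms. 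Note that only this representation property and the equivariance are needed; neither the Peiffer identity nor $\alpha$ enters. I expect the main obstacle to be the sign bookkeeping in this last step: one must confirm that the $(-1)^{pq}$ factors arising from the two applications of the bracket \eqref{Lie-g} align exactly with the graded weights in the representation identity, so that the two $\phi$-terms recombine with the single correct sign. Once the equivariance and representation identities are in place, collecting the matched $\mathfrak{g}$- and $\mathfrak{h}$-components yields the claimed commutation $[\operatorname{\textbf{Ad}}_{\mathcal{G}}\mathcal{W}_1,\operatorname{\textbf{Ad}}_{\mathcal{G}}\mathcal{W}_2]=\operatorname{\textbf{Ad}}_{\mathcal{G}}[\mathcal{W}_1,\mathcal{W}_2]$.
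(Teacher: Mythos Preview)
Your proposal is correct and follows essentially the same componentwise computation as the paper's proof, which simply expands both sides and simplifies. You are in fact more explicit than the paper about naming the two structural inputs (the $G$-equivariance of $\triangleright$ and the graded representation identity for $\mathfrak{g}$-valued forms acting on $\mathfrak{h}$-valued forms); the only slip is a citation: the equivariance relation you need is \eqref{A.7}, not \eqref{YyY'} or \eqref{A.6}.
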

	\begin{proof}
		A direct computation gives
		\begin{align*}
			&[\operatorname{Ad}_{\mathcal{G}}\mathcal{W}_1, \operatorname{Ad}_{\mathcal{G}}\mathcal{W}_2 ]\nonumber\\
			=&[\operatorname{Ad}_g U_1+ (g \triangleright V_1-\operatorname{Ad}_g U_1 \triangleright\phi)\xi, \operatorname{Ad}_g U_2+( g \triangleright V_2-\operatorname{Ad}_g U_2 \triangleright\phi)\xi]\nonumber\\
			=&[\operatorname{Ad}_g U_1, \operatorname{Ad}_g U_2]+ \big(\operatorname{Ad}_g U_1 \triangleright(g \triangleright V_2 -\operatorname{Ad}_g U_2 \triangleright\phi)\nonumber\\
			& - (-1)^{pq} \operatorname{Ad}_g U_2 \triangleright(g \triangleright V_1 - \operatorname{Ad}_g U_1 \triangleright\phi)\big)\xi\nonumber\\
			=&\operatorname{Ad}_g[U_1, U_2]+\big( g\triangleright (U_1\triangleright V_2- (-1)^{pq}U_2 \triangleright V_1)-\operatorname{Ad}_g[U_1, U_2]\triangleright\phi\big)\xi\nonumber\\
			=&\operatorname{Ad}_{\mathcal{G}}[\mathcal{W}_1, \mathcal{W}_2].
		\end{align*}
	\end{proof}
	
	\begin{theorem}\label{4.3}
		Let $\mathcal{W}_1$ and $\mathcal{W}_2$ be generalized forms in \\ $\Lambda^{\bullet}_{(1)}(M,\mathcal{O})$. Then the type $N=1$ generalized bilinear form $\langle \!\langle -, - \rangle\!\rangle$ defined in \eqref{1-s} is invariant under the $\mathbf{G}_{(1)}$-action, i.e.
		\begin{equation}
			\langle \!\langle \operatorname{\textbf{Ad}}_{\mathcal{G}}\mathcal{W}_1, \operatorname{\textbf{Ad}}_{\mathcal{G}}\mathcal{W}_2\rangle\!\rangle=\langle \!\langle \mathcal{W}_1,\mathcal{W}_2\rangle\!\rangle.
		\end{equation}
	\end{theorem}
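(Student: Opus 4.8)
\textbf{Proof proposal for Theorem~\ref{4.3}.}

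The plan is to verify the invariance directly by substituting the explicit form of the adjoint action \eqref{ad} into the definition \eqref{1-s} of the type $N=1$ pairing, and then reducing the result using the invariance properties of the underlying bilinear form $\langle -,-\rangle_{\mathfrak{g},\mathfrak{h}}$ on the Lie 2-algebra $\mathcal{O}$. Writing $\mathcal{W}_1=U_1+V_1\xi$ (degree $p$) and $\mathcal{W}_2=U_2+V_2\xi$ (degree $q$), the action \eqref{ad} gives $\operatorname{\textbf{Ad}}_{\mathcal{G}}\mathcal{W}_1=\operatorname{Ad}_g U_1+(g\triangleright V_1-\operatorname{Ad}_g U_1\triangleright\phi)\xi$, and similarly for $\mathcal{W}_2$. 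Feeding these into \eqref{1-s} produces two paired terms, $\langle \operatorname{Ad}_g U_1,\,g\triangleright V_2-\operatorname{Ad}_g U_2\triangleright\phi\rangle_{\mathfrak{g},\mathfrak{h}}$ and its graded counterpart with indices swapped.

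First I would split each paired term into a ``principal'' piece and a ``correction'' piece coming from the $-\operatorname{Ad}_g U\triangleright\phi$ contribution. The principal pieces, $\langle \operatorname{Ad}_g U_1,\,g\triangleright V_2\rangle_{\mathfrak{g},\mathfrak{h}}$ and $(-1)^{pq}\langle \operatorname{Ad}_g U_2,\,g\triangleright V_1\rangle_{\mathfrak{g},\mathfrak{h}}$, should collapse to $\langle U_1,V_2\rangle_{\mathfrak{g},\mathfrak{h}}$ and $(-1)^{pq}\langle U_2,V_1\rangle_{\mathfrak{g},\mathfrak{h}}$ respectively, by the $G$-invariance of the pairing $\langle -,-\rangle_{\mathfrak{g},\mathfrak{h}}$ under the simultaneous adjoint action on $\mathfrak{g}$ and the $\triangleright$-action on $\mathfrak{h}$; this is exactly the invariance axiom recorded for the Lie 2-algebra bilinear form in Appendix~\ref{A.1}, and reproduces the right-hand side $\langle \!\langle \mathcal{W}_1,\mathcal{W}_2\rangle\!\rangle$. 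The crux is then to show that the correction pieces cancel. These are $-\langle \operatorname{Ad}_g U_1,\,\operatorname{Ad}_g U_2\triangleright\phi\rangle_{\mathfrak{g},\mathfrak{h}}$ and $-(-1)^{pq}\langle \operatorname{Ad}_g U_2,\,\operatorname{Ad}_g U_1\triangleright\phi\rangle_{\mathfrak{g},\mathfrak{h}}$.

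The main obstacle I expect is establishing this cancellation, which requires the infinitesimal invariance (``cyclicity'') compatibility between $\langle -,-\rangle_{\mathfrak{g},\mathfrak{h}}$ and the action $\triangleright$: namely an identity of the schematic form $\langle X,\,Y\triangleright\phi\rangle_{\mathfrak{g},\mathfrak{h}}+(-1)^{|X||Y|}\langle Y,\,X\triangleright\phi\rangle_{\mathfrak{g},\mathfrak{h}}=0$ for $\mathfrak{g}$-valued forms $X,Y$ after setting $X=\operatorname{Ad}_g U_1$, $Y=\operatorname{Ad}_g U_2$. This is the graded version of the statement that the pairing of $\mathfrak g$ with $\mathfrak h$ pairs the adjoint-type action antisymmetrically, and it is part of the defining invariance conditions of the bilinear form on the Lie 2-algebra (see Appendix~\ref{A.1}); I would invoke it directly rather than re-deriving it. Once the two correction terms are seen to be negatives of one another under the graded-symmetry sign $(-1)^{pq}$, they cancel, leaving precisely $\langle \!\langle \mathcal{W}_1,\mathcal{W}_2\rangle\!\rangle$ and completing the proof. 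The only care needed is bookkeeping of the Koszul signs arising from commuting the degree-$p$ and degree-$q$ form factors past one another, which mirrors the sign structure already displayed in the proof of Theorem~\ref{4.2}.
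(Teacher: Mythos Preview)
Your proposal is correct and follows essentially the same route as the paper's proof: substitute the adjoint action into the pairing, use $G$-invariance \eqref{gin} to reduce the principal pieces, and cancel the $\phi$-correction terms via the graded antisymmetry relation derived from \eqref{XXY}. The only cosmetic difference is that the paper first strips the $g$'s (using \eqref{A.7} to rewrite $\operatorname{Ad}_g U_i\triangleright\phi = g\triangleright(U_i\triangleright(g^{-1}\triangleright\phi))$) and then applies \eqref{XXY} to $U_1,U_2,g^{-1}\triangleright\phi$, whereas you apply the antisymmetry directly to $\operatorname{Ad}_g U_1,\operatorname{Ad}_g U_2,\phi$; both are equivalent.
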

	\begin{proof}
		Let $\mathcal{W}_1=U_1+V_1\xi\in \Lambda^{p}_{(1)}(M,\mathcal{O})$ and $\mathcal{W}_2=U_2+V_2\xi \in \Lambda^{q}_{(1)}(M,\mathcal{O})$.
		Using \eqref{1-s} and \eqref{ad}, we compute
		\begin{align*}
			&\langle \!\langle \operatorname{\textbf{Ad}}_{\mathcal{G}}\mathcal{W}_1, \operatorname{\textbf{Ad}}_{\mathcal{G}}\mathcal{W}_2\rangle\!\rangle\\
			=&
			\langle \!\langle \operatorname{Ad}_g U_1+\big( g\triangleright V_1 - \operatorname{Ad}_g U_1 \triangleright\phi\big)\xi, \operatorname{Ad}_g U_2+\big( g\triangleright V_2 - \operatorname{Ad}_g U_2 \triangleright\phi\big)\xi \rangle\!\rangle\\
			=&\langle \operatorname{Ad}_g U_1, g\triangleright V_2 - \operatorname{Ad}_g U_2 \triangleright\phi\rangle_{\mathfrak{g}, \mathfrak{h}} +(-1)^{pq}\langle \operatorname{Ad}_g U_2, g\triangleright V_1- \operatorname{Ad}_g U_1 \triangleright\phi\rangle_{\mathfrak{g}, \mathfrak{h}}.
		\end{align*}
		By the $G$-invariance of $\langle-,-\rangle_{\mathfrak{g},\mathfrak{h}}$ in \eqref{gin} and identity \eqref{A.7}, this becomes
		\begin{align*}
			\langle \!\langle \operatorname{\textbf{Ad}}_{\mathcal{G}}\mathcal{W}_1, \operatorname{\textbf{Ad}}_{\mathcal{G}}\mathcal{W}_2\rangle\!\rangle
			=\langle  U_1,  V_2 -  U_2 \triangleright(g^{-1}\triangleright\phi)\rangle_{\mathfrak{g}, \mathfrak{h}}+(-1)^{pq}\langle  U_2, V_1 -  U_1 \triangleright(g^{-1}\triangleright\phi)\rangle_{\mathfrak{g}, \mathfrak{h}}.
		\end{align*}
		The terms involving $g^{-1}\triangleright\phi$ cancel by the symmetry relation \eqref{XXY}, which implies
		\begin{align*}
			\langle U_1, U_2\triangleright (g^{-1}\triangleright \phi)\rangle_{\mathfrak{g}, \mathfrak{h}}=(-1)^{pq+1}\langle U_2, U_1\triangleright (g^{-1}\triangleright \phi)\rangle_{\mathfrak{g}, \mathfrak{h}}.
		\end{align*}
		Therefore,
		\begin{align*}
			\langle \!\langle \operatorname{\textbf{Ad}}_{\mathcal{G}}\mathcal{W}_1, \operatorname{\textbf{Ad}}_{\mathcal{G}}\mathcal{W}_2\rangle\!\rangle&=\langle U_1, V_2\rangle_{\mathfrak{g}, \mathfrak{h}}+(-1)^{pq}\langle U_2, V_1\rangle_{\mathfrak{g}, \mathfrak{h}}\\
			&=\langle \!\langle\mathcal{W}_1, \mathcal{W}_2\rangle\!\rangle,
		\end{align*}
		which proves the claim.
		
	\end{proof}

	\subsubsection{Type $N=2$: Lie $3$-group-valued generalized $0$-forms}
	We next consider the recursive scheme for type $N=2$. This leads to the definition of Lie $3$-group-valued generalized $0$-forms, which serve as the gauge symmetry parameters of the $3$-gauge theory.
	
	\begin{definition}[\textbf{Lie $3$-group-valued generalized $0$-forms}]
		Let $\textbf{L}=(L, H, G;\beta, \alpha, \triangleright, \left\{\cdot,\cdot\right\})$ be a Lie $3$-group with associated Lie $3$-algebra $\mathcal{L}=(\mathfrak{l}, \mathfrak{h}, \mathfrak{g};\beta, \alpha, \triangleright, \left\{\cdot,\cdot\right\})$.  An $\mathbf{L}$-valued generalized $0$-form is an element of the form
		\begin{equation}\label{G-2}
			\mathcal{G}=(1+ \phi_1\xi^1+ \phi_2\xi^2+ \psi\xi^1\xi^2)g
		\end{equation}
		with $\phi_1, \phi_2\in \Lambda^1(M, \mathfrak{h})$, $\psi \in \Lambda^2(M, \mathfrak{l})$ and $g\in C^{\infty}(M, G)$. 
	\end{definition}
	
	Expression \eqref{G-2} is a higher analogue of the type $N=2$ group element in \eqref{g-2}. Defining $\mathbf{G}_{(2)}=\{\mathcal{G}\}$, this set inherits a group structure from the product $g_{(2)}g'_{(2)}$ described in Subsection~\ref{GGT}. Explicitly, for $\mathcal{G}$ and
	\[
	\mathcal{G}'=\bigl(1+\phi'_1\xi^1+\phi'_2\xi^2+\psi'\,\xi^1\xi^2\bigr)g',
	\]
	the product is
	\begin{equation*}
		\mathcal{G}\mathcal{G'}=\big(1+( \phi_1 + g\triangleright \phi'_1)\xi^1+( \phi_2 + g\triangleright \phi'_2)\xi^2+ (\psi + g\triangleright \psi')\xi^1\xi^2\big)gg'.
	\end{equation*}
	The inverse, following directly from \eqref{gin-2}, is 
	\begin{equation}
		\mathcal{G}^{-1}=\big(1 -g^{-1}\triangleright\phi_1\xi^1 -g^{-1}\triangleright\phi_2\xi^2 -g^{-1}\triangleright \psi\xi^1\xi^2\big)g^{-1}.
	\end{equation}

	We now specialize to the simplifying case considered in {2003R2}, where
	\begin{equation}
		\mathcal{G}=(1+ \phi\xi^1+ \psi\xi^1\xi^2)g.
	\end{equation}
	In analogy with the type $N=1$ case, we define the $3$-Maurer–Cartan form by
	\begin{align}\label{3MC}
		l_3=&\mathcal{G}^{-1}\underline{d}\mathcal{G}\nonumber\\
		=&g^{-1}dg - k^1g^{-1}\alpha(\phi)g\nonumber\\
		&+ g^{-1}\triangleright \big(d\phi -k^1 \phi \phi -k^2\beta(\psi)\big)\xi^1 +k^1g^{-1}\triangleright \beta(\psi)\xi^2\nonumber\\
		&+ g^{-1}\triangleright\big( d\psi -k^1\phi \triangleright' \psi\big)\xi^1\xi^2.
	\end{align}
	
	\begin{theorem}
		The form $l_3$ satisfies the \textbf{3-Maurer–Cartan equation} 
		\begin{equation}\label{3MC-eq}
			\underline{d}l_3+\dfrac{1}{2}[l_3, l_3]=0,
		\end{equation}
		provided the constants in the nilpotent operator $\underline{d}$ obey $k^1=k^2$.
	\end{theorem}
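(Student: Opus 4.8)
The plan is to verify the 3-Maurer--Cartan equation by direct computation in the type $N=2$ basis, exactly paralleling the proof of the 2-Maurer--Cartan equation but tracking the additional $\xi^2$ and $\xi^1\xi^2$ components. First I would apply the generalized exterior derivative formula \eqref{ged-dd2} to $l_3$ in \eqref{3MC}, reading off its four ordinary-form components (the scalar, $\xi^1$, $\xi^2$, and $\xi^1\xi^2$ parts) and differentiating each. In parallel, I would expand $\tfrac{1}{2}[l_3,l_3]$ using the type $N=2$ bracket \eqref{Lie-gg}, which contributes the $U_1\triangleright V_2$-type terms in the $\xi^1,\xi^2$ slots and both the $U\triangleright W$ and the $\{V,V'\}$ terms in the $\xi^1\xi^2$ slot. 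The strategy is then to show that the four components of $\underline{d}l_3+\tfrac12[l_3,l_3]$ vanish separately.

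The key computational steps mirror the $N=1$ case: I would replace each term of the form $g^{-1}dg\,g^{-1}dg$ using the identity \eqref{4.9}, simplify brackets of the shape $[g^{-1}\alpha(\phi)g,\,g^{-1}dg]$ via \eqref{4.10}--\eqref{4.11}, and convert the triangle-action terms $(g^{-1}dg)\triangleright(\cdots)$ and $(g^{-1}\alpha(\phi)g)\triangleright(\cdots)$ using \eqref{4.13} together with the equivariance identities \eqref{YyY'} and \eqref{A.6}. The new ingredients are the terms carrying the $\beta$-map and the Peiffer bracket $\{\cdot,\cdot\}$: in the $\xi^2$ and $\xi^1\xi^2$ components one must combine the $k^2\beta(W)$ contributions from $\underline{d}$ \eqref{ged-dd2} against the $\{V,V'\}$ contributions from the bracket \eqref{Lie-gg}, and reconcile the $\phi\triangleright'\psi$ term appearing in $l_3$ with the action of $\underline{d}$ on it.

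I expect the main obstacle to be the $\xi^1\xi^2$ component, where all structures collide at once: the derivative of $g^{-1}\triangleright(d\psi-k^1\phi\triangleright'\psi)$, the $k^1\beta$-cross terms linking the $\xi^1$ and $\xi^2$ slots, and the Peiffer brackets $\{V_1,V_2'\}$ generated by squaring $l_3$. This is precisely where the hypothesis $k^1=k^2$ is forced: the $\beta(\psi)$-terms entering with coefficient $k^2$ through $\underline{d}$ must cancel against $k^1$-weighted contributions from the bracket, and these cancellations only close when the two constants coincide. I would therefore isolate the coefficient of each independent ordinary-form monomial in the $\xi^1\xi^2$ slot, impose $k^1=k^2$, and confirm cancellation using the Lie 3-algebra compatibility relations (the Peiffer identity and the equivariance of $\beta$ and $\triangleright'$). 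The scalar, $\xi^1$, and $\xi^2$ components should follow by the same manipulations as in the $N=1$ proof with only bookkeeping changes, so the substantive verification is confined to that last component.
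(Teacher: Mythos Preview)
Your plan is correct and matches the paper's proof: a direct component-by-component verification using \eqref{ged-dd2} for $\underline{d}l_3$, \eqref{Lie-gg} for $\tfrac12[l_3,l_3]$, and the same battery of identities \eqref{4.9}--\eqref{4.13} from the $N=1$ case, supplemented by the Lie 3-algebra relations. One small correction to your expectations: the hypothesis $k^1=k^2$ is in fact needed already in the $\xi^1$ and $\xi^2$ components (to cancel the $\beta(d\psi)$ and $\beta(\phi\triangleright'\psi)$ cross-terms produced by \eqref{ged-dd2}), while the $\xi^1\xi^2$ component closes via the Peiffer-type identities \eqref{YZZ}, \eqref{A.17}, \eqref{B.7}, \eqref{A.24}, \eqref{gyy} without a further appeal to $k^1=k^2$.
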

	
	\begin{proof}
		A direct computation establishes \eqref{3MC-eq}.  Applying the exterior derivative $\underline{d}$ defined in \eqref{ged-dd2} to $l_3$ gives
		\begin{align}
			\underline{d}l_3=&dg^{-1}dg-k^1dg^{-1}\alpha(\phi)g-k^1g^{-1}\alpha(d\phi)g\nonumber\\
			& +k^1g^{-1}\alpha(\phi)dg+k^1\alpha\big(g^{-1}\triangleright (d\phi - k^1 \phi\phi-k^2\beta(\psi))\big)\nonumber\\
			&+k^2\alpha\big(k^1g^{-1}\triangleright \beta(\psi)\big)+\Big(dg^{-1}\triangleright\big(d\phi-k^1\phi\phi-k^2\beta(\psi)\big)\nonumber\\
			&+k^1\beta\big(g^{-1}\triangleright(d\psi-k^1\phi\triangleright' \psi)\big)+g^{-1}\triangleright\big(-k^1d\phi\phi\nonumber\\
			&+k^1\phi d\phi-k^2\beta(d\psi)\big)\Big)\xi^1\nonumber\\
			&+\Big(k^1dg^{-1}\triangleright \beta(\psi)+k^1g^{-1}\triangleright\beta(d\psi)\nonumber\\
			&-k^2\beta\big(g^{-1}\triangleright (d\psi-k^1\phi\triangleright' \psi)\big)\Big)\xi^2\nonumber\\
			&+\Big(dg^{-1}\triangleright\big(d\psi-k^1\phi \triangleright' \psi\big)-k^1g^{-1}\triangleright \big(d\phi\triangleright' \psi\big)\nonumber\\
			&+k^1g^{-1}\triangleright \big(\phi \triangleright' d\psi\big)
			\Big)\xi^1\xi^2.
		\end{align}
		Using \eqref{gY2} and $\alpha\circ\beta=0$, we have
		\begin{align}\label{alpha-simp}
			\alpha(g^{-1}\triangleright d\phi)=g^{-1} \alpha(d\phi)g,\ \ \alpha\big(k^1g^{-1}\triangleright \beta(\psi)\big)=0.
		\end{align}
		Moreover, applying \eqref{gZ} and imposing $k^1=k^2$ yields
		\begin{align}\label{beta-simp}
			\beta\big(g^{-1}\triangleright d\psi\big)=g^{-1}\triangleright \beta(d\psi).
		\end{align}
		Substituting \eqref{alpha-simp} and \eqref{beta-simp} gives the simplified form
		\begin{align}\label{4.26}
			\underline{d}l_3=&dg^{-1}dg-k^1dg^{-1}\alpha(\phi)g+k^1g^{-1}\alpha(\phi)dg \nonumber\\
			& +k^1\alpha\big(g^{-1}\triangleright (- k^1 \phi\phi)\big)+\Big(dg^{-1}\triangleright\big(d\phi-k^1\phi\phi\nonumber\\
			&-k^2\beta(\psi)\big)+k^1\beta\big(g^{-1}\triangleright(-k^1\phi\triangleright' \psi)\big)\nonumber\\
			& +g^{-1}\triangleright\big(-k^1d\phi\phi+k^1\phi d\phi\big)\Big)\xi^1\nonumber\\
			&+\Big(k^1dg^{-1}\triangleright \beta(\psi)-k^2\beta\big(g^{-1}\triangleright (-k^1\phi\triangleright' \psi)\big)\Big)\xi^2\nonumber\\
			&+\Big(dg^{-1}\triangleright\big(d\psi-k^1\phi \triangleright' \psi\big)-k^1g^{-1}\triangleright \big(d\phi\triangleright' \psi\big)\nonumber\\
			&+k^1g^{-1}\triangleright \big(\phi \triangleright' d\psi\big)
			\Big)\xi^1\xi^2.
		\end{align}
		
		We now compute the bracket term using \eqref{Lie-gg}:
		\begin{align}\label{l3l3}
			\dfrac{1}{2}[l_3, l_3]
			=& \dfrac{1}{2}\big[g^{-1}dg-k^1g^{-1}\alpha(\phi)g, g^{-1}dg-k^1g^{-1}\alpha(\phi)g\big]\nonumber\\
			&+\big(g^{-1}dg-k^1g^{-1}\alpha(\phi)g\big)\triangleright\big(g^{-1}\triangleright(d\phi-k^1\phi\phi\nonumber\\
			&-k^2\beta(\psi))\big)\xi^1\nonumber\\
			&+\big(g^{-1}dg-k^1g^{-1}\alpha(\phi)g\big)\triangleright\big(k^1g^{-1}\triangleright \beta(\psi)\big)\xi^2\nonumber\\
			&+\Big(\big(g^{-1}dg-k^1g^{-1}\alpha(\phi)g\big)\triangleright\big(g^{-1}\triangleright (d\psi-k^1\phi\triangleright'\psi)\big)\nonumber\\
			&+\big\{g^{-1}\triangleright\big(d\phi-k^1\phi\phi-k^2\beta(\psi)\big),k^1g^{-1}\triangleright\beta(\psi)\big\}
			\Big)\xi^1\xi^2.
		\end{align}
		
		The following steps simplify each component of \eqref{l3l3}.   \textbf{First component:}
		Using $g^{-1}dgg^{-1}=-dg^{-1}$ and\\ $g^{-1}\alpha(\phi)\alpha(\phi)g=\alpha\big(g^{-1}\triangleright(\phi\phi)\big)$, we obtain
		\begin{align}\label{4.28}
			&\dfrac{1}{2}\big[g^{-1}dg-k^1g^{-1}\alpha(\phi)g, g^{-1}dg-k^1g^{-1}\alpha(\phi)g\big]\nonumber\\
			=&-dg^{-1}dg-k^1g^{-1}\alpha(\phi)dg +k^1 dg^{-1}\alpha(\phi)g+(k^1)^2\alpha\big(g^{-1}\triangleright(\phi\phi)\big).
		\end{align}
		\textbf{Second and third components:} Applying \eqref{A.7} and the identities
		\begin{align*}
			\alpha(\phi)\triangleright d\phi&=[\phi, d\phi],\\ \alpha(\phi)\triangleright(\phi\phi)&=[\phi, \phi\phi]=0,\\ \alpha(\phi)\triangleright\beta(\psi)&=\beta(\phi\triangleright'\psi)=[\phi, \beta(\psi)],
		\end{align*}
		yields
		\begin{align}\label{4.29}
			&\big(g^{-1}dg-k^1g^{-1}\alpha(\phi)g\big)\triangleright\big(g^{-1}\triangleright(d\phi-k^1\phi\phi-k^2\beta(\psi))\big)\nonumber\\
			=&g^{-1}\triangleright\Big(\big(dgg^{-1}-k^1\alpha(\phi)\big)\triangleright\big(d\phi-k^1\phi\phi-k^2\beta(\psi)\big)\Big)\nonumber\\
			=& g^{-1}\triangleright\Big((dgg^{-1})\triangleright d\phi-k^1[\phi, d\phi]-k^1(dgg^{-1})\triangleright(\phi\phi)-k^2(dgg^{-1})\triangleright \beta(\psi)+k^1k^2\beta(\phi\triangleright'\psi)
			\Big)
		\end{align}
		and 
		\begin{align}\label{4.30}
			\big(g^{-1}dg-k^1g^{-1}\alpha(\phi)g\big)\triangleright\big(k^1g^{-1}\triangleright \beta(\psi)\big)
			=g^{-1}\triangleright \Big((dgg^{-1})\triangleright k^1\beta(\psi)-(k^1)^2\beta(\phi\triangleright'\psi)
			\Big).
		\end{align}
		\textbf{Fourth component:} Using \eqref{YZZ}, \eqref{A.17} and \eqref{B.7}, we have
		\begin{align}
			\alpha(\phi)\triangleright d\psi&=\phi \triangleright' d\psi,\\
			\{d\phi, \beta(\psi)\}&=d\phi \triangleright'\psi,\\
			\{ \beta(\psi), \beta(\psi)\}&=[\psi, \psi]=0,
		\end{align}
		and also
		\begin{align}
			\alpha(\phi)\triangleright (\phi \triangleright' \psi)&=\phi \triangleright' (\phi \triangleright' \psi)=(\phi\phi)\triangleright' \psi
			\nonumber\\
			&=-\{\beta(\psi), \phi\phi\}=\{\phi\phi, \beta(\psi)\}.
		\end{align}
		Consequently,
		\begin{align}\label{4.33}
			&\big(g^{-1}dg-k^1g^{-1}\alpha(\phi)g\big)\triangleright\big(g^{-1}\triangleright (d\psi-k^1\phi\triangleright'\psi)\big)\nonumber\\&+\big\{g^{-1}\triangleright\big(d\phi -k^1\phi\phi-k^2\beta(\psi)\big),k^1g^{-1}\triangleright\beta(\psi)\big\}\nonumber\\
			=&g^{-1}\triangleright \Big( \big(dgg^{-1}-k^1\alpha(\phi)\big)\triangleright\big(d\psi-k^1\phi\triangleright'\psi\big)\Big)\nonumber\\&+g^{-1}\triangleright\{d\phi-k^1\phi\phi-k^2\beta(\psi), k^1\beta(\psi)	\}\nonumber\\
			=&g^{-1}\triangleright k^1 (d\phi\triangleright' \psi-\phi\triangleright' d\psi),
		\end{align}
		where the last step uses \eqref{A.24} and \eqref{gyy}.

		Finally, substituting expressions \eqref{4.28}, \eqref{4.29}, \eqref{4.30}, and \eqref{4.33} into \eqref{l3l3}, and adding the result to \eqref{4.26}, one checks that all terms cancel pairwise, proving \eqref{3MC-eq}. 
	\end{proof}
	
	In particular, the generalized Maurer–Cartan form $l_3$ encodes the corresponding flatness condition for the Lie $3$-group $\mathbf{L}$.
	
	Let $A\in \Lambda^1(M, \mathfrak{g})$, $B_1, B_2\in \Lambda^2(M, \mathfrak{h})$ and $C\in \Lambda^3(M, \mathfrak{l})$. Define
	\begin{equation*}
		\textbf{H}_{(2)}=\{h_{(2)}=A+B_1\xi^1+B_2\xi^2+C\xi^1\xi^2\}\subset \Lambda^\bullet_{(2)}(M, \mathcal{L}).
	\end{equation*} 
	The adjoint action of  $\textbf{G}_{(2)}$ on $\textbf{H}_{(2)}$ is defined as the restriction of its natural action on  the full space  $\Lambda^\bullet_{(2)}(M, \mathcal{L})$.  For a general element $	\mathcal{W} = U+V\xi^1+ V'\xi^2+ W\xi^1\xi^2$ as in \eqref{3.2},  this adjoint action is given explicitly by
	\begin{align}\label{ad2}
		\operatorname{\textbf{Ad}}_{\mathcal{G}}\mathcal{W}=&\operatorname{Ad}_g U+\big( g\triangleright V - (\operatorname{Ad}_g U) \triangleright\phi\big)\xi^1+g\triangleright V'\xi^2\nonumber\\
		&+  \big(g\triangleright W -(\operatorname{Ad}_gU)\triangleright \psi -\{g\triangleright V', \phi\}+\{\phi, g\triangleright V'\}\big)\xi^1\xi^2.
	\end{align}

	We find that the present adjoint action does not satisfy the properties established in Theorems~\ref{4.2} and~\ref{4.3}. Rather, it should be viewed as a direct extension of the construction for type $N=2$. These limitations are inherent to the current framework. Nonetheless, pursuing this higher generalization is instructive. In the following chapters, we therefore continue along this path and derive the corresponding $3$-gauge transformations, which constitute a necessary step in the systematic exploration of higher gauge structures.

	%%%%%%%%%%到此

	\section{Higher gauge theories in generalized formulations}\label{sec4}
	Based on the higher generalized forms developed in Section~\ref{sec3}, we now present generalized formulations of $2$- and $3$-gauge theories. This framework clarifies the structural correspondence between higher and ordinary gauge theories, and thereby facilitates the transfer of both conceptual insights and technical tools from the ordinary to the higher setting.
	\subsection{Type $N=1$: 2-connections and 2-gauge transformations}\label{sec4.1}
	The $2$-gauge theory is the categorification of ordinary gauge theory, obtained by replacing the structure Lie group with a Lie $2$-group.
	In analogy to a Lie group $G$ giving rise to a connection $A$, a Lie $2$-group  $(H, G; \bar{\alpha}, \bar{\triangleright})$ carries a \textbf{2-connection} $(A, B)$, consisting of a standard  $\mathfrak{g}$-valued $1$-form $A \in \Lambda^1(M, \mathfrak{g})$ and an $\mathfrak{h}$-valued $2$-form $B \in \Lambda^2(M, \mathfrak{h})$.
	%The 2-form $B$ allows us to define a 2-holonomy $2\operatorname{Hol}_\mathcal{P}(B)=\mathcal{P}\exp\int_SB$ across a 2-dimensional surface $S\subset M$, where $\mathcal{P}\exp$ is the surface-ordered exponential defined in {Yekutieli, Wockel}.
	
	The $2$-connection defines a \textbf{fake $2$-curvature} $\Omega_1$ and a \textbf{$2$-curvature} $\Omega_2$ by
	\begin{subequations}\label{2-curvature}
		\begin{align}
			\Omega_1&=dA + \dfrac{1}{2}[A, A]-\alpha(B)\in  \Lambda^2(M, \mathfrak{g}),\\
			\Omega_2&=dB+A\triangleright B\in  \Lambda^3(M, \mathfrak{h}),
		\end{align}
	\end{subequations}
	which satisfy \textbf{2-Bianchi identities}
	\begin{subequations}\label{2-BI}
		\begin{align}
			&d\Omega_1+[A, \Omega_1]+\alpha(\Omega_2)=0,\\
			&d\Omega_2 + A \triangleright \Omega_2 - \Omega_1 \triangleright B=0.
		\end{align}
	\end{subequations}
	The adjective ``fake'' refers to the additional term $\alpha(B)$ in $\Omega_1$, which distinguishes it from the ordinary curvature \cite{Baez.2010}. 
	Throughout, we work with a strict Lie $2$-algebra; in particular, we do not need to impose the fake-flatness condition $\Omega_1=0$, which is typically required in the weak Lie $2$-algebra setting \cite{HKCS}.

	In the language of Subsection~\ref{sec3.1}, the gauge field $(A,B)$ can be packaged into a type $N=1$ generalized $1$-form with values in the Lie $2$-algebra \eqref{3.1},
	\begin{equation}\label{conn1}
		\mathcal{A}=A+ B\xi,
	\end{equation}
	which we call a \textbf{Lie 2-algebra-valued generalized connection}.
	For two such forms $\mathcal{A}_i = A_i + B_i\xi$ ($i=1,2$), the graded bracket \eqref{Lie-g} yields
	\begin{align}\label{12A}
		[\mathcal{A}_1, \mathcal{A}_2]=[A_1, A_2]+ \big(A_1\triangleright B_2+ A_2\triangleright B_1\big)\xi.
	\end{align}
	Choosing $k=-1$ in the generalized exterior derivative \eqref{ged-dd1}, we obtain
	\begin{equation}\label{12d}
		\underline{d}\mathcal{A}=dA -\alpha(B)+ d B\xi.
	\end{equation}
	The associated \textbf{generalized curvature} is then
	\begin{equation*}
		\mathcal{F}=\underline{d}\mathcal{A}+ \dfrac{1}{2}[\mathcal{A}, \mathcal{A}]=\Omega_1+\Omega_2\xi.
	\end{equation*}
	Thus, the higher curvatures are encoded in the single generalized form $\mathcal{F}$, whose algebraic structure mirrors that of the ordinary curvature.
	By construction, $\mathcal{F}$ is a type $N=1$ generalized $2$-form, and its components reproduce the fake $2$-curvature and the $2$-curvature. 
	Moreover, $\mathcal{F}$ satisfies the \textbf{generalized Bianchi identity}
	\begin{align*}
		\underline{d}\mathcal{F}+[\mathcal{A}, \mathcal{F}]
		=&d\Omega_1+[A, \Omega_1]+\alpha(\Omega_2)+\big( d\Omega_2 + A \triangleright \Omega_2 - \Omega_1 \triangleright B\big)\xi\nonumber\\
		=&0,
	\end{align*}
	which is equivalent to the $2$-Bianchi identities \eqref{2-BI}.
	
	A $2$-gauge transformation can be encoded by a Lie $2$-group-valued generalized $0$-form
	\begin{equation*}
		\mathcal{G}=(1+\phi\xi)g \in \mathbf{G}_{(1)},
	\end{equation*}
	with $\phi \in \Lambda^1(M, \mathfrak{h})$ and $g \in C^{\infty}(M, G)$. 
	Using \eqref{ad}, the adjoint action on the generalized connection is
	\begin{equation*}
		\textbf{Ad}_{\mathcal{G}}\mathcal{A}=\operatorname{Ad}_g A+\Big(g\triangleright B - \operatorname{Ad}_g A \triangleright\phi\Big)\xi.
	\end{equation*}
	Replacing $\mathcal{G}$ by $ \mathcal{G}^{-1}$ (i.e., $g \to g^{-1}$, $\phi \to -g^{-1} \triangleright \phi$) yields the inverse adjoint action.
	The \textbf{generalized gauge transformation} is then
	%\begin{align*}
		%	\textbf{Ad}_{\mathcal{G}^{-1}}\mathcal{A}=\operatorname{Ad}_{g^{-1}} A+\Big(g^{-1}\triangleright (B + A \triangleright\phi\Big)\xi.
		%\end{align*}
	\begin{align*}
		\mathcal{A}'=\textbf{Ad}_{\mathcal{G}^{-1}}\mathcal{A} + \mathcal{G}^{-1}	\underline{d}\mathcal{G}.
	\end{align*}
	where the second term is the $2$-Maurer–Cartan form \eqref{2MC}. This formula is formally identical to the usual gauge transformation in ordinary gauge theory. Consequently, the generalized curvature transforms covariantly:
	\begin{equation}
		\mathcal{F}'=\textbf{Ad}_{\mathcal{G}^{-1}}\mathcal{F}.
	\end{equation}
	In components, the induced transformation of the $2$-connection $(A,B)$ is
	\begin{align*}
		A'&=\operatorname{Ad}_{g^{-1}} (A + dg g^{-1} +\alpha(\phi)),\\
		B'&= g^{-1}\triangleright (B +A \triangleright\phi + d \phi - k \phi \phi).
	\end{align*}
	and the curvatures transform as
	\begin{equation*}
		\Omega'_1=\operatorname{Ad}_{g^{-1}} \Omega_1,\ \ \ 
		\Omega'_2=g^{-1}\triangleright(\Omega_2+\Omega_1\triangleright \phi),
	\end{equation*}
	in agreement with earlier results \cite{Z-2021}. 
	
	Besides, the transformation admits the equivalent form
	\begin{align*}
		\mathcal{A}'=&\operatorname{Ad}_{\mathcal{G}}\mathcal{A} -d\mathcal{G}\mathcal{G}^{-1}\nonumber\\
		=&\operatorname{Ad}_g A - dgg^{-1}- \alpha(\phi)+\big( g\triangleright B - d \phi - \phi \phi \nonumber\\
		&- (\operatorname{Ad}_g A - dgg^{-1}- \alpha(\phi))\triangleright\phi \big)\xi,
	\end{align*}
	so that, in components,
	\begin{subequations}\label{Ag1}
		\begin{align}
			A'&=\operatorname{Ad}_g A - dgg^{-1}- \alpha(\phi),\\
			B'&=g\triangleright B - d \phi - \phi \phi -  A'\triangleright\phi.
		\end{align}
	\end{subequations}
	These expressions coincide with those in Refs.~\cite{K.W,SZ,Zucchini-2020-1,Zucchini-2020-2}.

	\subsection{Type $N=2$: 3-connections and 3-gauge transformations}\label{sec4.2}
	We now extend the generalized formulation to $3$-gauge theory \cite{JFM,doi:10.1063/1.4870640}. While the algebraic structure becomes richer, it still admits a unified description in the language of generalized forms.
	
	A Lie $3$-group (more precisely, a Lie $2$-crossed module) $(L, H, G;\bar{\beta}, \bar{\alpha}, \bar{\triangleright}, \left\{\cdot,\cdot\right\})$ gives rise to a 3-connection $(A, B, C)$, consisting of  $A \in \Lambda^1(M, \mathfrak{g})$, $B \in \Lambda^2(M, \mathfrak{h})$, and $C \in \Lambda^3(M, \mathfrak{l})$. The associated \textbf{fake 1-curvature} $\Omega_1$, \textbf{fake 2-curvature} $\Omega_2$, and \textbf{3-curvature} $\Omega_3$ are defined by
	\begin{subequations}\label{3-cu}
		\begin{align}
			\Omega_1&=dA+\dfrac{1}{2}[A, A]-\alpha(B),\\ \Omega_2&=dB+A\triangleright B-\beta(C),\\
			\Omega_3&=dC + A \triangleright C +\{B, B\},
		\end{align}
	\end{subequations}
	and they satisfy the \textbf{$3$-Bianchi identities}
	\begin{subequations}\label{3BI}
		\begin{align}
			&d\Omega_1+[A, \Omega_1]+\alpha(\Omega_2)=0,\\
			&d\Omega_2 + A \triangleright \Omega_2 -\Omega_1 \triangleright B + \beta(\Omega_3)=0,\\
			&d\Omega_3 + A \triangleright \Omega_3 - \Omega_1\triangleright C -\{B, \Omega_2\}-\{\Omega_2, B\}=0.
		\end{align}
	\end{subequations}
	
	Within the generalized framework, the gauge field $(A,B,C)$ is encoded by a type $N=2$ generalized $1$-form valued in the Lie $3$-algebra \eqref{3.2},
	\begin{equation}\label{2cnn}
		\mathcal{A}=A+ B\xi^1+ B\xi^2+ C\xi^1\xi^2,
	\end{equation}
	which we call a \textbf{Lie 3-algebra-valued generalized connection}. 
	
	For two such generalized connections $\mathcal{A}_i = A_i+ B_i\xi^1+ B_i\xi^2+ C_i\xi^1\xi^2$ ($i = 1, 2$), the graded bracket \eqref{Lie-gg} yields
	\begin{align}\label{AA3}
		[\mathcal{A}_1, \mathcal{A}_2]=&[A_1, A_2]+\big(A_1\triangleright B_2+ A_2\triangleright B_1\big)\xi^1\nonumber\\
		&+\big( A_1\triangleright B_2'+ A_2\triangleright B_1'\big)\xi^2 \nonumber
		\\
		&+\big(A_1\triangleright C_2 +A_2\triangleright C_1+\{B_1, B_2'\}+\{B_2, B_1'\}\big)\xi^1\xi^2.
	\end{align}
	Choosing $k^{1}=0$ and $k^{2}=-1$ in the generalized exterior derivative \eqref{ged-dd2}, we obtain
	\begin{equation}\label{d3}
		\underline{d}\mathcal{A}=dA -\alpha(B)+\big( d B -\beta(C)\big)\xi^1+ dB\xi^2+ dC\xi^1\xi^2.
	\end{equation}
	
	The corresponding \textbf{generalized curvature}  is defined by
	\begin{equation*}
		\mathcal{F}=\underline{d}\mathcal{A}+ \dfrac{1}{2}[\mathcal{A}, \mathcal{A}],
	\end{equation*}
	which again has the same formal shape as in ordinary gauge theory. By construction, $\mathcal{F}$ is a type $N=2$ generalized $2$-form. Its components are
	\begin{equation}\label{3-cur}
		\mathcal{F}= \Omega_1+ \Omega_2\xi^1+ \big(\Omega_2+\beta(C)\big)\xi^2+ \Omega_3\xi^1\xi^2.
	\end{equation} 
	It satisfies the \textbf{generalized Bianchi identity},
	\begin{equation}
		\underline{d}\mathcal{F}+[\mathcal{A}, \mathcal{F}]=0,
	\end{equation}
	which reproduces the $3$-Bianchi identities \eqref{3BI}.
	
	A $3$-gauge transformation is encoded by a Lie $3$-group-valued generalized $0$-form
	\begin{equation}
		\mathcal{G}=(1+ \phi\xi^1+ \psi\xi^1\xi^2)g\in \mathbf{G}_{(2)},
	\end{equation}
	with $\phi\in \Lambda^1(M, \mathfrak{h})$, $\psi \in \Lambda^2(M, \mathfrak{l})$ and $g\in C^{\infty}(M, G)$. 
	From \eqref{ad2} we obtain
	\begin{align*}
		\operatorname{Ad}_{\mathcal{G}}\mathcal{A}=&\operatorname{Ad}_g A+\Big( g\triangleright B - \operatorname{Ad}_g A \triangleright\phi\Big)\xi^1+ g\triangleright B\xi^2\nonumber\\
		&+\Big( g\triangleright C -\operatorname{Ad}_gA\triangleright \psi -\{g\triangleright B, \phi\}+\{\phi, g\triangleright B\}\Big)\xi^1\xi^2.
	\end{align*}
	Replacing $\mathcal{G}$ by its inverse (so that $g \to g^{-1}$, $\phi \to -g^{-1} \triangleright \phi$, $\psi \to -g^{-1} \triangleright \psi$) yields the inverse adjoint action
	\begin{align}\label{ad-2}
		\operatorname{Ad}_{\mathcal{G}^{-1}}\mathcal{A}=&\operatorname{Ad}_{g^{-1}} A+ g^{-1}\triangleright (B+ A \triangleright\phi)\xi^1 + g^{-1}\triangleright B\xi^2\nonumber\\
		&+g^{-1}\triangleright \big(C+ A\triangleright \psi +\{B, \phi\}-\{\phi, B\}\big)\xi^1\xi^2.
	\end{align}
	The \textbf{generalized gauge transformation} for $\mathcal{A}$ is then
	\begin{align}\label{5.22}
		\mathcal{A}'=&\operatorname{Ad}_{\mathcal{G}^{-1}}\mathcal{A} + \mathcal{G}^{-1}\underline{d}\mathcal{G}\nonumber\\
		=&\operatorname{Ad}_{g^{-1}} A+g^{-1}dg - (k^1 + t)g^{-1}\alpha(\phi)g\nonumber\\
		&+ g^{-1}\triangleright \big(B + A \triangleright\phi +d\phi -(k^1 + t) \phi \phi \nonumber\\
		&-k^2\beta(\psi)\big)\xi^1+ g^{-1}\triangleright\big(B+(k^1 + t) \beta(\psi)\big)\xi^2\nonumber\\
		&+ g^{-1}\triangleright\big(C+ A \triangleright \psi+ \{B, \phi\}- \{\phi, B\}  \nonumber\\
		&+ d\psi -(k^1 + t)\phi \triangleright' \psi\big)\xi^1\xi^2,
	\end{align}
	where the second term involves the $3$-Maurer–Cartan form \eqref{3MC} with the replacement $k^1 \to k^1 + t$. 
	
	The constants $k^1$ and $k^2$ are fixed by choosing a canonical basis $\{\bar{\xi}^i\}$  satisfying $\underline{d} \xi^i=k^i$ (see Subsection~\ref{edcb}). Taking
	\begin{equation*}
		k^1 =0,\ \  k^2 =t= -1,
	\end{equation*}
	we obtain the explicit $3$-gauge transformations,
	\begin{align*}
		A'&=\operatorname{Ad}_{g^{-1}} A+g^{-1}dg +g^{-1}\alpha(\phi)g,\\
		B'&= g^{-1}\triangleright \big(B + A \triangleright\phi +d\phi +\phi \phi +\beta(\psi)\big),\\
		C'&=g^{-1}\triangleright\big(C+ A \triangleright \psi + \{B, \phi\}- \{\phi, B\} + d\psi +\phi \triangleright' \psi\big).
	\end{align*}
	Note that the $\xi^{2}$-component in \eqref{5.22} does not contribute to these final formulae.   Moreover, the parameter $t$ is introduced to ensure consistency between the values of $k^1$ and $k^2$ in the transformation and those in the generalized curvature.

	\section{Applications}\label{sec5}
	Building on the theoretical framework developed in the preceding sections, we now turn to concrete applications. We construct HCS and HYM theories using higher generalized forms. Our aims are twofold: to show explicitly how these higher gauge theories are built, and to clarify both their connections to and their differences from standard theories formulated with ordinary differential forms.
	
	\subsection{Higher Chern–Simons theory}\label{sec5.1}
	
	\subsubsection{4D 2-Chern–Simons  theory }
	Guided by the Maurer–Cartan formulation of HCS actions \cite{Jurco2019,ASBV}, we construct these theories using generalized forms, in close analogy to the method employed for topological invariants in Refs.~\cite{FGAO,HC}. 
	
	Just as ordinary Chern–Simons theory \cite{DSF,TP} is defined only on odd-dimensional manifolds, the $2$CS theory is naturally formulated on $4$-manifolds. For a $2$-connection $(A,B)$, we define the corresponding $2$CS form using generalized forms, in close analogy with the standard Chern–Simons $3$-form.
	\begin{definition}
		For  a type $N=1$ generalized connection $\mathcal{A}=A+ B\xi$ (see \eqref{conn1}), the \textbf{2CS 4-form} is defined by
		\begin{equation}\label{2CS-4}	\operatorname{CS}(\mathcal{A}):=\operatorname{CS}_4(A, B):=\langle \!\langle \mathcal{A}, \underline{d}\mathcal{A}+\dfrac{1}{3}[\mathcal{A}, \mathcal{A}]\rangle\!\rangle.
		\end{equation}
	\end{definition}
	
	\begin{lemma}
		The 2CS form \eqref{2CS-4} expands in components as
		\begin{equation}\label{eq:CS4_expanded}
			\operatorname{CS}_4(A, B)=\langle 2F-\alpha(B), B\rangle_{\mathfrak{g}, \mathfrak{h}} -d\langle A, B\rangle_{\mathfrak{g}, \mathfrak{h}},
		\end{equation}
		where $F=dA+\dfrac{1}{2}[A, A]$.
	\end{lemma}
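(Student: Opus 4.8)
\noindent
The plan is to expand the defining expression \eqref{2CS-4} directly: substitute the component form $\mathcal{A}=A+B\xi$ from \eqref{conn1}, read off the two ordinary-form components of the generalized $2$-form $\underline{d}\mathcal{A}+\tfrac{1}{3}[\mathcal{A},\mathcal{A}]$, and then apply the type $N=1$ pairing \eqref{1-s}. First I would compute the bracket via \eqref{12A} with $\mathcal{A}_1=\mathcal{A}_2=\mathcal{A}$, obtaining $[\mathcal{A},\mathcal{A}]=[A,A]+2\,(A\triangleright B)\xi$, and the derivative via \eqref{12d} (with $k=-1$), giving $\underline{d}\mathcal{A}=dA-\alpha(B)+dB\,\xi$. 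Adding these, the generalized $2$-form $\mathcal{M}:=\underline{d}\mathcal{A}+\tfrac{1}{3}[\mathcal{A},\mathcal{A}]$ has $\mathfrak{g}$-component $dA+\tfrac{1}{3}[A,A]-\alpha(B)$ and $\mathfrak{h}$-component $dB+\tfrac{2}{3}(A\triangleright B)$. Since $\mathcal{A}$ has degree $p=1$ and $\mathcal{M}$ degree $q=2$, the sign in \eqref{1-s} is $(-1)^{pq}=+1$, so
\begin{equation*}
\operatorname{CS}_4(A,B)=\big\langle A,\; dB+\tfrac{2}{3}(A\triangleright B)\big\rangle_{\mathfrak{g},\mathfrak{h}}+\big\langle dA+\tfrac{1}{3}[A,A]-\alpha(B),\; B\big\rangle_{\mathfrak{g},\mathfrak{h}}.
\end{equation*}

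\noindent
The crux is to collect the two cubic terms. I expect the identity
\begin{equation*}
\langle A,\, A\triangleright B\rangle_{\mathfrak{g},\mathfrak{h}}=\langle [A,A],\, B\rangle_{\mathfrak{g},\mathfrak{h}}
\end{equation*}
to hold, after which the coefficients $\tfrac{2}{3}$ and $\tfrac{1}{3}$ combine into a single $\langle[A,A],B\rangle_{\mathfrak{g},\mathfrak{h}}$. This is where the main work lies, and the step I expect to be most error-prone: it should follow from the infinitesimal ad-invariance of the pairing (Appendix~\ref{A.1}), $\langle[x,y],z\rangle_{\mathfrak{g},\mathfrak{h}}=-\langle y,\,x\triangleright z\rangle_{\mathfrak{g},\mathfrak{h}}$, combined with the graded antisymmetry of the $1$-form $A$. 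Concretely, expanding in a basis and transposing the two copies of $A$ produces a second sign that cancels the minus from invariance, turning the naive $-$ into a $+$; tracking these graded signs correctly is the delicate point.

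\noindent
Finally I would isolate the total derivative using the graded Leibniz rule for the pairing, $d\langle A,B\rangle_{\mathfrak{g},\mathfrak{h}}=\langle dA,B\rangle_{\mathfrak{g},\mathfrak{h}}-\langle A,dB\rangle_{\mathfrak{g},\mathfrak{h}}$ (the minus sign reflecting $\deg A=1$), which rewrites $\langle A,dB\rangle_{\mathfrak{g},\mathfrak{h}}=\langle dA,B\rangle_{\mathfrak{g},\mathfrak{h}}-d\langle A,B\rangle_{\mathfrak{g},\mathfrak{h}}$. Substituting the two reductions leaves $2\langle dA,B\rangle_{\mathfrak{g},\mathfrak{h}}+\langle[A,A],B\rangle_{\mathfrak{g},\mathfrak{h}}-\langle\alpha(B),B\rangle_{\mathfrak{g},\mathfrak{h}}-d\langle A,B\rangle_{\mathfrak{g},\mathfrak{h}}$; recognizing $2F=2dA+[A,A]$ assembles the first two terms into $\langle 2F,B\rangle_{\mathfrak{g},\mathfrak{h}}$, yielding $\langle 2F-\alpha(B),B\rangle_{\mathfrak{g},\mathfrak{h}}-d\langle A,B\rangle_{\mathfrak{g},\mathfrak{h}}$, exactly \eqref{eq:CS4_expanded}. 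The only non-bookkeeping ingredients are the invariance identity above and the Leibniz rule; everything else is substitution and sign tracking.
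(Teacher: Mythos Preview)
Your proposal is correct and follows essentially the same route as the paper's proof: expand $\underline{d}\mathcal{A}+\tfrac{1}{3}[\mathcal{A},\mathcal{A}]$ in components via \eqref{12A} and \eqref{12d}, apply the pairing \eqref{1-s}, then invoke the invariance identity $\langle A, A\triangleright B\rangle_{\mathfrak{g},\mathfrak{h}}=\langle [A,A],B\rangle_{\mathfrak{g},\mathfrak{h}}$ (which is exactly \eqref{XXY}) and the graded Leibniz rule to assemble $2F$ and the total derivative. The paper's argument is identical in structure and in the two key identities used.
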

	\begin{proof}
		Using \eqref{12A} and \eqref{12d}, we compute
		\begin{align*}
			\operatorname{CS}_4(A, B)
			=&\langle \!\langle A+B\xi, dA+\dfrac{1}{3}[A, A]-\alpha(B)+\big( dB+ \dfrac{2}{3}A\triangleright B\big)\xi\rangle\!\rangle\\
			=&\langle A, dB+ \dfrac{2}{3}A\triangleright B\rangle_{\mathfrak{g}, \mathfrak{h}}+\langle dA+\dfrac{1}{3}[A, A]-\alpha(B), B\rangle_{\mathfrak{g}, \mathfrak{h}}.
		\end{align*}
		By  $\mathfrak{g}$-invariance \eqref{XXY}, 
		\begin{equation*}
			\langle A,\, A \triangleright B \rangle_{\mathfrak{g}, \mathfrak{h}} = \langle [A, A],\, B \rangle_{\mathfrak{g}, \mathfrak{h}}.
		\end{equation*}
		Combining this with 
		\begin{equation*}
			d\langle A, B \rangle_{\mathfrak{g}, \mathfrak{h}} = \langle dA, B \rangle_{\mathfrak{g}, \mathfrak{h}} - \langle A, dB \rangle_{\mathfrak{g}, \mathfrak{h}},
		\end{equation*}
		we obtain
		\begin{equation*}
			\mathrm{CS}_4(A, B) = \langle 2dA + [A, A] - \alpha(B), B \rangle_{\mathfrak{g}, \mathfrak{h}} - d\langle A, B \rangle_{\mathfrak{g}, \mathfrak{h}}.
		\end{equation*}
		Since $2F=2dA+[A,A]$, this is exactly \eqref{eq:CS4_expanded}.
		
	\end{proof}
	
	The resulting expression differs from the $2$CS form in Ref.~\cite{SDH4} by the exact term $d\langle A,B\rangle_{\mathfrak{g},\mathfrak{h}}$. Consequently, the two definitions yield identical actions on any manifold without boundary.
	
	In the same spirit, we construct a \textbf{2-Chern form} from the generalized curvature $\mathcal{F}=\Omega_1+ \Omega_2\xi$:
	\begin{equation}\label{P-5}
		P(\mathcal{F}):=P_5(\Omega_1, \Omega_2):=\langle \!\langle \mathcal{F}, \mathcal{F}\rangle\!\rangle.
	\end{equation}
	Expanding in components gives
	\begin{equation*}
		P_5(\Omega_1, \Omega_2)=2\langle \Omega_1, \Omega_2\rangle_{\mathfrak{g}, \mathfrak{h}}.
	\end{equation*}
	
	\begin{proposition}
		The exterior derivative of the $2$CS $4$-form equals the $2$-Chern $5$-form:
		\begin{equation}\label{2Chern-form}
			P_5(\Omega_1, \Omega_2)=d\operatorname{CS}_4(A, B).
		\end{equation}
	\end{proposition}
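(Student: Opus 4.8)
The plan is to prove the statement as a generalized transgression identity, mirroring the classical fact $d\langle A,\,dA+\tfrac13[A,A]\rangle=\langle F,F\rangle$ of ordinary Chern--Simons theory. Since both sides are built entirely from the generalized pairing $\langle\!\langle-,-\rangle\!\rangle$, the generalized curvature $\mathcal{F}=\underline{d}\mathcal{A}+\tfrac12[\mathcal{A},\mathcal{A}]$, and the graded bracket, I would first establish the abstract identity
\[
d\,\langle\!\langle \mathcal{A},\ \underline{d}\mathcal{A}+\tfrac13[\mathcal{A},\mathcal{A}]\rangle\!\rangle=\langle\!\langle\mathcal{F},\mathcal{F}\rangle\!\rangle,
\]
and then read off \eqref{2Chern-form} using the component expansions already recorded in \eqref{eq:CS4_expanded} and in $P_5=2\langle\Omega_1,\Omega_2\rangle_{\mathfrak{g},\mathfrak{h}}$.

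The argument rests on three properties of the pairing \eqref{1-s}: its graded symmetry (already stated), its graded ad-invariance inherited from the $\triangleright$-invariance of $\langle-,-\rangle_{\mathfrak{g},\mathfrak{h}}$, and a \emph{generalized Leibniz rule} relating the ordinary differential $d$ of the pairing to $\underline{d}$ on its factors,
\[
d\,\langle\!\langle\mathcal{W}_1,\mathcal{W}_2\rangle\!\rangle=\langle\!\langle\underline{d}\mathcal{W}_1,\mathcal{W}_2\rangle\!\rangle+(-1)^{p}\langle\!\langle\mathcal{W}_1,\underline{d}\mathcal{W}_2\rangle\!\rangle,\qquad p=\deg\mathcal{W}_1.
\]
Granting these, the proof is formal. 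Applying the Leibniz rule to $\operatorname{CS}_4=\langle\!\langle\mathcal{A},\underline{d}\mathcal{A}+\tfrac13[\mathcal{A},\mathcal{A}]\rangle\!\rangle$ and using $\underline{d}^2=0$ together with $\underline{d}[\mathcal{A},\mathcal{A}]=2[\underline{d}\mathcal{A},\mathcal{A}]$, one is left with a quadratic, a cubic and a quartic term in $\mathcal{A}$. The cubic term is converted by graded ad-invariance, and the quartic $\langle\!\langle[\mathcal{A},\mathcal{A}],[\mathcal{A},\mathcal{A}]\rangle\!\rangle$ vanishes by the graded Jacobi identity exactly as $\langle[A,A],[A,A]\rangle$ does classically. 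What remains collapses to $\langle\!\langle\underline{d}\mathcal{A},\underline{d}\mathcal{A}\rangle\!\rangle+\langle\!\langle\underline{d}\mathcal{A},[\mathcal{A},\mathcal{A}]\rangle\!\rangle+\tfrac14\langle\!\langle[\mathcal{A},\mathcal{A}],[\mathcal{A},\mathcal{A}]\rangle\!\rangle=\langle\!\langle\mathcal{F},\mathcal{F}\rangle\!\rangle$.

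I expect the generalized Leibniz rule to be the only genuine obstacle. Writing $\mathcal{W}_i=U_i+V_i\xi$ and differentiating $\langle\!\langle\mathcal{W}_1,\mathcal{W}_2\rangle\!\rangle=\langle U_1,V_2\rangle_{\mathfrak{g},\mathfrak{h}}+(-1)^{pq}\langle U_2,V_1\rangle_{\mathfrak{g},\mathfrak{h}}$ with the ordinary Leibniz rule reproduces all the $dU$, $dV$ terms of the right-hand side, but \eqref{ged-dd1} also injects the extra pieces $(-1)^{p+1}k\,\langle\alpha(V_1),V_2\rangle_{\mathfrak{g},\mathfrak{h}}$ and $(-1)^{pq+q+1}k\,\langle\alpha(V_2),V_1\rangle_{\mathfrak{g},\mathfrak{h}}$. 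A short sign computation shows these cancel precisely when $\langle\alpha(\cdot),\cdot\rangle_{\mathfrak{g},\mathfrak{h}}$ is graded-symmetric on $\mathfrak{h}$, i.e. $\langle\alpha(V_1),V_2\rangle_{\mathfrak{g},\mathfrak{h}}=(-1)^{(p+1)(q+1)}\langle\alpha(V_2),V_1\rangle_{\mathfrak{g},\mathfrak{h}}$, which is a defining property of the invariant form recorded in Appendix~\ref{A.1}.

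As a cross-check, and an alternative that bypasses the abstract lemma, I would verify \eqref{2Chern-form} directly from \eqref{eq:CS4_expanded}. Since $d\,d\langle A,B\rangle_{\mathfrak{g},\mathfrak{h}}=0$, one has $d\,\operatorname{CS}_4=d\langle 2F-\alpha(B),B\rangle_{\mathfrak{g},\mathfrak{h}}$; expanding with the ordinary Leibniz rule, the Bianchi identity $dF=-[A,F]$, and the $\triangleright$-invariance \eqref{XXY} should regroup the terms into $2\langle F-\alpha(B),\,dB+A\triangleright B\rangle_{\mathfrak{g},\mathfrak{h}}=2\langle\Omega_1,\Omega_2\rangle_{\mathfrak{g},\mathfrak{h}}=P_5$. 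The main bookkeeping here is handling the $\alpha(B)$ and the $\langle F,A\triangleright B\rangle$ cross-terms, which combine correctly by the same invariance identity; this concrete route is short because the Lemma has already absorbed the cubic structure into the explicit form of $\operatorname{CS}_4$.
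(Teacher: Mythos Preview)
Your proposal is correct, and your secondary ``cross-check'' route is essentially what the paper does: it differentiates the explicit component form $\langle 2F-\alpha(B),B\rangle_{\mathfrak g,\mathfrak h}$, feeds in a Bianchi identity, and cancels the $\alpha(B)$ and $A\triangleright B$ cross terms using \eqref{symp}, \eqref{XXY} and the Peiffer identity $\alpha(B)\triangleright B=[B,B]=0$. The only cosmetic difference is that the paper substitutes the 2-Bianchi identity for $\Omega_1$ (writing $2F-\alpha(B)=2\Omega_1+\alpha(B)$ first) whereas you propose using the ordinary Bianchi $dF=-[A,F]$; both lead to the same cancellations.

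Your primary route---the abstract transgression argument $d\langle\!\langle\mathcal A,\underline d\mathcal A+\tfrac13[\mathcal A,\mathcal A]\rangle\!\rangle=\langle\!\langle\mathcal F,\mathcal F\rangle\!\rangle$---is genuinely different and more conceptual. It reduces the proposition to three structural lemmas for the pairing \eqref{1-s}: the Leibniz rule relating $d$ to $\underline d$ (whose only nontrivial content is the cancellation of the $k\,\alpha(V)$ terms via \eqref{symp}, as you identify), graded ad-invariance (which does follow from \eqref{XXY}), and the vanishing of the quartic term via graded Jacobi. The payoff is that the same argument would apply verbatim to the type $N=2$ case once the analogous lemmas are checked, giving a uniform proof of both \eqref{2Chern-form} and \eqref{3Chern-3CS}. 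The paper's component computation, by contrast, is tailored to each case but requires no auxiliary lemmas beyond what is already recorded in the appendix. One caution: you assert ad-invariance of $\langle\!\langle-,-\rangle\!\rangle$ without proof; it is not stated in the paper and, while it does hold, verifying it for the mixed $(\mathfrak g,\mathfrak h)$ pairing requires a short computation combining \eqref{XXY} in both of its equivalent forms.
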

	\begin{proof}
		Starting from the component expression \eqref{eq:CS4_expanded},
		\begin{align*}
			d\operatorname{CS}_4(A, B)=&d\langle 2F-\alpha(B), B\rangle_{\mathfrak{g}, \mathfrak{h}}\nonumber\\
			=&\langle 2d\Omega_1+\alpha(dB), B\rangle_{\mathfrak{g}, \mathfrak{h}}+\langle 2\Omega_1+\alpha(B), dB\rangle_{\mathfrak{g}, \mathfrak{h}}.
		\end{align*}
		Using the $2$-Bianchi identities \eqref{2-BI} to eliminate $d\Omega_1$, this becomes
		\begin{align}\label{eq:dCS4}
			d\operatorname{CS}_4(A, B)=&2\langle[\Omega_1, A]-\alpha(\Omega_2), B\rangle_{\mathfrak{g}, \mathfrak{h}}+2\langle \Omega_1+\alpha(B), \Omega_2-A\triangleright B\rangle_{\mathfrak{g}, \mathfrak{h}}.
		\end{align}
		From \eqref{symp} and \eqref{XXY}, we have
		\begin{align*}
			\langle [\Omega_1, A], B\rangle_{\mathfrak{g}, \mathfrak{h}}&=\langle \Omega_1, A\triangleright B\rangle_{\mathfrak{g}, \mathfrak{h}},\\
			\langle \alpha(B), A\triangleright B\rangle_{\mathfrak{g}, \mathfrak{h}}&=\langle \alpha(A\triangleright B), B\rangle_{\mathfrak{g}, \mathfrak{h}}.
		\end{align*}
		The Peiffer identity \eqref{YyY'} implies
		\begin{align*}
			\alpha(A\triangleright B)&=[A, \alpha(B)], \\
			\alpha(B)\triangleright B&=[B, B]=0,
		\end{align*}
		and hence
		\begin{align*}
			\langle \alpha(\Omega_2), B\rangle_{\mathfrak{g}, \mathfrak{h}}&=\langle \alpha(B), \Omega_2\rangle_{\mathfrak{g}, \mathfrak{h}},\\
			\langle \alpha(B), A\triangleright B\rangle_{\mathfrak{g}, \mathfrak{h}}&=\langle [A, \alpha(B)], B\rangle_{\mathfrak{g}, \mathfrak{h}}\\
			&=\langle A, \alpha(B)\triangleright B\rangle_{\mathfrak{g}, \mathfrak{h}}\\
			&=0.
		\end{align*}
		Substituting these relations into \eqref{eq:dCS4} yields
		\begin{align*}
			d\operatorname{CS}_4(A, B)=2\langle \Omega_1, \Omega_2\rangle_{\mathfrak{g}, \mathfrak{h}}=	P_5(\Omega_1, \Omega_2).
		\end{align*}
		
	\end{proof}
	
	Equation \eqref{2Chern-form} is equivalent to the compact generalized identity
	\begin{equation*}
		\langle \!\langle \mathcal{F}, \mathcal{F}\rangle\!\rangle=\underline{d}\langle \!\langle \mathcal{A}, \underline{d}\mathcal{A}+\dfrac{1}{3}[\mathcal{A}, \mathcal{A}]\rangle\!\rangle,
	\end{equation*}
	which mirrors the structure of the ordinary Chern–Weil theorem. The 2-Chern form $P_5(\Omega_1, \Omega_2)$ further satisfies essential properties such as  \textbf{2-gauge invariance} and a \textbf{2-Chern–Weil theorem}.  A complete discussion can be found in Ref.~\cite{SDH4}.
	
	We now examine the behaviour of the 2CS 4-form under 2-gauge transformations. A key result from Refs.~\cite{ASBV,HC-2024} will be used in the analysis.
	\begin{proposition}
		Under the 2-gauge transformation \eqref{Ag1}, the 2CS 4-form \eqref{eq:CS4_expanded} transforms as
		\begin{align*}
			\operatorname{CS}_4(A', B')=&\operatorname{CS}_4(A, B)-d\big(\langle gAg^{-1}, F(\phi)\rangle_{\mathfrak{g}, \mathfrak{h}}+ \langle \alpha(\phi), d\phi+\dfrac{1}{3}[\phi, \phi]\rangle_{\mathfrak{g}, \mathfrak{h}}\nonumber\\
			&-\langle dgg^{-1}+\alpha(\phi), g\triangleright B+F(\phi)\rangle_{\mathfrak{g}, \mathfrak{h}}\big).
		\end{align*}
	\end{proposition}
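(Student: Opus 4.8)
The plan is to show that the difference $\operatorname{CS}_4(A',B')-\operatorname{CS}_4(A,B)$ is exact by first proving it is closed and then pinning down the explicit potential. Closedness is immediate from results already established. The generalized Chern--Weil relation \eqref{2Chern-form} reads $d\operatorname{CS}_4=P_5(\Omega_1,\Omega_2)=\langle\!\langle\mathcal F,\mathcal F\rangle\!\rangle$, and under the transformation \eqref{Ag1} the generalized curvature transforms covariantly, $\mathcal F'=\operatorname{\textbf{Ad}}_{\mathcal G}\mathcal F$. Theorem~\ref{4.3} states that the pairing $\langle\!\langle-,-\rangle\!\rangle$ is $\operatorname{\textbf{Ad}}_{\mathcal G}$-invariant, whence $\langle\!\langle\mathcal F',\mathcal F'\rangle\!\rangle=\langle\!\langle\mathcal F,\mathcal F\rangle\!\rangle$ and therefore $d\big(\operatorname{CS}_4(A',B')-\operatorname{CS}_4(A,B)\big)=0$. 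By the ordinary Poincar\'e lemma the difference is locally $d$-exact, so a potential $3$-form of the claimed degree exists; what remains is to compute it.

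To extract the explicit three-term potential I would substitute the component laws \eqref{Ag1}, namely $A'=\operatorname{Ad}_g A-dgg^{-1}-\alpha(\phi)$ and $B'=g\triangleright B-d\phi-\phi\phi-A'\triangleright\phi$, directly into the expanded form \eqref{eq:CS4_expanded}. The convenient first move is to write $F'=\Omega_1'+\alpha(B')$ with $\Omega_1'=\operatorname{Ad}_g\Omega_1$, so that $\langle 2F'-\alpha(B'),B'\rangle_{\mathfrak g,\mathfrak h}=2\langle\operatorname{Ad}_g\Omega_1,B'\rangle_{\mathfrak g,\mathfrak h}+\langle\alpha(B'),B'\rangle_{\mathfrak g,\mathfrak h}$. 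After inserting $B'$, I would strip the conjugation on $\Omega_1$ using the $G$-invariance \eqref{gin}, and then systematically convert every $\triangleright$-action into a bracket or a pairing amenable to integration by parts, using the infinitesimal invariance \eqref{XXY}, the symmetry \eqref{symp}, and the Peiffer identities \eqref{YyY'}, notably $\alpha(A\triangleright\phi)=[A,\alpha(\phi)]$. Each integration by parts peels off a total derivative; the aim is to regroup the result as $\operatorname{CS}_4(A,B)$ plus an explicit $d(\cdots)$. The summand $\langle\alpha(\phi),d\phi+\tfrac13[\phi,\phi]\rangle_{\mathfrak g,\mathfrak h}$ is the $2$-group analogue of the Wess--Zumino term and should arise from the pure-gauge Maurer--Cartan piece $\mathcal G^{-1}\underline d\mathcal G$, while the mixed summands $\langle \operatorname{Ad}_g A,F(\phi)\rangle_{\mathfrak g,\mathfrak h}$ and $-\langle dgg^{-1}+\alpha(\phi),g\triangleright B+F(\phi)\rangle_{\mathfrak g,\mathfrak h}$ collect the cross terms between $\operatorname{Ad}_g A$, $dgg^{-1}$, $\alpha(\phi)$ and $g\triangleright B$.

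I expect the main obstacle to be organizational rather than a single deep step: the substitution generates a proliferation of cross terms, and the difficulty is to apply \eqref{XXY}, \eqref{symp} and \eqref{YyY'} in the right order so that all non-exact contributions either cancel or reassemble into $\operatorname{CS}_4(A,B)$, leaving precisely $-d$ of the stated expression. The delicate point is verifying that the putative bulk Wess--Zumino contribution is itself exact, so that no term beyond a total derivative survives; this is exactly where I would lean on the template supplied by \cite{ASBV, HC-2024}, which gives the general form of the Chern--Simons variation for higher connections and thereby dictates how the terms should group. The closedness argument of the first paragraph serves as a built-in consistency check, since any potential produced by the direct expansion must agree with the claimed one up to a closed form.
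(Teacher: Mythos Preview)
The paper does not supply its own proof of this proposition: it introduces the statement with the sentence ``A key result from \cite{ASBV, HC-2024} will be used in the analysis'' and then simply records the formula, so there is no in-paper argument to compare your proposal against. Your outline is therefore being judged on its own merits.

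Your closedness step is correct and self-contained within the paper's framework: combining \eqref{2Chern-form} with the covariance $\mathcal F'=\operatorname{\textbf{Ad}}_{\mathcal G}\mathcal F$ (which holds for the form \eqref{Ag1} of the transformation just as for the other form) and Theorem~\ref{4.3} gives $d\bigl(\operatorname{CS}_4(A',B')-\operatorname{CS}_4(A,B)\bigr)=0$ immediately. This is a clean consistency check that the paper itself does not spell out.

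For the explicit potential, your plan of direct substitution into \eqref{eq:CS4_expanded} followed by repeated use of \eqref{gin}, \eqref{symp}, \eqref{XXY} and \eqref{YyY'} is exactly the standard route and is what the cited references carry out. Two remarks. First, the Poincar\'e lemma is not actually needed and is slightly misleading here: the proposition is a pointwise identity of globally defined forms, so the potential must be produced constructively by the algebra, not merely asserted to exist locally. Your substitution strategy does this, so the appeal to Poincar\'e is redundant. Second, your instinct about the ``delicate point'' is right: the crux is that the would-be Wess--Zumino piece $\langle\alpha(\phi),d\phi+\tfrac13[\phi,\phi]\rangle_{\mathfrak g,\mathfrak h}$ is genuinely exact in the 2-group setting (unlike its Lie-algebra analogue), and this follows from the Peiffer identity $\alpha(\phi)\triangleright\phi=[\phi,\phi]$ together with \eqref{symp}. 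Once that is secured, the remaining cross terms reorganise into the two mixed summands by the invariance identities you list. What you have is a correct strategy rather than a completed proof; the missing content is the bookkeeping, which the paper outsources to \cite{ASBV, HC-2024}.
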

	
	The proposition shows that the gauge variation of the 2CS form is a total derivative. Hence, the  non invariance of
	2CS theory is  holographic, i.e.,  it reduces to a boundary term  \cite{Z-2021}. This contrasts with ordinary Chern–Simons theory, whose gauge variation produces the well-known Wess--Zumino--Witten term. 
	
	On a closed 4-manifold $M_4$, the 2CS action is defined by
	\begin{equation}\label{2CS}
		S_{2CS}=\int_{M_4} \operatorname{CS}(\mathcal{A})=\int_{M_4} \langle 2F(A) - \alpha(B), B \rangle_{\mathfrak{g}, \mathfrak{h}}.
	\end{equation}
	Two important special cases are worth noting:
	\begin{itemize}
		\item If the crossed module is \emph{skeletal} ($\alpha=0$), the action reduces to 4D BF theory.
		\vspace{-1mm}
		\item If the crossed module is \emph{trivial} ($\alpha=\mathrm{id}$), the action reproduces, in particular, the 4D BF-BB theory, which is conjectured to coincide with the Crane--Yetter--Broda topological field theory \cite{JCB-1996,MGFGCG,FGPT}.
	\end{itemize}

	\subsubsection{5D 3-Chern–Simons  theory}
	Following the same construction pattern as for the 2CS theory, we now build a 3CS theory based on the Lie 3-algebra-valued generalized connection. The theory is naturally defined on 5-dimensional manifolds and resides within the framework of 3-gauge theory. For a 3-connection $(A, B, C)$, we  define the corresponding 3CS form.
	\begin{definition}
		For  a type $N=2$ generalized connection
		$\mathcal{A}=A+ B\xi^1+ B\xi^2+ C\xi^1\xi^2$ (see \eqref{2cnn}), the \textbf{3CS 5-form} is defined by
		\begin{equation}\label{3CS-5}
			\operatorname{CS}(\mathcal{A}):=\operatorname{CS}_5(A, B, C):=\langle \!\langle \mathcal{A}, \underline{d}\mathcal{A}+\dfrac{1}{3}[\mathcal{A}, \mathcal{A}]\rangle\!\rangle.
		\end{equation}
	\end{definition}
	
	\begin{lemma}
		The 3CS form \eqref{3CS-5} expands in components as
		\begin{align}\label{eq:CSexpanded}
			\operatorname{CS}_5(A, B, C)=\langle 2F-\alpha(B), C\rangle_{\mathfrak{g}, \mathfrak{l}}+\langle B, \Omega_2\rangle_{\mathfrak{h}}-d\langle A, C\rangle_{\mathfrak{g}, \mathfrak{l}}
		\end{align}
		where $F=dA+\dfrac{1}{2}[A, A]$.
	\end{lemma}
	\begin{proof}
		Using \eqref{AA3} and \eqref{d3}, we compute
		\begin{align}\label{CS_5}
			&	\operatorname{CS}_5(A, B, C)\nonumber\\
			=&\langle \!\langle A+ B\xi^1+ B\xi^2+ C\xi^1\xi^2, dA -\alpha(B)+\big( d B -\beta(C)\big)\xi^1\nonumber\\
			&+ dB\xi^2+ dC\xi^1\xi^2+\dfrac{1}{3}\big([A, A]+ 2A\triangleright B\xi^1+ 2A\triangleright B\xi^2\nonumber\\
			&+ (2A\triangleright C +2\{B, B\})\xi^1\xi^2\big)\rangle\!\rangle\nonumber\\
			=&\langle A, dC + \dfrac{2}{3}(A\triangleright C +\{B, B\})\rangle_{\mathfrak{g}, \mathfrak{l}} +\langle dA-\alpha(B)\nonumber\\
			&+\dfrac{1}{3}[A, A], C\rangle_{\mathfrak{g}, \mathfrak{l}}+\langle B, dB-\beta(C)+\dfrac{2}{3}A\triangleright B\rangle_{\mathfrak{h}}.
		\end{align}
		From \eqref{XZ} and \eqref{XYY}, we obtain 
		\begin{align*}
			\langle A, A\triangleright C\rangle_{\mathfrak{g}, \mathfrak{l}}=\langle [A, A], C\rangle_{\mathfrak{g}, \mathfrak{l}},\ \ \langle A, \{B, B\}\rangle_{\mathfrak{g}, \mathfrak{l}}=\dfrac{1}{2}\langle B, A\triangleright B\rangle_{\mathfrak{h}},
		\end{align*}
		and the identity
		\begin{align*}
			\langle A, dC\rangle_{\mathfrak{g}, \mathfrak{l}}=\langle dA, C\rangle_{\mathfrak{g}, \mathfrak{l}}-d\langle A, C\rangle_{\mathfrak{g}, \mathfrak{l}}.
		\end{align*}
		Substituting these relations into \eqref{CS_5} yields
		\begin{align*}
			\operatorname{CS}_5(A, B, C)=\langle 2F-\alpha(B), C\rangle_{\mathfrak{g}, \mathfrak{l}}+\langle B, \Omega_2\rangle_{\mathfrak{h}}-d\langle A, C\rangle_{\mathfrak{g}, \mathfrak{l}},
		\end{align*}
		where $F=dA+\frac12[A,A]$.
		
	\end{proof}
	
	As in the 2CS case, our expression for the 3CS form differs from that of Ref.~\cite{SDH4} only by the addition of the exact term $d\langle A,C\rangle_{\mathfrak{g},\mathfrak{l}}$. Consequently, the two prescriptions define the same action functional on manifolds without boundary.
	
	The generalized curvature associated with $\mathcal{A}=A+ B\xi^1+ B\xi^2+ C\xi^1\xi^2$ is
	\begin{equation*}
		\mathcal{F}= \Omega_1+ \Omega_2\xi^1+ \big(\Omega_2+\beta(C)\big)\xi^2+ \Omega_3\xi^1\xi^2.
	\end{equation*} 
	In order to construct a Chern-type form expressed directly in terms of the curvature components of the 3-connection, we instead consider the simplified generalized curvature
	\begin{equation*}
		\bar{\mathcal{F}}= \Omega_1+ \Omega_2\xi^1+ \Omega_2\xi^2+ \Omega_3\xi^1\xi^2,
	\end{equation*} 
	whose components are exactly $(\Omega_1,\Omega_2,\Omega_3)$. We then define the associated \textbf{3-Chern 6-form} by
	\begin{align*}
		P(\bar{\mathcal{F}}):&=P_6(\Omega_1, \Omega_2, \Omega_3):=\langle \!\langle 	\bar{\mathcal{F}}, 	\bar{\mathcal{F}}\rangle\!\rangle\nonumber\\
		&=2\langle \Omega_1, \Omega_3\rangle_{\mathfrak{g}, \mathfrak{l}}+\langle \Omega_2, \Omega_2\rangle_{\mathfrak{h}}.
	\end{align*}
	
	\begin{proposition}
		The exterior derivative of the 3CS 5-form is the 3-Chern 6-form, 
		\begin{equation}\label{3Chern-3CS}
			P_6(\Omega_1, \Omega_2, \Omega_3)=d\operatorname{CS}_5(A, B, C).
		\end{equation}
	\end{proposition}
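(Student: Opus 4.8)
The plan is to differentiate the component expansion \eqref{eq:CS5_expanded} directly, in close parallel with the computation just carried out for the 2-Chern form. First I would observe that the exact piece $-\,d\langle A,C\rangle_{\mathfrak{g},\mathfrak{l}}$ is killed by $d$, so only $d\langle 2F-\alpha(B),C\rangle_{\mathfrak{g},\mathfrak{l}}+d\langle B,\Omega_2\rangle_{\mathfrak{h}}$ survives. I would then rewrite $2F-\alpha(B)=2\Omega_1+\alpha(B)$ using \eqref{3-cu} and apply the graded Leibniz rule to each pairing, so that the only derivatives appearing are $d\Omega_1$, $dB$, $d\Omega_2$ and $dC$.

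The next step is to eliminate every bare derivative. I would substitute the 3-Bianchi identities \eqref{3BI} for $d\Omega_1=-[A,\Omega_1]-\alpha(\Omega_2)$ and $d\Omega_2=-A\triangleright\Omega_2+\Omega_1\triangleright B-\beta(\Omega_3)$, and solve \eqref{3-cu} for $dB=\Omega_2-A\triangleright B+\beta(C)$ and $dC=\Omega_3-A\triangleright C-\{B,B\}$. After substitution the two target terms $2\langle\Omega_1,\Omega_3\rangle_{\mathfrak{g},\mathfrak{l}}$ (from $\langle 2\Omega_1,dC\rangle$) and $\langle\Omega_2,\Omega_2\rangle_{\mathfrak{h}}$ (from $\langle dB,\Omega_2\rangle$) appear immediately, so the task reduces to showing that all remaining terms cancel.

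These cancellations I expect to pair off by the ingredient they involve. The two copies $\langle\alpha(B),\Omega_3\rangle_{\mathfrak{g},\mathfrak{l}}$ and $-\langle B,\beta(\Omega_3)\rangle_{\mathfrak{h}}$ cancel through the adjointness of the invariant forms, $\langle\alpha(\cdot),\cdot\rangle_{\mathfrak{g},\mathfrak{l}}=\langle\cdot,\beta(\cdot)\rangle_{\mathfrak{h}}$. The $\alpha(\Omega_2)$- and $\beta(C)$-terms collapse to $\pm\langle\Omega_2,\beta(C)\rangle_{\mathfrak{h}}$ and cancel by the same adjointness together with $\alpha\circ\beta=0$. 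The connection terms $\langle[A,\Omega_1],C\rangle_{\mathfrak{g},\mathfrak{l}}$ and $\langle\Omega_1,A\triangleright C\rangle_{\mathfrak{g},\mathfrak{l}}$ cancel against each other via the $G$-invariance of the pairing, as do $\langle A\triangleright B,\Omega_2\rangle_{\mathfrak{h}}$ and $\langle B,A\triangleright\Omega_2\rangle_{\mathfrak{h}}$, while the Peiffer identity \eqref{YyY'} in the form $\alpha(A\triangleright B)=[A,\alpha(B)]$ disposes of the remaining $A\triangleright C$ contribution. Finally the cubic terms $-\langle 2\Omega_1+\alpha(B),\{B,B\}\rangle_{\mathfrak{g},\mathfrak{l}}$ and $\langle B,\Omega_1\triangleright B\rangle_{\mathfrak{h}}$ are matched using the $\{\cdot,\cdot\}$-pairing identity $\langle X,\{B,B\}\rangle_{\mathfrak{g},\mathfrak{l}}=\tfrac12\langle B,X\triangleright B\rangle_{\mathfrak{h}}$ (the analogue of \eqref{XYY}), after which $\alpha(B)\triangleright B=[B,B]=0$ kills the last residue. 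What remains is exactly $2\langle\Omega_1,\Omega_3\rangle_{\mathfrak{g},\mathfrak{l}}+\langle\Omega_2,\Omega_2\rangle_{\mathfrak{h}}=P_6$.

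The main obstacle is bookkeeping rather than conceptual: the genuinely new Lie $3$-algebra data — the map $\beta$, the bracket $\{\cdot,\cdot\}$, and the extra curvature $\Omega_3$ — produce terms ($\beta(C)$, $\beta(\Omega_3)$, $\{B,B\}$) with no counterpart in the $N=1$ derivation, and each must be transported between the $\mathfrak{g},\mathfrak{l}$- and $\mathfrak{h}$-pairings with the correct graded sign before it can cancel. I would therefore fix a single sign convention for the graded Leibniz rule and the graded symmetry of the invariant forms at the outset, and check in particular that the coefficients work out so that $\langle\Omega_2,\Omega_2\rangle_{\mathfrak{h}}$ survives with coefficient exactly one while $\langle\Omega_1,\Omega_3\rangle_{\mathfrak{g},\mathfrak{l}}$ survives with coefficient two.
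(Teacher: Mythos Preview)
Your proposal is correct and follows essentially the same route as the paper's own proof: differentiate \eqref{eq:CS5_expanded}, rewrite $2F-\alpha(B)=2\Omega_1+\alpha(B)$, apply the Leibniz rule, substitute the 3-Bianchi identities \eqref{3BI} and the curvature definitions \eqref{3-cu}, and then cancel the remaining terms using the invariance properties \eqref{XZ}--\eqref{XYY} of the pairings. The paper organises a couple of the intermediate cancellations slightly differently (for instance, it converts $\langle\alpha(dB),C\rangle_{\mathfrak{g},\mathfrak{l}}$ to $\langle\beta(C),dB\rangle_{\mathfrak{h}}$ before substituting, rather than invoking the Peiffer identity on $\alpha(A\triangleright B)$), but the ingredients and the overall structure are the same.
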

	\begin{proof}
		Starting from the component expression \eqref{eq:CSexpanded}, we compute
		\begin{align}\label{dcs5}
			d\operatorname{CS}_5(A, B, C)=& d\langle 2\Omega_1+\alpha(B), C\rangle_{\mathfrak{g}, \mathfrak{l}}+d\langle B, \Omega_2\rangle_{\mathfrak{h}}\nonumber\\
			=&2\langle d\Omega_1, C\rangle_{\mathfrak{g}, \mathfrak{l}}+\langle \alpha(dB), C\rangle_{\mathfrak{g}, \mathfrak{l}}+\langle 2\Omega_1+\alpha(B), dC\rangle_{\mathfrak{g}, \mathfrak{l}}\nonumber\\
			&+\langle dB, \Omega_2\rangle_{\mathfrak{h}}+\langle B, d\Omega_2\rangle_{\mathfrak{h}}.
		\end{align}
		Inserting the 3-Bianchi identities \eqref{3BI}, we obtain
		\begin{align}\label{dcs51}
			d\operatorname{CS}_5(A, B, C)
			=&2\langle [\Omega_1, A]-\alpha(\Omega_2), C\rangle_{\mathfrak{g}, \mathfrak{l}}+\langle \beta(C), dB\rangle_{\mathfrak{g}, \mathfrak{l}} +2\langle \Omega_1,dC\rangle_{\mathfrak{g}, \mathfrak{l}}\nonumber\\
			&+\langle \alpha(B), \Omega_3-\{B, B\}-A\triangleright C\rangle_{\mathfrak{g}, \mathfrak{l}}+\langle\Omega_2, dB\rangle_{\mathfrak{h}}\nonumber\\
			&+\langle B, \Omega_1\triangleright B-A\triangleright \Omega_2-\beta(\Omega_3)\rangle_{\mathfrak{h}},
		\end{align}
		where we  used $\langle \alpha(dB), C\rangle_{\mathfrak{g}, \mathfrak{l}}=\langle \beta(C), dB\rangle_{\mathfrak{h}}$. 
		
		The invariance properties \eqref{XZ} and \eqref{YZ} imply
		\begin{align*}
			\bigl\langle\alpha(B),\,\Omega_3\bigr\rangle_{\mathfrak{g},\mathfrak{l}} &= \bigl\langle B,\,\beta(\Omega_3)\bigr\rangle_{\mathfrak{h}}, 
			\\ \bigl\langle[\Omega_1,A],\,C\bigr\rangle_{\mathfrak{g},\mathfrak{l}} &= \bigl\langle\Omega_1,\,A\triangleright C\bigr\rangle_{\mathfrak{g},\mathfrak{l}},\\[2pt]
			\bigl\langle\alpha(\Omega_2),\,C\bigr\rangle_{\mathfrak{g},\mathfrak{l}} &= \bigl\langle\Omega_2,\,\beta(C)\bigr\rangle_{\mathfrak{h}}, 
			\\
			\bigl\langle\beta(C),\,\beta(C)\bigr\rangle_{\mathfrak{h}} &= 0,\\[2pt]
			\bigl\langle\beta(C),\,A\triangleright B\bigr\rangle_{\mathfrak{h}} &= -\bigl\langle\alpha(B),\,A\triangleright C\bigr\rangle_{\mathfrak{g},\mathfrak{l}}, 
			\\
			\bigl\langle B,\,A\triangleright\Omega_2\bigr\rangle_{\mathfrak{h}} &= -\bigl\langle\Omega_2,\,A\triangleright B\bigr\rangle_{\mathfrak{h}}.
		\end{align*}
		Moreover, by  \eqref{XYY} we have
		\begin{align*}
			\langle \alpha(B), \{B, B\}\rangle_{\mathfrak{g}, \mathfrak{l}}&=2\langle B, \alpha(B)\triangleright B\rangle_{\mathfrak{h}}=0,\\
			\langle B, \Omega_1\triangleright B\rangle_{\mathfrak{h}}&=2\langle \Omega_1, \{B, B\}\rangle_{\mathfrak{g}, \mathfrak{l}}.
		\end{align*}
		Substituting these identities into \eqref{dcs51} and simplifying yields
		\begin{align*}
			d\operatorname{CS}_5(A, B, C)=2\langle \Omega_1, \Omega_3\rangle_{\mathfrak{g}, \mathfrak{l}}+\langle \Omega_2, \Omega_2\rangle_{\mathfrak{h}}.
		\end{align*}
		
	\end{proof}
	
	Equation \eqref{3Chern-3CS} admits an equivalent formulation in terms of generalized forms:
	\begin{equation*}
		\langle \!\langle \bar{\mathcal{F}}, \bar{\mathcal{F}}\rangle\!\rangle=\underline{d}\langle \!\langle \mathcal{A}, \underline{d}\mathcal{A}+\dfrac{1}{3}[\mathcal{A}, \mathcal{A}]\rangle\!\rangle.
	\end{equation*}
	Although this identity is formally analogous to the corresponding statements in ordinary Chern–Simons theory and 2CS theory, its interpretation in the present generalized framework is different, since $\bar{\mathcal{F}}$ is not the curvature of $\mathcal{A}$. Nevertheless, \eqref{3Chern-3CS} appears as a natural higher analogue of the Chern–Weil theorem. The 3-Chern form  is likewise invariant under 3-gauge transformations. A detailed discussion can be found in Ref.~\cite{SDH4}.
	
	On a closed 5-manifold $M_5$, the 3CS action is defined by
	\begin{align}\label{3CS}
		S_{3CS}=&\int_{M_5} \operatorname{CS}(\mathcal{A})\nonumber\\
		=&\int_{M_5}\big(\langle 2F-\alpha(B), C\rangle_{\mathfrak{g}, \mathfrak{l}}+\langle B, \Omega_2\rangle_{\mathfrak{h}}\big).
	\end{align}

	In principle, one may also investigate the transformation of the  3CS form under 3-gauge transformations. At present, however, the algebraic complexity of the required manipulations has so far precluded an explicit closed-form result. In view of the interesting holographic behaviour found for the 2CS theory, a systematic analysis of this question remains an interesting direction for future work.

	%By construction, the equations of motion of \eqref{2CS} will be precisely 
	%\begin{equation*}
		%	\mathcal{F}=0
		%\end{equation*}
	%Explicitly, it is simple to check that
	%\begin{align*}
		%	&\delta_CS_{3CS}=0: \Omega_1=dA+\dfrac{1}{2}[A, A]-\alpha(B)=0,\\
		%	&\delta_BS_{3CS}=0: \Omega_2=dB+A\triangleright B-\beta(B)=0,\\
		%	&\delta_AS_{3CS}=0: \Omega_3=dC+A\triangleright C+\{B, B\}=0,\\
		%\end{align*}
	
	\subsection{Higher Yang–Mills theory}\label{sec5.2}
	Yang–Mills theory is the non-abelian generalization of Maxwell electrodynamics. Accordingly, just as Maxwell theory admits a higher extension within the framework of higher gauge theory \cite{Pfeiffer,Henneaux},  a parallel family of HYM theories has been formulated in Refs.~\cite{JCB02,Song1,SDH-BFYM}. In this subsection, we present a unified construction of these HYM theories using the higher generalized forms.
	
	Ordinary Yang–Mills theory is formulated on a principal $G$-bundle, where $G$ is a compact simple Lie group. The special abelian case  $G=\text{U}(1)$  reduces  to Maxwell electrodynamics. On the base manifold $M$,  a connection is described by a $\mathfrak{g}$-valued 1-form $A\in \Lambda^1(M, \mathfrak{g})$.  Its curvature is the $\mathfrak{g}$-valued 2-form
	\begin{equation*}
		F=dA+\dfrac{1}{2}[A, A]\in \Lambda^2(M, \mathfrak{g}),
	\end{equation*}
	which  satisfies the Bianchi identity $dF+[A, F]=0$.
	The Yang–Mills action is then defined as the inner product of the curvature with itself,
	\begin{equation*}
		(\!( F, F)\!)=\int_M \langle F, *F\rangle_{\mathfrak{g}},
	\end{equation*}
	where the pairing is the one introduced in \eqref{inn0}.
	
	In parallel, 2YM theory Ref.~\cite{JCB02} is formulated on a principal 2-bundle whose gauge group is a Lie 2-group $(H, G; \bar{\alpha}, \bar{\triangleright})$. For a 2-connection $(A, B)$ with curvature $(\Omega_1, \Omega_2)$ as in \eqref{2-curvature}, the action can be written as the inner product of the generalized curvature $\mathcal{F}=\Omega_1+\Omega_2\xi$,
	\begin{align*}
		(\!(\mathcal{F}, \mathcal{F} )
		\!)&=(\!(\Omega_1+\Omega_2\xi, \Omega_1+\Omega_2\xi )
		\!)\\
		&=\int_M\big( \langle \Omega_1, *\Omega_1\rangle_{\mathfrak{g}} + \langle \Omega_2, *\Omega_2\rangle_{\mathfrak{h}} \big),
	\end{align*}
	which follows directly from the definition \eqref{g-inn1}.
	
	Proceeding to the next level, 3YM theory \cite{Song1} is defined on a principal 3-bundle with gauge group given by a Lie 3-group $(L, H, G;\bar{\beta}, \bar{\alpha}, \bar{\triangleright}, \left\{\cdot,\cdot\right\})$. For a 3-connection $(A, B, C)$ whose curvatures $(\Omega_1, \Omega_2, \Omega_3)$ are defined in \eqref{3-cu}, the action can be expressed as the inner product of the  generalized form of type $N=2$,
	\begin{equation*}
		\tilde{\mathcal{F}}= \Omega_1+ \dfrac{1}{2}\Omega_2\xi^1+\dfrac{1}{2} \Omega_2\xi^2+ \Omega_3\xi^1\xi^2,
	\end{equation*}
	namely
	\begin{align*}
		(\!(\tilde{\mathcal{F}}, \tilde{\mathcal{F}} )
		\!)=&(\!(\Omega_1+ \dfrac{1}{2}\Omega_2\xi^1+\dfrac{1}{2} \Omega_2\xi^2+ \Omega_3\xi^1\xi^2, \Omega_1+ \dfrac{1}{2}\Omega_2\xi^1\nonumber\\
		&+\dfrac{1}{2} \Omega_2\xi^2+ \Omega_3\xi^1\xi^2)
		\!)\\
		=&\int_M\big( \langle \Omega_1, *\Omega_1\rangle_{\mathfrak{g}} + \langle \Omega_2, *\Omega_2\rangle_{\mathfrak{h}} +\langle \Omega_3, *\Omega_3\rangle_{\mathfrak{l}}  \big),
	\end{align*}
	which follows directly from the definition of the inner product in \eqref{g-inn2}.
	
	The constructions above show that the generalized inner product   $(\!( -, - )\!)$ offers a unified way to formulate HYM actions. By applying this inner product to the corresponding generalized curvature forms, we recover the standard 2YM action from the type $N=1$ curvature $\mathcal{F}$, and the 3YM action from a suitably normalized type $N=2$ curvature $\tilde{\mathcal{F}}$.  
	This directly generalizes the ordinary Yang–Mills action, which is the type $N=0$. Consequently, the formalism developed here provides a  candidate for constructing Yang–Mills-like actions for higher gauge theories, where the action would be defined analogously as the generalized inner product of the appropriate higher curvature.

	\section{Conclusion and outlook}\label{sec-6}
	In this article, we have developed a unified formulation of higher gauge theory within the framework of GDC. In particular, the higher gauge structures associated with Lie $2$- and Lie $3$-groups are encoded by generalized forms of types $N=1$ and $N=2$, respectively. Our construction completes and extends the results of Ref.~\cite{SDH4} by incorporating gauge transformations via higher group-valued generalized $0$-forms, and by providing a uniform derivation of the action functionals for both HCS and HYM theories.
	
	The framework establishes a precise correspondence, summarized in Fig.~\ref{figure 1}, between the type index $N$ and the categorical order of the gauge theory. Ordinary gauge theory corresponds to type $N=0$. For $N=1$, $2$-gauge theory is encoded by a $2$-connection as a generalized $1$-form in $\mathbf{H}_{(1)}$, together with the $2$-gauge transformation as a generalized $0$-form in $\mathbf{G}_{(1)}$. Likewise, type $N=2$ describes $3$-gauge theory, with the corresponding objects living in $\mathbf{H}_{(2)}$ and $\mathbf{G}_{(2)}$. This hierarchy organizes both ordinary and higher theories within a single coherent framework, thereby facilitating the transfer of well-established techniques from ordinary gauge theory to higher gauge theories. More broadly, the formalism developed here indicates that GDC provides a modular and recursively extensible scheme for defining fields, symmetries, and action principles in higher gauge theory.
	\begin{figure*}[htp!] 
		\begin{center}
			\begin{tikzpicture}	
				\node[align=center] (G) at (0,0){\scriptsize $\textbf{G}_{(0)}$};
				\node[align=center] (G') at (-1,0){\scriptsize $N=0$:};
				\node[align=center] (H) at (2,0){\scriptsize $\textbf{H}_{(0)}$};
				
				\node[align=center] (G1) at (1,-1){\scriptsize $\textbf{G}_{(1)}$};
				\node[align=center] (G1') at (0,-1){\scriptsize $N=1$:};
				%	\node[align=center] (g1) at (-0.5,-1){\scriptsize {\color{blue}$\{(1, \phi)g\}=$}};
				
				\node[align=center] (H1) at (3,-1){\scriptsize $\textbf{H}_{(1)}$};
				%	\node[align=center] (h1) at (4.3,-1){\scriptsize {\color{blue}$\leftarrow$ 2-connection}};
				\node[align=center] (ar) at (2.7,0){\scriptsize $\Longrightarrow$};
				\node[align=center] (ar) at (6.8,0){\scriptsize  \textbf{gauge theory: connection, gauge transformation}};
				\node[align=center] (ar) at (3.8,-1){\scriptsize $\Longrightarrow$};
				\node[align=center] (ar) at (8.3,-1){\scriptsize \textbf{2-gauge theory: 2-connection, 2-gauge transformation}};
				\node[align=center] (ar) at (4.7,-2){\scriptsize $\Longrightarrow$};
				\node[align=center] (ar) at (9.2,-2){\scriptsize \textbf{3-gauge theory: 3-connection, 3-gauge transformation}};
				
				\node[align=center] (b0) at (1,0.1){\scriptsize $\blacktriangleright$};
				\node[align=center] (b1) at (2,-0.9){\scriptsize $\blacktriangleright$};
				\node[align=center] (b2) at (3,-1.9){\scriptsize $\blacktriangleright$};
				
				\node[align=center] (G2) at (2,-2){\scriptsize $\textbf{G}_{(2)}$};
				\node[align=center] (G2') at (1,-2){\scriptsize $N=2$:};
				\node[align=center] (H2) at (4,-2){\scriptsize $\textbf{H}_{(2)}$};
				\node[align=center] (dot) at (1,0){ };
				\node[align=center] (dot1) at (2,-1){ };
				\node[align=center] (dot2) at (3,-2){ };
				\node[align=center] (dot3) at (3,-2.2){ $\vdots$};
				\node[align=center] (dot3) at (3,-2.7){ $\vdots$};
				
				\node[align=center] (left01) at (-2,-0.6){};
				\node[align=center] (left12) at (-2,-1.3){};
				\node[align=center] (right01) at (5.5,-0.6){};
				\node[align=center] (right12) at (5.5,-1.3){};
				
				%				\draw[blue,line cap=miter] (left01)--(left12);
				%				\draw[blue,line cap=miter] (left01)--(right01);
				%				\draw[blue,line cap=miter] (right01)--(right12);
				%				\draw[blue,line cap=miter] (left12)--(right12);
				
				\draw[-] (G) -- (H) node[midway, above, sloped] {};
				\draw[-] (dot) -- (G1) node[midway, above, sloped] {};
				\draw[-] (G1) -- (H1) node[midway, above, sloped] {};
				\draw[-] (dot1) -- (G2) node[midway, above, sloped] {};
				\draw[-] (G2) -- (H2) node[midway, above, sloped] {};
			\end{tikzpicture}
		\end{center}
		\caption{Correspondence between generalized forms and (higher) gauge theories}\label{figure 1}
	\end{figure*}
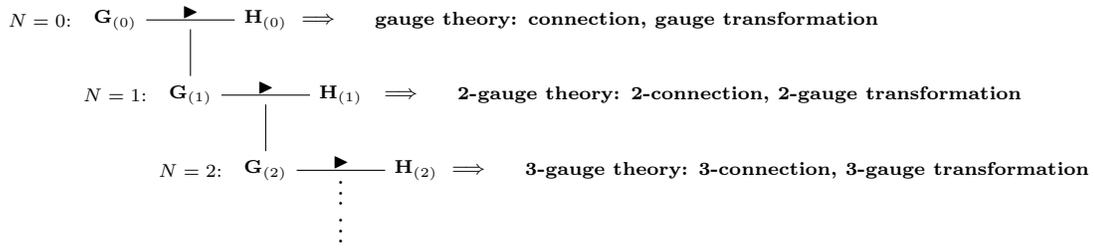

	Several natural directions for future research arise from this work. A first step is to extend the GDC encoding to higher types ($N>2$), and to establish general recursive formulae for higher connections, curvatures, and gauge transformations, together with the gauge invariance of the associated HCS and HYM-type actions. Further directions include supersymmetric extensions of the formalism and systematic applications to theories with higher-form gauge fields and symmetries, such as effective descriptions of topological phases and models relevant to quantum gravity.

	\section{Acknowledgments}
		We thank the Beijing International Center for Mathematical Sciences at Peking University for providing a supportive research environment.

	%
%	\section*{Data Availability Statement}
%	Data sharing is not applicable to this article as no new data were created or analyzed in this study.

	\appendix
	\section{Higher groups and algebras}
	This section collects the fundamental definitions and relations that underpin the terminology and notation used in the main text. We  recall essential aspects of higher groups with a focus on strict Lie 2-groups and Lie 3-groups. These are described by crossed modules and 2-crossed modules of Lie groups, respectively. 
	Further details on higher groups can be found in Refs.~\cite{JFM,TRMV,HC,AMTPRB,YHMNRY,Martins:2010ry,TRMV2}. 
	\subsection{Lie 2-groups and Lie 2-algebras}\label{A.1}
	In a (strict) 2-gauge theory, the underlying algebraic structure is a Lie 2-group, which admits an equivalent description as a Lie crossed module. Accordingly, its infinitesimal counterpart, a Lie 2-algebra, is equivalently described by a differential crossed module.
	
	\textbf{Lie crossed modules.}
	A  Lie crossed module is a quadruple $(H, G; \bar{\alpha}, \bar{\triangleright})$ consisting of two Lie groups $H$ and $G$, a Lie group homomorphism $\bar{\alpha}: H\longrightarrow G$, and a smooth action $\bar{\triangleright}: G \times H \longrightarrow H$ of $G$ on $H$ by Lie group automorphisms. 
	These data satisfy the following compatibility conditions for all $g\in G$ and $h, h'\in H$,
	\begin{equation}\label{equi}
		\bar{\alpha} \left(g \bar{\triangleright} h\right) = g \bar{\alpha} \left(h\right) g^{-1},\qquad	\bar{\alpha} \left(h\right) \bar{\triangleright} h' = h h' h^{-1}.
	\end{equation}
	
	\textbf{Differential crossed modules.}
	A  differential crossed module  is a quadruple $(\mathfrak{h}, \mathfrak{g}; \alpha, \triangleright)$ consisting of two Lie algebras $\mathfrak{h}$ and $\mathfrak{g}$,  and  Lie algebra homomorphisms  $\alpha: \mathfrak{h}\longrightarrow \mathfrak{g}$ and $\triangleright: \mathfrak{g}\longrightarrow \operatorname{Der}(\mathfrak{h})$.  It is required that for all $X, X_1, X_2\in \mathfrak{g}$ and $Y, Y', Y_1, Y_2\in \mathfrak{h}$ the following hold:
	\begin{enumerate}
		\item The $\mathfrak{g}$-equivariance of $\alpha$ and the Peiffer identity,
		\begin{align}\label{YyY'}
			\alpha(X\triangleright Y)=[X, \alpha(Y)],\qquad
			\alpha(Y)\triangleright Y'=\left[Y,Y'\right].
		\end{align}
		As a consequence of \eqref{YyY'}, the kernel $\operatorname{ker} \alpha \subset \mathfrak{h}$ is an abelian ideal.
		\item The induced map $\triangleright: \mathfrak{g} \times \mathfrak{h} \longrightarrow \mathfrak{h}$ is bilinear and satisfies the derivation properties,
		\begin{align}
			X \triangleright \left[Y_1 ,  Y_2\right]  &= \left[ X \triangleright Y_1 , Y_2\right] +\left[  Y_1 , X \triangleright Y_2\right],\label{XY_1Y_2}\\
			\left[X_1, X_2\right] \triangleright Y &=X_1 \triangleright (X_2 \triangleright Y) - X_2 \triangleright(X_1 \triangleright Y). \label{X_1X_2Y}
		\end{align}
	\end{enumerate}
	
	The differential of a Lie crossed module $(H, G; \bar{\alpha}, \bar{\triangleright})$ yields a differential crossed module $(\mathfrak{h}, \mathfrak{g}; \alpha, \triangleright)$  \cite{Martins:2010ry}.  Here, $\mathfrak{g}$ and $\mathfrak{h}$ are the Lie algebras of $G$ and $H$, respectively, and the Lie algebra homomorphisms are obtained by differentiation,
	\begin{align*}
		\alpha=d\bar{\alpha}|_{1_H},\ \ \triangleright=d\bar{\triangleright}|_{1_G},
	\end{align*}
	where $\alpha$ is the differential of the homomorphism $\bar{\alpha}$ at the identity $1_H\in H$, $\triangleright$ is the differential of the adjoint action $\bar{\triangleright}: G \longrightarrow \operatorname{Aut}(H)$ at $1_G\in G$. The linear map $\triangleright: \mathfrak{g}\times \mathfrak{h}\longrightarrow \mathfrak{h}$ defined in this way automatically satisfies the derivation properties \eqref{XY_1Y_2} and \eqref{X_1X_2Y}.  Furthermore, differentiating the two compatibility conditions \eqref{equi} of the Lie crossed module gives precisely the $\mathfrak{g}$-equivariance and Peiffer identities \eqref{YyY'} of the differential crossed module.

	\textbf{Mixed relations.}
	Given a Lie crossed module $(H, G; \bar{\alpha}, \bar{\triangleright})$ and its associated differential crossed module $(\mathfrak{h}, \mathfrak{g}; \alpha, \triangleright)$,  the group $G$ acts on the Lie algebras $\mathfrak{g}$ and $\mathfrak{h}$ in a compatible way. 
	On $\mathfrak{g}$ the action is the adjoint action
	$Ad: G \times \mathfrak{g} \longrightarrow \mathfrak{g}$, defined by
	$Ad_g X= g X g^{-1}$.
	On $\mathfrak{h}$ the action is induced from the group action $\bar{\triangleright}$ and will be denoted by the same symbol
	$\triangleright: G \times \mathfrak{h} \longrightarrow \mathfrak{h}$. These actions satisfy the following identities for all
	$g$, $g_1$, $g_2 \in G$, $X \in \mathfrak{g}$, $Y \in \mathfrak{h}$ and $h \in H $,
	\begin{align}
		(g_1 g_2)\triangleright Y& = g_1 \triangleright (g_2 \triangleright Y),\label{A.6}\\
		g \triangleright (X \triangleright Y)&= (Ad_g X)\triangleright (g \triangleright Y) \label{A.7},\\
		\alpha(g\triangleright Y)&= Ad_g \alpha(Y),\label{gY2}\\
		\alpha (h) \triangleright Y &= h Y h^{-1}. \label{hy}
	\end{align}
	The compatibility relations \eqref{A.6}--\eqref{hy} are essential in the construction of 2-gauge field theories.
	
	\textbf{$G$-invariant pairing}. For a Lie crossed module $(H, G; \bar{\alpha}, \bar{\triangleright})$, there exists a non-degenerate pairing \cite{Z-2021,ASBV},
	$$\langle-, -\rangle_{\mathfrak{g}, \mathfrak{h}}: \mathfrak{g}\times \mathfrak{h}\longrightarrow \mathbb{R}$$ 
	on the underlying differential crossed module $(\mathfrak{h}, \mathfrak{g}; \alpha, \triangleright)$.
	This pairing is $G$-invariant, 
	\begin{align}\label{gin}
		\langle g X g^{-1}, g \triangleright Y\rangle_{\mathfrak{g},\mathfrak{h}} = \langle X, Y\rangle_{\mathfrak{g},\mathfrak{h}}, 
	\end{align}
	for all $ g \in G, X \in \mathfrak{g}, Y \in \mathfrak{h}$,
	and satisfies the symmetry condition
	\begin{equation}\label{symp}
		\langle \alpha(Y_1), Y_2\rangle_{\mathfrak{g}, \mathfrak{h}}=\langle \alpha(Y_2), Y_1\rangle_{\mathfrak{g}, \mathfrak{h}},
	\end{equation}
	for all $Y_1,Y_2\in\mathfrak{h}$.
	
	Differentiating the $G$-invariance \eqref{gin} yields the corresponding infinitesimal $\mathfrak{g}$-invariance,
	\begin{equation}\label{XXY}
		\langle[X_1, X_2], Y\rangle_{\mathfrak{g}, \mathfrak{h}}=-\langle X_2, X_1\triangleright Y\rangle_{\mathfrak{g}, \mathfrak{h}}
	\end{equation}
	for all $ X_1, X_2\in \mathfrak{g}$ and $Y \in \mathfrak{h}$.
	
	\textbf{$G$-invariant bilinear forms.}
	A symmetric, non-dege- \\nerate and $G$-invariant structure on $(\mathfrak{h}, \mathfrak{g}; \alpha, \triangleright)$  can also be given by a pair of separate bilinear forms. It consists of non-degenerate symmetric forms $\langle-, -\rangle_{\mathfrak{g}}$ on $\mathfrak{g}$ and $\langle-, -\rangle_{\mathfrak{h}}$ on $\mathfrak{h}$ satisfying:
	\begin{enumerate}
		\item $\langle-, -\rangle_{\mathfrak{g}}$ is $G$-invariant, 
		\begin{equation*}
			\langle \mathrm{Ad}_g X, \mathrm{Ad}_g X'\rangle_{\mathfrak{g}}=\langle X, X'\rangle_{\mathfrak{g}}, \ \ \ \forall g\in G, \ X, X' \in \mathfrak{g}.
		\end{equation*}
		\item $\langle-, -\rangle_{\mathfrak{h}}$ is $G$-invariant, 
		\begin{equation*}
			\langle g\triangleright Y, g\triangleright Y'\rangle_{\mathfrak{h}}=\langle Y, Y'\rangle_{\mathfrak{h}}, \ \ \ \forall g\in G, \ Y, Y' \in \mathfrak{h}.
		\end{equation*}
	\end{enumerate}
	Further details on such invariant structures can be found in Ref.~\cite{Martins:2010ry}.

	\subsection{Lie 3-groups and Lie 3-algebras}\label{A.2}
	The algebraic framework underlying a (strict) 3-gauge theory is a Lie 3-group, which is equivalently described by a Lie 2-crossed module. Correspondingly, the infinitesimal structure, a Lie 3-algebra, is equivalently represented by a differential 2-crossed module.
	
	\textbf{Lie 2-crossed modules.}
	A Lie 2-crossed module is a tuple $(L, H, G;\bar{\beta}, \bar{\alpha}, \bar{\triangleright}, \left\{\cdot,\cdot\right\})$ consisting of a complex of Lie groups
	$$L \stackrel{\bar{\beta}}{\longrightarrow}H \stackrel{\bar{\alpha}}{\longrightarrow}G$$
	together with a smooth left action $\triangleright$ of $G$ on $L$, $H$ and itself (by conjugation) by automorphisms, 
	\begin{align*}
		g \bar{\triangleright} (e_1 e_2)=(g \bar{\triangleright} e_1)(g \bar{\triangleright} e_2), \ \ \ (g_1 g_2)\bar{\triangleright} e = g_1 \bar{\triangleright} (g_2 \bar{\triangleright} e),
	\end{align*}
	for all $g, g_1, g_2\in G, e, e_1, e_2\in H $ or $L$, and a $G$-equivariant smooth map $ \left\{\cdot,\cdot \right\} :H \times H \longrightarrow L $, called the Peiffer lifting, satisfying
	\begin{equation*}
		g \bar{\triangleright} \left\{ h_1, h_2 \right\} = \left\{g \bar{\triangleright} h_1, g \bar{\triangleright} h_2\right\}
	\end{equation*}
	for all $g \in G, h_1,h_2\in H$.
	These data are required to fulfill the following axioms for all $h_i\in H$ and $l_i\in L$:
	\begin{enumerate}
		\setlength{\itemsep}{4pt}
		\item The maps $\bar{\beta}$ and $\bar{\alpha}$ are $G$-equivariant and satisfy $\bar{\alpha} \circ \bar{\beta}=1_G$ (the constant map to the identity of $G$).
		
		\setlength{\itemsep}{4pt}
		\item $\bar{\beta} \left\{h_1,h_2\right\} =\lbrack\lbrack h_1 , h_2 \rbrack\rbrack $, where $\lbrack\lbrack h_1 , h_2 \rbrack\rbrack=h_1h_2h^{-1}_1 (\bar{\alpha}(h_1)\bar{\triangleright} h^{-1}_2)$.
		
		\setlength{\itemsep}{4pt}
		\item $\left[l_1, l_2\right]= \left\{ \bar{\beta} (l_1), \bar{\beta} (l_2)\right\}$  with $\left[l_1,l_2\right]=l_1 l_2 l_1 ^{-1} l_2 ^{-1}$.
		
		\setlength{\itemsep}{4pt}
		\item $\left\{ h_1 h_2, h_3 \right\}= \left\{h_1, h_2 h_3 h_2 ^{-1}\right\} \bar{\alpha} (h_1) \bar{\triangleright} \left\{h_2,h_3\right\}$.
		
		\setlength{\itemsep}{4pt}
		\item $ \left\{h_1,h_2 h_3\right\}  =  \left\{ h_1 , h_2  \right\} \left\{ h_1 , h_3  \right\} \left\{ \lbrack\lbrack h_1, h_3 \rbrack\rbrack  ^{-1}, \bar{\alpha} (h_1) \bar{\triangleright} h_2 \right\}$.
		
		\setlength{\itemsep}{4pt}
		\item $\left\{ \bar{\beta} (l) , h\right\} \left\{h, \bar{\beta} (l)\right\} = l (\bar{\alpha} (h)\bar{\triangleright} l^{-1})$.
	\end{enumerate}
	From these axioms one can define a left action $\bar{\triangleright}' $ of $H$ on $L$ by
	\begin{equation*}
		h \bar{\triangleright}' l = l \left\{ \bar{\beta} (l)^{-1}, h \right\}, \qquad \forall h\in H, l \in L,
	\end{equation*}
	which, together with the homomorphism 
	$ \bar{\beta} : L \longrightarrow H $, endows
	$(L, H; \bar{\beta}, \bar{\triangleright}')$ with the structure of a crossed module. In particular, one has $h \bar{\triangleright}' 1_L = \left\{ 1_H, h\right\} =\left\{h, 1_H\right\}=1_L$  for each $h\in H$.

	\textbf{Differential 2-crossed modules.}
	A differential 2-crossed module is a tuple $(\mathfrak{l},\mathfrak{h}, \mathfrak{g}; \beta, \alpha,\triangleright, \left\{ \cdot,\cdot \right\})$ consisting of a complex of Lie algebras
	\[\mathfrak{l}\stackrel{\beta}{\longrightarrow} \mathfrak{h} \stackrel{\alpha}{\longrightarrow} \mathfrak{g},\]
	together with a left action $\triangleright $ of $\mathfrak{g}$ on $\mathfrak{l}, \mathfrak{h}$ and itself (by the adjoint representation) by derivations, and a $\mathfrak{g}$-equivariant bilinear map $\left\{\cdot,\cdot \right\}:\mathfrak{h} \times \mathfrak{h} \longrightarrow \mathfrak{l}$, the (infinitesimal) Peiffer lifting, satisfying
	\begin{align}\label{12}
		X\triangleright \left\{ Y_1,Y_2\right\} = \left\{X\triangleright Y_1,Y_2\right\} + \left\{Y_1, X\triangleright Y_2\right\}
	\end{align}
	for all $X \in \mathfrak{g}, Y_1,Y_2 \in \mathfrak{h}$.
	
	The following axioms hold for all $X \in \mathfrak{g}, Y_i\in \mathfrak{h}$ and $Z_i \in \mathfrak{l}$:
	\begin{enumerate}
		\setlength{\itemsep}{4pt}
		\item 	$\mathfrak{l}\stackrel{\beta}{\longrightarrow} \mathfrak{h} \stackrel{\alpha}{\longrightarrow} \mathfrak{g}$ is a complex of $\mathfrak{g}$-modules and satisfies  $\alpha \circ \beta =0 $.
		
		\setlength{\itemsep}{4pt}
		\item $\beta \left\{Y_1,Y_2\right\} =\lbrack\lbrack Y_1 , Y_2\rbrack\rbrack $, where $\lbrack\lbrack Y_1 , Y_2\rbrack\rbrack=\left[ Y_1, Y_2\right] - \alpha (Y_1) \triangleright Y_2$.
		
		\setlength{\itemsep}{4pt}
		\item $ \left[Z_1, Z_2\right]= \left\{ \beta (Z_1), \beta (Z_2)\right\}$.
		
		\setlength{\itemsep}{4pt}
		\item $\left\{ \left[Y_1 ,Y_2\right], Y_3 \right\}= \alpha (Y_1) \triangleright \left\{Y_2,Y_3\right\}+\left\{Y_1,\left[Y_2,Y_3\right]\right\}-\alpha (Y_2)\triangleright \left\{Y_1,Y_3\right\}-\left\{Y_2,\left[Y_1,Y_3\right]\right\}$. Equivalently,
		\begin{align*}
			\left\{\left[Y_1,Y_2\right],Y_3\right\}=& \left\{\alpha (Y_1)\triangleright Y_2, Y_3\right\} - \left\{ \alpha (Y_2)\triangleright Y_1, Y_3\right\}
			- \left\{Y_1, \beta \left\{ Y_2, Y_3\right\}\right\} + \left\{Y_2, \beta\left\{Y_1,Y_3 \right\}\right\}.
		\end{align*}

		\setlength{\itemsep}{4pt}
		\item $ \left\{Y_1,\left[Y_2,Y_3\right]\right\}= \left\{ \beta\left\{Y_1,Y_2 \right\},Y_3  \right\}-\left\{ \beta\left\{Y_1,Y_3 \right\},Y_2 \right\}$.
		
		\setlength{\itemsep}{4pt}
		\item $\left\{ \beta (Z) , Y \right\} +\left\{Y, \beta (Z)\right\} =- \alpha (Y)\triangleright Z$.
	\end{enumerate}
	Analogously to the group case, one defines a left action of $\mathfrak{h}$ on $\mathfrak{l}$ by
	\begin{equation}\label{YZZ}
		Y\triangleright'Z= -\left\{\beta(Z), Y\right\}, \qquad \forall Y\in \mathfrak{h}, Z \in \mathfrak{l}, 
	\end{equation}
	so that $(\mathfrak{l}, \mathfrak{h}; \beta, \triangleright')$
	becomes a differential crossed module. In addition, if the original action satisfies
	\begin{align}\label{A.17}
		\alpha (Y)\triangleright Z= Y\triangleright'Z,
	\end{align}
	the differential 2-crossed module is called \emph{fine}.
	
	In the special case where $\mathfrak{h}$ is Abelian and $\alpha$ is trivial, the above axioms reduce to the following conditions:
	\begin{subequations}\label{B.7}
		\begin{align}
			\beta \left\{Y_1,Y_2\right\} &=0,\\
			\left[Z_1, Z_2\right]&= \left\{ \beta (Z_1), \beta (Z_2)\right\},\\
			\left\{ \beta (Z) , Y \right\} &=-\left\{Y, \beta (Z)\right\}.
		\end{align}
	\end{subequations}
	The present work deals exclusively with this simplified setting. For a comprehensive treatment of 3-groups we refer the reader to Ref.~\cite{doi:10.1063/1.4870640}.

	\textbf{Mixed relations.}
	Let $(L, H, G;\bar{\beta}, \bar{\alpha}, \bar{\triangleright}, \left\{\cdot,\cdot\right\})$ be a Lie 2-crossed module and $(\mathfrak{l},\mathfrak{h},\mathfrak{g};\beta,
	\alpha,\triangleright,\left\{\cdot,\cdot\right\})$ the associated differential 2-crossed module. 
	Besides the mixed relations already listed for the underlying crossed module $(H, G; \bar{\alpha}, \bar{\triangleright})$  (see Eqs.~\eqref{A.6}--\eqref{hy}), the action of $G$ on $\mathfrak{l}$ satisfies the following identities for all $g, g_1, g_2 \in G$, $X \in \mathfrak{g}$, $Y_i\in \mathfrak{h}$ and $Z \in \mathfrak{l}$:
	\begin{align}
		\beta(g \triangleright Z)&=g \triangleright \beta(Z),\label{gZ}\\
		(g_1 g_2)\triangleright Z &= g_1 \triangleright (g_2 \triangleright Z),\\
		g \triangleright (X \triangleright Z)&= (Ad_g X)\triangleright (g \triangleright Z) ,\label{A.24}\\
		g \triangleright\{Y_1, Y_2\}&=\{g \triangleright Y_1, g \triangleright Y_2\}. \label{gyy}
	\end{align}

	\textbf{$G$-invariant pairing}. 
	For a 2-crossed module as above, a $G$-invariant pairing  consists of two bilinear forms:
	\begin{enumerate}
		\item An antisymmetric, non-degenerate bilinear form $\langle - , - \rangle_\mathfrak{h}:  \mathfrak{h} \times \mathfrak{h} \longrightarrow \mathbb{R}$ satisfying
		\begin{align}
			\langle [Y,Y_1], Y_2 \rangle_\mathfrak{h}&=-\langle Y_1,[Y,Y_2]\rangle_\mathfrak{h},\\
			\langle X\triangleright Y, Y_1\rangle_\mathfrak{h}&=-\langle Y, X\triangleright Y_1 \rangle_\mathfrak{h}.\label{YX}
		\end{align}
		\item A non-singular bilinear form $\langle - , - \rangle_{\mathfrak{g},\mathfrak{l}}: \mathfrak{g} \times \mathfrak{l} \longrightarrow \mathbb{R}$ satisfying
		\begin{align}
			\langle[X_1, X_2], Z\rangle_{\mathfrak{g},\mathfrak{l}} &= - \langle X_2, X_1 \triangleright  Z\rangle_{\mathfrak{g},\mathfrak{l}} \label{XZ},\\
			\langle \alpha(Y), Z \rangle_{\mathfrak{g},\mathfrak{l}} &= - \langle \beta(Z), Y\rangle_{\mathfrak{h}},\label{YZ}\\
			\langle X, \{ Y_1, Y_2\} \rangle_{\mathfrak{g}, \mathfrak{l}}&=\frac{1}{2} \langle Y_2, X \triangleright Y_1 \rangle_{\mathfrak{h}},\label{XYY}
		\end{align}
		for $X, X_1, X_2 \in \mathfrak{g}$, $Y, Y_1, Y_2 \in \mathfrak{h}$ and $Z \in \mathfrak{l}$.
	\end{enumerate}
	
	The non-singularity of $\langle -, - \rangle_{\mathfrak{g}, \mathfrak{l}}$ implies that the 2-crossed module is \emph{balanced}, i.e., $\dim \mathfrak{l}= \dim \mathfrak{g}$. Both forms are required to be $G$-invariant,
	\begin{align*}
		\langle g\triangleright Y,  g\triangleright Y' \rangle_\mathfrak{h}&= \langle Y,Y'\rangle_\mathfrak{h},\\
		\langle g X g^{-1}, g \triangleright Z\rangle_{\mathfrak{g},\mathfrak{l}} &= \langle X, Z\rangle_{\mathfrak{g},\mathfrak{l}},
	\end{align*}
	for any $g \in G, X \in \mathfrak{g}, Y, Y'\in \mathfrak{h}$ and $ Z \in \mathfrak{l}$.
	
	\textbf{$G$-invariant bilinear forms.}
	A symmetric non-degenerate and $G$-invariant structure on a differential 2-crossed module can also be specified by a triple of symmetric bilinear forms: $\langle-, -\rangle_{\mathfrak{g}}$ on $\mathfrak{g}$, $\langle-, -\rangle_{\mathfrak{h}}$ on $\mathfrak{h}$, and $\langle-, -\rangle_{\mathfrak{l}}$ on $\mathfrak{l}$. Each form is required to be non-degenerate and invariant under the respective $G$-action:
	\begin{enumerate}
		\item $\langle-, -\rangle_{\mathfrak{g}}$ is $G$-invariant, 
		\begin{equation*}
			\langle \mathrm{Ad}_g X, \mathrm{Ad}_g X'\rangle_{\mathfrak{g}}=\langle X, X'\rangle_{\mathfrak{g}}, \ \ \ \forall g\in G, \ X, X' \in \mathfrak{g}.
		\end{equation*}
		\item $\langle-, -\rangle_{\mathfrak{h}}$ is $G$-invariant, 
		\begin{equation*}
			\langle g\triangleright Y, g\triangleright Y'\rangle_{\mathfrak{h}}=\langle Y, Y'\rangle_{\mathfrak{h}}, \ \ \ \forall g\in G, \ Y, Y' \in \mathfrak{h}.
		\end{equation*}
		\item $\langle-, -\rangle_{\mathfrak{l}}$ is $G$-invariant, 
		\begin{equation*}
			\langle g\triangleright Z, g\triangleright Z'\rangle_{\mathfrak{l}}=\langle Z, Z'\rangle_{\mathfrak{l}}, \ \ \ \forall g\in G, \ Z, Z' \in \mathfrak{l}.
		\end{equation*}
	\end{enumerate}
	Further details on such invariant structures can be found in Refs.~\cite{TRMV,TRMV1}.
	
	\section{Differential forms valued in (higher) algebras}\label{AVDDF}
	We then introduce ordinary differential forms that take values in the  strict Lie 2- and Lie 3-algebras (i.e., in differential crossed modules and 2-crossed modules), following the framework introduced in our earlier work \cite{SDH-BFYM}.
	
	Throughout this paper we adopt the Einstein summation convention: repeated indices are summed over their entire range.   Let $\mathfrak{g}$ be a Lie algebra of dimension $m$ with basis $\{X_a\}_{a=1}^m$. 
	A $\mathfrak{g}$-valued $p$-form $A$ on a manifold $M$ is an expression of the form  
	\begin{align*}
		A=A^a\otimes X_a
		=\dfrac{1}{p!}A^a_{\mu_1\cdots \mu_p}dx^{\mu_1}\wedge \cdots \wedge dx^{\mu_p}\otimes X_a,
	\end{align*}
	where  each component $A^a$ is an ordinary $p$-form and the coefficients $A^a_{\mu_1, \dots, \mu_p}$ are smooth functions. 
	
	Standard operations on differential forms extend to the Lie algebra-valued case in a natural way. The exterior derivative
	\[d: \Lambda^p(M, \mathfrak{g})\longrightarrow\Lambda^{p+1}(M, \mathfrak{g})\]
	is calculated by 
	\begin{equation*}
		dA=dA^a\otimes X_a.
	\end{equation*}
	For two forms $A = A^a \otimes X_a \in \Lambda^p(M, \mathfrak{g})$ and $\bar{A} = \bar{A}^b \otimes X_b \in \Lambda^q(M, \mathfrak{g})$,
	the Lie bracket combines the exterior product of forms with the Lie bracket of $\mathfrak{g}$,
	\begin{align*}
		[A, \bar{A}]=A^a\wedge \bar{A}^b\otimes [X_a, X_b].
	\end{align*}
	This definition implies the graded symmetry
	\begin{align*}
		[A, \bar{A}]=(-1)^{pq+1}[\bar{A}, A],
	\end{align*}
	and the exterior derivative satisfies the graded Leibniz rule
	\begin{equation*}
		d[A, \bar{A}]=[dA, \bar{A}]+(-1)^p[A, d\bar{A}].
	\end{equation*}
	The same conventions apply to differential forms valued in other Lie algebras, such as $\mathfrak{h}$ or $\mathfrak{l}$.

	In particular, for the higher algebras that appear in this work, the structure maps are extended as follows.
	If $(\mathfrak{h}, \mathfrak{g}; \alpha, \triangleright)$  is a differential crossed module and $B = B^b \otimes Y_b \in \Lambda^t(M, \mathfrak{h})$, we set
	\begin{equation*}
		\alpha(B)=B^b\otimes \alpha (Y_b) \in \Lambda^t(M, \mathfrak{g}).
	\end{equation*}
	For a differential 2-crossed module
	$(\mathfrak{l}, \mathfrak{h}, \mathfrak{g}; \beta, \alpha, \triangleright, \{\cdot,\cdot \})$ and a form $C = C^c \otimes Z_c \in \Lambda^v(M, \mathfrak{l})$, 
	\begin{equation*}
		\beta(C)=C^c\otimes \beta (Z_c) \in \Lambda^v(M, \mathfrak{h}).
	\end{equation*}
	
	Similarly, the action $\triangleright$ of $\mathfrak{g}$ and the Peiffer lifting are defined by
	\begin{align*}
		A\triangleright B&=A^a\wedge B^b\otimes X_a \triangleright Y_b,\\ A\triangleright C&=A^a\wedge C^c\otimes X_a \triangleright Z_c,\\
		\{B_1, B_2\}&=B^{b_1}\wedge B^{b_2}\otimes \{Y_{b_1}, Y_{b_2}\},
	\end{align*}
	where $A = A^a \otimes X_a \in \Lambda^p(M, \mathfrak{g})$ and $B_i= B^{b_i} \otimes Y_{b_i }\in \Lambda^{t_i}(M, \mathfrak{h})$, $i=1, 2$, and $C = C^c \otimes Z_c \in \Lambda^v(M, \mathfrak{l})$.
	
	Finally, the $G$-invariant pairings introduced in the preceding subsections are extended to differential forms in the same component-wise fashion. For the pairing  $\langle - , - \rangle_{\mathfrak{g},\mathfrak{h}}$ and forms $A\in \Lambda^p(M, \mathfrak{g})$, $B\in \Lambda^q(M, \mathfrak{h})$,
	\begin{align*}
		\langle A, B \rangle_{\mathfrak{g}, \mathfrak{h}}=A^a\wedge B^b\langle X_a, Y_b\rangle_{\mathfrak{g}, \mathfrak{h}}.
	\end{align*}
	For $\langle - , - \rangle_{\mathfrak{g},\mathfrak{l}}$ and forms $A\in \Lambda^p(M, \mathfrak{g})$, $C\in \Lambda^q(M, \mathfrak{l})$,
	\begin{align*}
		\langle A, C \rangle_{\mathfrak{g}, \mathfrak{l}}=A^a\wedge C^c\langle X_a, Z_c\rangle_{\mathfrak{g}, \mathfrak{l}}.
	\end{align*}
	For a symmetric invariant form $\langle - , -\rangle_{\mathfrak{w}}$ on a Lie algebra $\mathfrak{w}$  (where $\mathfrak{w}= \mathfrak{g},  \mathfrak{h}$ or $\mathfrak{l}$)
	with basis $\{T_a\}_{a=1}^m$,
	\begin{align*}
		\langle W_1, W_2 \rangle_{\mathfrak{w}}=W_1^a\wedge W_2^b\langle T_a, T_b\rangle_{\mathfrak{w}}, 
	\end{align*}
	where $W_1, W_2\in \Lambda(M, \mathfrak{w})$.
	Further properties of these extended operations, as required in the main text, are derived straightforwardly from the definitions above.
	%

	%\begin{proposition}
		%	Let $r=	d\mathcal{G}\mathcal{G}^{-1}$, then 
		%	\begin{align}
			%		dr-\dfrac{1}{2}[r, r]=0
			%	\end{align}
		%\end{proposition}
	%\begin{proof}
		%	On the one hand, we have
		%	\begin{align*}
			%		r=\Big(dgg^{-1}-k\alpha(\phi), d\phi +k \phi \phi- (dgg^{-1})\wedge^{\triangleright} \phi\Big)
			%	\end{align*}
		%	and 
		%	\begin{align*}
			%		dr=\Big([-dgdg^{-1}+k^2\alpha(\phi\phi)-k(dgg^{-1}), \alpha(\phi)], k[d\phi, \phi] + (dgdg^{-1})\triangleright\phi + (dgg^{-1})\triangleright d\phi\Big)
			%	\end{align*}
		%	and 
		%	\begin{align*}
			%		[r, r]=&\big([dgg^{-1}-k\alpha(\phi), dgg^{-1}-k\alpha(\phi)], 2 (dgg^{-1}-k\alpha(\phi)) \triangleright(d \phi +k\phi\phi - (dgg^{-1})\triangleright\phi)\nonumber\\
			%		=&\Big(2(dgg^{-1})(dgg^{-1}) - 2k[(dgg^{-1}), \alpha(\phi)]+2k^2\alpha(\phi\phi), 2( (dgg^{-1})\triangleright d\phi \nonumber\\
			%		&+ k [d\phi, \phi] +k(dgg^{-1})\triangleright(\phi\phi)-(dgg^{-1}dgg^{-1})\triangleright \phi +k \alpha(\phi)\triangleright((dgg^{-1})  \triangleright\phi))\Big) 
			%	\end{align*}
		%	Due to 
		%	\begin{align*}
			%		&(dgg^{-1})(dgg^{-1})=-dgdg^{-1}\\
			%		&(dgg^{-1})\triangleright(\phi\phi)=\dfrac{1}{2}(dgg^{-1})\triangleright[\phi, \phi]=\dfrac{1}{2}\Big([(dgg^{-1})\triangleright\phi,\phi ]- [\phi ,(dgg^{-1})\triangleright\phi] \Big) \nonumber\\
			%		&=- [\phi ,(dgg^{-1})\triangleright\phi]= -\alpha(\phi)\triangleright\Big((dgg^{-1})  \triangleright\phi\Big),
			%	\end{align*}
		%	by using \eqref{YyY'} and \eqref{XY_1Y_2}. 
		%	
		%	Then we have $dr-\dfrac{1}{2}[r, r]=0$.
		%\end{proof}
	%
	%Then there is an other gauge transformations:

\end{document}